\theoremstyle{plain}
\newtheorem{thm}{Theorem}[section]
\newtheorem{lem}[thm]{Lemma}
\newtheorem{cor}[thm]{Corollary}
\newtheorem{prop}[thm]{Proposition}
\theoremstyle{definition}
\newtheorem{defn}[thm]{Definition}
\newcommand{\R}{{\mathbb R}}
\newcommand{\tr}{\mathrm{tr}}
\newcommand{\di}{\mathrm{div}}
\newcounter{mnotecount}[section]
\let\oldmarginpar\marginpar
\renewcommand\marginpar[1]{\-\oldmarginpar[\raggedleft\footnotesize #1]%
{\raggedright\footnotesize #1}}
\title[Existence and Blowup]{Existence and Blowup Results for Asymptotically Euclidean Initial Data Sets Generated by the Conformal Method}
\author[J. Dilts and J. Isenberg]{James Dilts 
\\ University of California (San Diego)\\
\\
James Isenberg
\\ University of Oregon}
\date{\today}
\begin{document}
\maketitle
\begin{abstract} For each set of (freely chosen) seed data, the conformal method reduces the Einstein constraint equations to a system of elliptic equations, the conformal constraint equations. We prove an admissibility criterion, based on a (conformal) prescribed scalar curvature problem, which provides a necessary condition on the seed data for the conformal constraint equations to (possibly) admit a solution.  We then consider sets of asymptotically Euclidean (AE) seed data for which solutions of the conformal constraint equations exist, and examine the blowup properties of these solutions as the seed data sets approach sets for which no solutions exist. We also prove that there are AE seed data sets which include a Yamabe nonpositive metric and lead to solutions of the conformal constraints. These data sets  allow the mean curvature function to have zeroes.

\end{abstract}

\section{Introduction}
\label{Intro}

To construct a spacetime solution of the Einstein gravitational field equations, the first step is to find an initial data set which satisfies the Einstein constraint equations. That is, for a fixed manifold $\Sigma^n$, one seeks a Riemannian metric $\gamma$, a symmetric tensor field $K$, plus non-gravitational fields $\psi$ such that the Einstein constraint equations
\begin{align}
\label{HamConstr}
R_{\gamma} + (\tr_{\gamma} K)^2 - |K|_{\gamma}^2 &= 16 \pi \rho(\psi, \gamma) \\
\label{MomConstr}
\nabla^i K_{ij} - \nabla_j(\tr_{\gamma} K) &=8 \pi J(\psi, \gamma)
\end{align}
are satisfied everywhere on $\Sigma^n$. Here $R_{\gamma}$ is the scalar curvature of the metric, $\rho(\psi, \gamma)$ is the energy density of the non-gravitational fields, and $J(\psi, \gamma)$ is the momentum density for these fields. As an example, if the non-gravitational fields of interest are the electromagnetic vector fields $B$ and $E$ of the Einstein-Maxwell theory in 3+1 dimensions, then we have $\rho= \frac{1}{2}(|E|_{\gamma}^2 +|B|_{\gamma}^2)$ and $J =E\times_{\gamma} B$. For a perfect fluid, we may treat $\rho$ and $J$ themselves  as the
non-gravitational field initial data\footnote{A justification for this way of specifying initial data for perfect fluids is presented in  Section 4.1 of \cite{IMaxP05}.} (so long as the algebraic constraint $\rho^2\ge |J|^2_{\gamma}$ is satisfied).

The conformal method (along with the closely related conformal thin sandwich method) has proven to be a very effective procedure for producing as well as studying initial data sets $(\Sigma^n; \gamma, K, \psi)$ which satisfy the constraints \eqref{HamConstr}-\eqref{MomConstr}. It does this by splitting the initial data into two sets of fields: the freely chosen \emph{seed data}, and the \emph{determined data}. The idea is that, for a specified set of seed data, the constraint equations become a determined system to be solved  for the determined data.\footnote{As a PDE system to be solved for $(\Sigma^n; \gamma, K, \psi)$, the constraints form an underdetermined system.}

More specifically (working here with the Einstein-perfect fluid theory), the seed data set consists of a Riemannian metric $\lambda$,  a positive lapse function $N$, a symmetric tensor field $\sigma$ which is divergence-free and trace-free with respect to $\lambda$, a scalar function $\tau$, a nonnegative scalar function $r$, and a vector field $j$ such that $r^2\ge |j|^2_{\lambda}$. The determined data consists of a positive scalar field $\varphi$ and a vector field $W$. For a chosen set of seed data\footnote{While the divergence-free and trace-free conditions on $\sigma$ appear to result in  these fields not being freely specifiable, in fact by solving certain linear algebraic and linear PDE systems one readily projects out the divergence-free and trace-free pieces of any arbitrarily-chosen symmetric tensor field.}  $(\Sigma^n; \lambda, N, \sigma, \tau, r, j)$, one obtains $\varphi$ and $W$ by solving the \emph{conformal constraint equations}, which take the form 
\begin{equation}
\label{origLich}
\alpha_n \Delta_{\lambda} \varphi = R_{\lambda} \varphi +\kappa_n \tau^2 \varphi^{q_n -1} - \left|\sigma +\frac{1}{2N}L_{\lambda}W\right|_{\lambda}^2 \varphi^{-q_n -1} - r \varphi ^ {-\frac{q_n }{2}},
\end{equation}
\begin{equation}
\label{origVect}
\di_{\lambda}\Big( \frac1{2N} L_{\lambda}W\Big) = \kappa_n  \varphi^{q_n } d\tau + j.
\end{equation}
Here $\Delta_{\lambda}$ is the Laplacian (with negative eigenvalues)  with respect to the metric $\lambda$, $R_{\lambda}$ is its scalar curvature, $L_{\lambda}$ is the conformal Killing operator, defined by
\begin{equation}
L_{\lambda}W_{ij} := \nabla_i W_j + \nabla_j W_i - \frac{2}{n} \lambda_{ij} \nabla^k W_k
\end{equation}
for the $\lambda$-compatible covariant derivative $\nabla$, $\di_{\lambda}$ is the corresponding divergence operator, and we use the dimensional constants $\alpha_n := \frac{4(n-1)}{n-2}$, $\kappa_n  := \frac{n-1}{n}$, and $q_n := \frac{2n}{n-2}$. If indeed one does obtain a solution $(\varphi, W)$ to \eqref{origLich}-\eqref{origVect} for the given set of seed data, then the initial data set
\begin{align}
\gamma &=\varphi^{q_n -2} \lambda,\\
\label{KRecon}
K_{ij}&=\varphi^{-2} \left(\sigma_{ij} + \frac{1}{2N}L_{\lambda}W_{ij}\right)+ \frac{\tau}{n} \varphi^{q_n -2} \lambda_{ij}\\
\rho&=\varphi^{-\frac{3}{2} q_n +1} r/16\pi \\
J&= \varphi ^{-q_n } j/8\pi
\end{align}
is a solution of the constraint equations \eqref{HamConstr}-\eqref{MomConstr} on $\Sigma^n$. Note that if the seed data satisfies the inequality $r^2\ge |j|^2_{\lambda}$, it follows that the initial data satisfies the inequality $\rho^2\ge |J|^2_{\gamma}$. Here, we are using the Hamiltonian conformal thin sandwich (CTS-H) approach, as described in \cite{Maxwell14}. This adds the extra function $N$ to the seed data. The traditional conformal method simply
sets $N=1/2$. While the proofs in this paper can be carried out using the traditional method\footnote{In view of the equivalence of the two methods, as proven in \cite{Maxwell14}, this is no surprise.}, many of the calculations are simpler if one uses the CTS-H method since the CTS-H conformal constraint equations \eqref{origLich}-\eqref{origVect} are conformally invariant  \footnote{Explicitly, conformal covariance  takes the following form: $(\varphi, W)$ solve
the conformal constraint equations \eqref{origLich}-\eqref{origVect} with seed data $(\Sigma^n; \lambda, N, \sigma, \tau, r, j)$ if and only if $(\psi^{-1}\varphi, W)$ solve
the equations with seed data $(\Sigma^n; \psi^{-q_n-2}\lambda, \psi^{-q_n}N, \psi^{-2}\sigma, \tau, \psi^{-\frac32 q_n+1}r, \psi^{-q_n}j)$.}, which is not the case for the traditional conformal constraint equations (obtained from \eqref{origLich}-\eqref{origVect} by setting $N=\frac{1}{2}$.)
In particular, as long as the conformal factor relating a metric to $\lambda$ is bounded above and below, we can choose to work with that conformally related metric instead. 

The utility of the conformal method depends upon the extent to which one can determine and classify those sets of seed data for which the conformal constraint equations admit a unique solution, and those sets for which this is not the case. For constant mean curvature (CMC) seed data on closed manifolds (so long as the nongravitational fields are either Maxwell, perfect fluids, massless scalar fields, or vacuum, with nonnegative cosmological constant), this can be done completely (see \cite{Isenberg95}), primarily because the CMC condition $d \tau=0$ decouples the conformal constraint equations \eqref{origLich}-\eqref{origVect}, and because the Yamabe classification of Riemannian metrics on closed manifolds is well understood (see \cite{LP87}). For seed data sets on closed manifolds which are nearly CMC in an appropriate sense, the determination of which sets lead to (unique) solutions and which do not is essentially complete as well (see \cite{IM96,CBIY00,ACI08, HMM16}). 

In sharp contrast, for seed data without any restriction on the mean curvature, much less is known. Moreover, it has been shown that for some seed data sets on closed manifolds there are multiple solutions, and for some families of seed data sets one can pass from sets with no solutions to sets with unique solutions and on to sets with many solutions (see \cite{Maxwell11, Nguyen15}). The general picture for seed data which is neither CMC nor near-CMC is very unclear.

What about seed data which is asymptotically Euclidean (AE)? While the analytical simplifications of the conformal constraint equations which result from working with either CMC or near-CMC data sets occur regardless of whether the data is specified on a closed manifold or is asymptotically Euclidean (or is asymptotically hyperbolic), the Yamabe classification of AE metrics is  more complicated (and less intuitive) than the Yamabe classification of metrics on closed manifolds. For example, it has been shown (see \cite{DM15}) that an asymptotically Euclidean metric can be conformally deformed to an AE metric with zero scalar curvature if and only if the metric is Yamabe positive (as defined via the Yamabe invariant \eqref{YamInv} below). As well, it has been shown that an AE metric is Yamabe null if and only if it can be conformally deformed to a metric with scalar curvature $\mathcal{R}$ for \emph{every}  function $\mathcal{R} \le 0$ \emph{except} $\mathcal{R} \equiv 0$. As a consequence of these features (and others) of the AE Yamabe classes, the analysis of the existence and nonexistence of solutions of the conformal constraint equations is generally more complicated for asymptotically Euclidean seed data than it is for seed data on closed manifolds. 

The very recent advances in our understanding of the Yamabe classes of metrics for asymptotically Euclidean metrics  (see \cite{DM15}), besides indicating some of the difficulties of the analysis of the conformal constraint equations for AE data, also provide information which is useful in handling these difficulties. In this paper, after a brief review (in Section \ref{AE}) of asymptotically Euclidean geometries and their analytic features (properties  of Fredholm operators on AE geometries, the various AE maximum principles and sub and supersolution theorems, and the AE Yamabe classes), we discuss a number of new results concerning solutions of the conformal constraint equations for various classes of AE seed data.  

The key to many of the results we present here is the Curvature Criterion Theorem which we discuss and prove in Section \ref{CurvCrit}. This result (Thm \ref{LichIff}, below) states (roughly) that the (stand alone) \emph{Lichnerowicz equation}---which we obtain by replacing the coefficient of the $\varphi^{-q_n -1}$ term by an arbitrary (specified) nonnegative function $f^2$---admits a positive solution $\varphi$ with appropriate  falloff properties if and only if the metric $\lambda$ admits a conformal transformation $\lambda \rightarrow \psi^{q_n -2} \lambda$ such that the scalar curvature corresponding to $\psi^{q_n -2} \lambda$ is equal to $-\kappa_n \tau^2$. From this result, we readily infer an Admissibility Corollary (Corollary \ref{Admiss} in Section \ref{CurvCrit}) for the conformal constraint equations; that is, we infer conditions on the seed metric $\lambda$ and the mean curvature function $\tau$ which must hold if the conformal constraint equations \eqref{origLich}-\eqref{origVect} are to admit a solution for a given set of seed data $(\Sigma^n; \lambda, \sigma, \tau, r, j)$. We note that both the Curvature Criterion Theorem and the Admissibility Corollary, which we present and prove in this paper for asymptotically Euclidean data sets, have been inspired by earlier work of Maxwell \cite{Maxwell05}, in which he proves analogous results for initial data (with Yamabe negative metrics) on closed manifolds. 

Many of the new insights obtained in \cite{DM15} concerning the Yamabe classification of asymptotically Euclidean metrics involve the prescribed scalar curvature problem for conformal deformation of AE metrics. Since this problem plays a major role in the Admissibility Corollary, we are immediately led to simple results regarding the existence and nonexistence of solutions to  \eqref{origLich}-\eqref{origVect} for various classes of seed data. These are stated in Sections \ref{CurvCrit} and \ref{NexBlow}. Included in these sections are comments regarding the existence of seed data sets of each of these classes. 

Knowing that there are seed data sets $\hat{\mathcal{S}}:= (\Sigma^n; \hat \lambda, \hat \sigma, \hat \tau, \hat r, \hat j)$ for which no solutions to the conformal constraint equations exist, it is useful to consider sequences of seed data sets $\mathcal{S}_\ell$ which approach  $\hat{\mathcal{S}}$. In Section \ref{NexBlow}, we prove a pair of results showing that if a solution to Eqns.\ \eqref{origLich}-\eqref{origVect} exists for each element of the seed data sequence $\mathcal{S}_\ell$, then these solutions must blowup---in the sense that $\sup \varphi_\ell \rightarrow \infty$---as $\mathcal{S}_\ell$ approaches $\hat{\mathcal{S}}$. 

As argued in Section \ref{NexBlow}, for asymptotically Euclidean seed data sets which are CMC (and consequently maximal, with $\tau=0$), if the Yamabe class of the metric $\lambda$ is null or negative, then the conformal constraint equations admit no solutions. To date, all existence theorems for 
solutions of Eqns.\ \eqref{origLich}-\eqref{origVect} for AE seed data (see \cite{Maxwell05b, CBIY00, DIMM14}) have also required that the Yamabe class of the seed metric be positive.  In Section \ref{zeros}, we prove an existence result which allows for AE seed data with nonpositive Yamabe class. As well, this result allows for $\tau^2$ to have zeroes. This is significant because the known existence results for  nonpositive Yamabe class seed data on compact manifolds all require $\tau^2>0$.

\section{Asymptotically Euclidean Initial Data Sets}
 \label{AE}

In  working with asymptotically Euclidean geometries and initial data sets, we use the definitions and conventions of \cite{Bartnik86}. Specifically, we first define a $C^{\infty}$, $n$-dimensional Riemannian manifold $(\Sigma^n; e)$ to be \emph{Euclidean at infinity} if there exists a compact subset $V \subset \Sigma^n$ such that $\Sigma^n \setminus V$ is the disjoint union of a finite number of open sets $U_k$, and each $(U_k; e|_{U_k})$ (called an \emph{end}) is isometric to the exterior of a ball in Euclidean space. Associated to each end is a natural, smooth radial function. We smoothly interpolate these to define a function $\rho\geq1$, which is precisely the coordinate radial function on each end. We then define the weighted Sobolev spaces $W^{k,p}_\delta (\Sigma^n,e)$ (with $1\le p \le \infty, s\in \mathbb{N}^+$, and $\delta \in \mathbb{R}$) of tensor fields on $(\Sigma^n; e)$  to be the closure of the set of $C^{\infty}_0$ tensors with the respect to the norm
\begin{equation}
\label{spdeltanorm}
\lVert T \rVert_{W^{k,p}_\delta} :=\Bigg\{\sum_{0\le m \le k} \int_{\Sigma^n} |\nabla^mT|^p \rho^{-n + p(m-\delta)} \nu \Bigg\} ^{\frac{1}{p}},
\end{equation}
where $\nabla^m$ denotes the $m$'th-order covariant derivatives compatible with the metric $e$ (arbitrarily smoothly extended into the interior region $K$ as a Riemannian metric), $|\ |$ is the corresponding tensor norm and $\nu$ is the corresponding volume element. Note that by our choice of norm, $\delta$ directly encodes the falloff of $T$ on the ends. Based on these two definitions, we define a Riemannian manifold $(\Sigma^n; \gamma)$ to be \emph{$W^{k,p}_\delta$-asymptotically Euclidean} if there exists a Riemannian metric $e$ on $\Sigma^n$ such that $(\Sigma^n; e)$ is Euclidean at infinity, and such that the tensor $\gamma-e$ is contained in $W^{k,p}_\delta (\Sigma^n,e)$.   For simplicity, we shall often call a Riemannian manifold asymptotically Euclidean\footnote{In working with the conformal constraint equations in CTS-H form for a specified set of seed data $(\Sigma^n, \lambda, N, \sigma, \tau, r, j)$ the lapse $N$ contains information regarding the geometry, along with $\lambda.$ Hence, the asymptotically Euclidean condition for seed data must include the  requirement that $N-1 \in W^{k,p}_\delta (\Sigma^n,e)$. More generally, the regularity and asymptotic properties of $N$ should match that of $\lambda$.}  without referring to the specific values of $k, p,$ and $\delta$. We note, however, that unless otherwise specified, we assume that $k > \frac{n}{p}$ so that the metric $\gamma$ is continuous and we assume that $\delta < 0$ so that $\gamma$ approaches $e$ asymptotically on each end. We also note that, if $(\Sigma^n; \gamma)$ is $W^{k,p}_\delta$-AE, we may replace the norm \eqref{spdeltanorm} by an equivalent norm defined using the metric $\gamma$ rather than $e$. For later use, we also define AE H\"older norms by
\begin{equation}
\|T\|_{C^{k,\alpha}_{\delta}} := \|T\rho^{-\delta}\|_{C^{k,\alpha}}.
\end{equation}

An initial data set $(\Sigma^n; \gamma, K)$ for the Einstein vacuum equations is defined to be asymptotically Euclidean if the Riemannian manifold $(\Sigma^n; \gamma)$ is $W^{k,p}_\delta$-AE, and if in addition the tensor field $K$ is an element of $W^{k-1,p}_{\delta-1}$, where $k \geq 1$ and $\delta \in (2-n,0)$. If there are non-gravitational fields present, these are generally expected to be elements of $W^{k'-1,p}_{\delta '}$, for some $k'$ and $\delta '$ related to $k$ and $\delta$, although discontinuities on co-dimension one submanifolds are often allowed for fluid fields.  We note that a seed data set is defined to be AE if it satisfies conditions analogous to these. 

Our analysis of the conformal constraint equations and their solutions for AE seed initial data sets are carried out using $W^{k,p}_\delta$ tensor fields; it is therefore useful to recall some of the properties of these spaces. We state these properties in the form of the following two lemmas, both of which are proven in \cite{CBC81}:

\begin{lem}
\label{embeddings}  \emph{(Sobolev Embeddings)\textbf{.}}

\begin{enumerate}

\item For $k'>k, \delta' < \delta$ and $1\leq p < \infty$, the inclusion $W^{k',p}_{\delta'} \subset W^{k,p}_{\delta}$ is compact.

\item For $k'<k-n/p$, the inclusion $W^{k,p}_{\delta} \subset C^{k'}_{\delta}$ is compact.
\end{enumerate}
\end{lem}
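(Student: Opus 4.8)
The plan is to reduce both inclusions to the classical Rellich--Kondrachov and Morrey theorems on a single fixed bounded Euclidean domain. Since $(\Sigma^n;e)$ is Euclidean at infinity, fix $R_0\ge1$ with $\{\rho\le R_0\}\supset V$, and for $i\ge0$ set $\Sigma_i:=\{\rho\le 2^iR_0\}$ and $A_i:=\{2^iR_0\le\rho\le 2^{i+1}R_0\}$, so that each $A_i$ is a finite disjoint union of flat Euclidean annuli (one in each end) and $\Sigma^n=\Sigma_0\cup\bigcup_{i\ge0}A_i$. On any fixed $\Sigma_i$ the radial function $\rho$ is bounded above and below, so the weights $\rho^{-n+p(m-\delta)}$ appearing in \eqref{spdeltanorm} are comparable to constants; hence on $\Sigma_i$ the weighted norms $W^{k,p}_\delta$ and $C^{k'}_\delta$ are equivalent (with $i$-dependent constants) to the ordinary norms $W^{k,p}$ and $C^{k'}$, and the classical embedding and compactness theorems apply there verbatim.

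The essential device is to rescale each end-annulus to a reference annulus. Let $A:=\{R_0\le|y|\le 2R_0\}$; the dilation $x\mapsto y=2^{-i}x$ carries $A_i$ onto $A$, and read in the $y$ coordinates it multiplies the Euclidean metric by $2^{2i}$, an $m$-th covariant derivative by $2^{im}$, and the volume element by $2^{in}$, while $\rho\simeq 2^i|y|$ on $A_i$. Substituting these scalings into \eqref{spdeltanorm} gives $\|T\|_{W^{k,p}_\delta(A_i)}\simeq 2^{-i\delta}\|\tilde T\|_{W^{k,p}(A)}$ with a constant independent of $i$, where $\tilde T$ is the rescaled field, and similarly $\|T\|_{C^{k'}_\delta(A_i)}\simeq 2^{-i\delta}\|\tilde T\|_{C^{k'}(A)}$. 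Applying the ordinary Sobolev/Morrey inequality (when $k'<k-n/p$) or the ordinary inclusion $W^{k',p}\subset W^{k,p}$ (when $k'>k$) on the fixed domain $A$, summing over $i$ --- using $\rho\ge1$ and $\delta'\le\delta$, so that on each $A_i$ the $W^{k,p}_\delta$-weight is pointwise dominated by the $W^{k',p}_{\delta'}$-weight --- and adding the estimate on the core $\Sigma_0$, yields the bounded inclusions $W^{k',p}_{\delta'}\hookrightarrow W^{k,p}_\delta$ and $W^{k,p}_\delta\hookrightarrow C^{k'}_\delta$, already under the non-strict hypotheses $k'\ge k$, $\delta'\le\delta$. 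I expect this rescaling bookkeeping to be the main obstacle: one must check that the arbitrary smooth extension of the metric over the core, the smoothly interpolated radial function $\rho$, the bounded overlap between consecutive annuli, and the derivatives falling on the weight $\rho^{-\delta}$ all affect only the values of the (uniform) constants.

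For compactness, fix a sequence $(T_j)$ bounded in the source space. On each fixed $\Sigma_i$ the classical result --- Rellich--Kondrachov for (1), Morrey together with Arzel\`a--Ascoli for (2) --- produces a subsequence converging in $W^{k,p}(\Sigma_i)$, resp.\ $C^{k'}(\Sigma_i)$; a diagonal argument over $i\to\infty$ yields a single subsequence, still written $(T_j)$, converging on every $\Sigma_i$ to some limit $T$, which lies in the target space by the bounded inclusion and a routine lower-semicontinuity argument. It remains to make the tail over $\{\rho\ge 2^LR_0\}$ uniformly small in $j$, and here the strict gap enters. For part (1), on $A_i$ the $W^{k,p}_\delta$-weight is smaller than the $W^{k',p}_{\delta'}$-weight by a factor at most $(2^iR_0)^{p(\delta'-\delta)}$, so
\[
\sum_{i\ge L}\|T_j-T\|^p_{W^{k,p}_\delta(A_i)}\;\le\;(2^LR_0)^{p(\delta'-\delta)}\,\|T_j-T\|^p_{W^{k',p}_{\delta'}},
\]
which tends to $0$ as $L\to\infty$, uniformly in $j$, since $\delta'<\delta$ and $(T_j-T)$ is bounded in $W^{k',p}_{\delta'}$. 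Combining this uniformly small tail with the Cauchy behaviour of $(T_j)$ on $\Sigma_L$ shows $(T_j)$ is Cauchy in $W^{k,p}_\delta$, which proves (1). Part (2) is handled in the same fashion, now estimating the tail $\sup_{\rho\ge 2^LR_0}\sum_{m\le k'}\rho^{-\delta}|\nabla^m(T_j-T)|$ annulus by annulus through the rescaled Morrey inequality together with the same weight margin, the condition $k'<k-n/p$ being exactly what leaves room to send the tail to zero uniformly.
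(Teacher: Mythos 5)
Your treatment of part (1) is essentially correct, and it is the standard argument: dyadic annuli, rescaling to a fixed reference annulus, Rellich--Kondrachov on compact pieces plus a diagonal subsequence, and a tail made uniformly small by the strict weight gap $\delta'<\delta$. Note for the record that the paper does not prove Lemma \ref{embeddings} at all---it cites \cite{CBC81}---so there is no internal proof to compare against; your write-up of (1) would serve.

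Part (2), however, has a genuine gap, located exactly where you appeal to ``the same weight margin'': there is no weight margin there, since source and target carry the same decay rate $\delta$. Your own scaling identities exhibit the obstruction: on $A_i$ both $\|T\|_{W^{k,p}_\delta(A_i)}$ and $\|T\|_{C^{k'}_\delta(A_i)}$ scale like $2^{-i\delta}$ times the corresponding unweighted norm on the reference annulus, so the rescaled Morrey inequality gives $\|T\|_{C^{k'}_\delta(A_i)}\leq C\,\|T\|_{W^{k,p}_\delta(A_i)}$ with $C$ independent of $i$ but with no factor decaying in $i$; the hypothesis $k'<k-n/p$ buys local $C^{k',\alpha}$ compactness, not smallness of the tail. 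In fact, as literally stated (same $\delta$ on both sides) the compactness claim is false: take $\chi$ a fixed smooth bump supported in a reference annulus of one end and set $u_j(x):=2^{j\delta}\chi(2^{-j}x)$, supported in $A_j$. A direct computation with the norm \eqref{spdeltanorm} gives $\|u_j\|_{W^{k,p}_\delta}\approx 1$, while on the support $\rho\approx 2^j$, so $\|u_j\|_{C^0_\delta}=\sup\rho^{-\delta}|u_j|\approx \sup|\chi|$; since the supports are disjoint and escape to infinity, $\|u_j-u_{j'}\|_{C^0_\delta}\geq c>0$ for $j\neq j'$ and no subsequence converges in $C^{k'}_\delta$. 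The correct statement---and the one the paper actually uses later, in the proof of Theorem \ref{SubSup} (``compact embedding of $W^{2,p}_\delta$ into $C^{0,\alpha}_{\delta'}$ for any $\delta'>\delta$'')---concedes a strict loss in the weight: $W^{k,p}_\delta\subset C^{k',\alpha}_{\delta'}$ is compact for $\delta'>\delta$ and suitable $\alpha>0$. With that gap your part (1) tail argument transfers verbatim, the margin now being the factor $(2^LR_0)^{\delta-\delta'}\to 0$; without it, no argument can close the tail, because the statement being proved is not true.
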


\begin{lem} [Sobolev Multiplication]
\label{multiplication}
If $m\le\min(j,k)$, $p\le q$, $\epsilon>0$, and $m<j+k-n/q$, then
multiplication is a continuous bilinear map from
$W^{j,q}_{\delta_1}\times W^{k,p}_{\delta_2}$ to
$W^{m,p}_{\delta_1+\delta_2+\epsilon}$ for any $\epsilon>0$. In particular,
if $k>n/p$ and $\delta<0$, then $W^{k,p}_{\delta}$ forms an algebra.
\end{lem}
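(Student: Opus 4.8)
The plan is to reduce the weighted estimate to the ordinary (unweighted) Sobolev product estimate on bounded domains, via a partition of unity adapted to the Euclidean‑at‑infinity structure together with a dyadic rescaling on each end. Fix a reference metric $e$ making $(\Sigma^n,e)$ Euclidean at infinity, and choose a finite cover of $\Sigma^n$ by a relatively compact ``core'' $V'$ (containing the compact set $V$ from the definition) and the ends $U_1,\dots,U_N$, each of which, after composing with the defining isometry, is identified with an exterior region $\{|x|\ge 1\}\subset\R^n$ on which $\rho$ is the Euclidean radial coordinate. On $V'$ the weights $\rho^{-n+p(i-\delta)}$ are bounded above and below, so there the weighted norms $\|\cdot\|_{W^{m,p}_\delta}$, $\|\cdot\|_{W^{j,q}_{\delta_1}}$, $\|\cdot\|_{W^{k,p}_{\delta_2}}$ are equivalent to their unweighted counterparts, and the desired bound on $V'$ is precisely the classical Sobolev product estimate on a bounded smooth domain. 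The latter holds exactly under the stated hypotheses: expanding $\nabla^\alpha(uv)$ by Leibniz, each term $\nabla^\beta u\cdot\nabla^{\alpha-\beta}v$ with $|\alpha|\le m$ is estimated in $L^p$ by H\"older, $\tfrac1p=\tfrac1{p_1}+\tfrac1{p_2}$, together with the Sobolev embeddings $W^{j-|\beta|,q}\hookrightarrow L^{p_1}$ and $W^{k-|\alpha-\beta|,p}\hookrightarrow L^{p_2}$; the conditions $m\le\min(j,k)$ (so the derivative counts $j-|\beta|$ and $k-|\alpha-\beta|$ stay nonnegative) and the strict inequality $m<j+k-n/q$ (keeping one away from the borderline embedding exponents, using $p\le q$) are exactly what make this bookkeeping close.

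On an end I would decompose $\{|x|\ge1\}$ into (boundedly overlapping) dyadic annuli $A_s=\{2^{s-1}\le|x|\le 2^{s+1}\}$, $s\ge 1$, and for each $s$ rescale by $y=2^{-s}x$, carrying $A_s$ onto a fixed annulus $A$; writing $\tilde T(y)=T(2^sy)$, the chain rule together with $dx=2^{sn}dy$ gives, for each $i\le m$,
\begin{equation*}
\int_{A_s}|\nabla^i T|^p\,\rho^{-n+p(i-\delta)}\,\nu\;\approx\;2^{-sp\delta}\int_{A}|\nabla^i\tilde T|^p\,dy ,
\end{equation*}
and hence $\|T\|_{W^{m,p}_\delta(A_s)}\approx 2^{-s\delta}\|\tilde T\|_{W^{m,p}(A)}$, with the analogous identities for $(j,q,\delta_1)$ and $(k,p,\delta_2)$. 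Applying the classical product estimate on the \emph{fixed} domain $A$ (with a constant depending only on $n,j,k,m,p,q$) and using $\widetilde{uv}=\tilde u\,\tilde v$, the $2^{-s\delta}$ factors combine so that
\begin{equation*}
\|uv\|_{W^{m,p}_{\delta_1+\delta_2}(A_s)}\;\lesssim\;2^{-s(\delta_1+\delta_2)}\,\|\tilde u\|_{W^{j,q}(A)}\,\|\tilde v\|_{W^{k,p}(A)}\;\lesssim\;\|u\|_{W^{j,q}_{\delta_1}(A_s)}\,\|v\|_{W^{k,p}_{\delta_2}(A_s)}
\end{equation*}
uniformly in $s$. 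Summing $p$-th powers over $s$ (the annuli having bounded overlap) and adding the core contribution,
\begin{equation*}
\|uv\|_{W^{m,p}_{\delta_1+\delta_2}}^{\,p}\;\lesssim\;\sum_{s}\|u\|_{W^{j,q}_{\delta_1}(A_s)}^{\,p}\,\|v\|_{W^{k,p}_{\delta_2}(A_s)}^{\,p}\;\le\;\Big(\sup_{s}\|u\|_{W^{j,q}_{\delta_1}(A_s)}\Big)^{\!p}\sum_{s}\|v\|_{W^{k,p}_{\delta_2}(A_s)}^{\,p}\;\lesssim\;\|u\|_{W^{j,q}_{\delta_1}}^{\,p}\,\|v\|_{W^{k,p}_{\delta_2}}^{\,p},
\end{equation*}
since $\sum_s\|v\|_{W^{k,p}_{\delta_2}(A_s)}^{p}\lesssim\|v\|_{W^{k,p}_{\delta_2}}^{p}$ and $\sup_s\|u\|_{W^{j,q}_{\delta_1}(A_s)}\le\|u\|_{W^{j,q}_{\delta_1}}$. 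This already gives continuity of multiplication into $W^{m,p}_{\delta_1+\delta_2}$; the stated conclusion, into $W^{m,p}_{\delta_1+\delta_2+\epsilon}$, then follows from the trivial continuous inclusion $W^{m,p}_{\delta_1+\delta_2}\hookrightarrow W^{m,p}_{\delta_1+\delta_2+\epsilon}$ (valid since $\rho\ge1$ and $\epsilon\ge0$). Equivalently, one may retain the surplus decay $2^{s(\delta_1+\delta_2-(\delta_1+\delta_2+\epsilon))}=2^{-s\epsilon}$ and use it to absorb the bounded‑overlap constant directly. For the ``in particular'' clause I would apply the result with $j=k=m$, $q=p$, $\delta_1=\delta_2=\delta$: the hypotheses become $m\le m$, $p\le p$, and $m<2m-n/p$, i.e.\ $k>n/p$, which holds; so multiplication maps $W^{k,p}_\delta\times W^{k,p}_\delta$ continuously into $W^{k,p}_{2\delta+\epsilon}$, and since $\delta<0$, choosing $0<\epsilon<|\delta|$ gives $2\delta+\epsilon\le\delta$, whence $W^{k,p}_{2\delta+\epsilon}\hookrightarrow W^{k,p}_\delta$ and $W^{k,p}_\delta$ is an algebra.

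The main obstacle, as usual with weighted estimates, is organizing the scaling so that the powers of $2^s$ cancel cleanly: one must check that the exponents coming from the weight $\rho^{-n+p(i-\delta)}$, the Jacobian $2^{sn}$, and the derivative scaling $\nabla^i\tilde T=2^{si}(\nabla^iT)(2^s\cdot)$ conspire to leave exactly $2^{-sp\delta}$, \emph{uniformly in $i\le m$}, and—in the same breath—that the constant in the classical product estimate on the fixed annulus $A$ is genuinely $s$-independent. This is exactly the place where $p\le q$ and the strict bound $m<j+k-n/q$ (rather than the weaker‑looking $m<j+k-n/p$) are essential, through the H\"older/Sobolev bookkeeping on the bounded domain $A$ sketched above; once that local, scale‑invariant estimate is in hand, the dyadic summation is routine.
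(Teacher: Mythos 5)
Your argument is correct, but note that the paper does not actually prove Lemma \ref{multiplication}: it is quoted from \cite{CBC81}, where the weighted multiplication and embedding results are established directly in the weighted spaces (weighted H\"older and weighted Sobolev inequalities, in a different weight convention). Your route is the other standard one—split off a compact core, decompose each Euclidean end into dyadic annuli, rescale each annulus to a fixed one, apply the unweighted product estimate there, and resum—and it is sound: the exponents from the weight $\rho^{-n+p(i-\delta)}$, the Jacobian $2^{sn}$, and the derivative scaling do cancel to leave exactly $2^{-sp\delta}$ uniformly in $i$, the flat connection on the ends makes the rescaling identities exact, and the summation step (supremum on the $u$-factor, bounded-overlap sum on the $v$-factor) is legitimate. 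In fact you prove the sharper statement with $\epsilon=0$, from which the stated conclusion and the algebra property follow by the norm-decreasing inclusion $W^{m,p}_{\delta'}\subset W^{m,p}_{\delta}$ for $\delta'\le\delta$, $\rho\ge1$; this is a small gain over the quoted form. The only part you leave schematic is the classical product estimate on the fixed annulus; it is worth recording that the extreme Leibniz terms close precisely because of the hypotheses: for $u\,\nabla^{m}v$ with $m=k$ one needs $W^{j,q}\hookrightarrow L^\infty$, i.e.\ $j>n/q$, which is exactly $m<j+k-n/q$ in that case, and symmetrically for $\nabla^{m}u\,v$ with $m=j$ one needs $k>n/q$, while the intermediate terms have slack since $j+k-|\alpha|>n/q$; the condition $p\le q$ guarantees the H\"older exponents $p_1,p_2\ge p$ exist. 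With that bookkeeping made explicit (and the routine density remark that the estimate passes from $C^\infty_0$ approximants to the closure spaces), your proof is a complete, self-contained alternative to the citation, at the cost of assuming the ends are exactly Euclidean, which the paper's definition of ``Euclidean at infinity'' provides.
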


We next consider elliptic differential operators and solutions of elliptic differential equations on AE manifolds. For general elliptic differential operators, we refer the reader to \cite{CBC81}, from which our standard elliptic type estimates come.
Focussing on PDEs of the form \eqref{origLich}-\eqref{origVect}, we have for the latter of these two equations\footnote{Note that here and below, we use $\lambda$ to denote a Riemannian metric, since the analysis in solving the conformal constraint equations presumes that a set of seed data has been chosen, including a metric $\lambda$.}  (as proven in Section IV of \cite{CBIY00})
\begin{lem} [Solvability of Vector-Laplacian Equations]
\label{VectLap}
Let $(\Sigma^n; \lambda)$ be $W^{k, p}_{\rho}$-AE with $k > n/p$, $k\geq 2$ and $\rho <0$, let the lapse $N$ satisfy the condition that $N-1 \in W^{k,p}_\rho$, and let the vector field $\omega \in W^{k-2, p}_{\delta -2}$ with $\delta \in (2-n,0)$.  Then the ``vector-Laplacian" equation
\begin{equation}
\di_{\lambda}\Big(\frac{1}{2N} L_{\lambda}W\Big) = \omega
\end{equation}
has a unique solution  $W \in W^{k,p}_{\delta}$, satisfying
\begin{equation}
  \|W\|_{W^{k,p}_{\delta}} \leq c\|\omega\|_{W^{k-2,p}_{\delta-2}}
\end{equation} for some constant $c$ which depends only on the metric $\lambda$ and the lapse $N$.
\end{lem}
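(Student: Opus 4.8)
The plan is to treat $\mathcal{L}_{\lambda,N}W := \di_\lambda\bigl(\tfrac{1}{2N}L_\lambda W\bigr)$ as a second-order linear elliptic operator on the asymptotically Euclidean manifold $(\Sigma^n;\lambda)$ and to invoke the Fredholm theory for such operators from \cite{CBC81, Bartnik86}. First I would check ellipticity: a direct computation of the principal symbol of $\mathcal{L}_{\lambda,N}$ in a direction $\xi$ gives the map $v \mapsto -\tfrac{1}{2N}\bigl(|\xi|_\lambda^2\, v + (1-\tfrac{2}{n})\langle\xi,v\rangle_\lambda\,\xi\bigr)$, whose associated quadratic form in $v$ is negative definite for $\xi\neq 0$ since $N>0$ and $1-\tfrac{2}{n}\ge 0$; hence $\mathcal{L}_{\lambda,N}$ is elliptic. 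Moreover, since $\lambda-e\in W^{k,p}_\rho$ and $N-1\in W^{k,p}_\rho$ with $\rho<0$ and $k>n/p$, the coefficients of $\mathcal{L}_{\lambda,N}$ are continuous and it differs from the flat vector Laplacian $\di_e L_e$ on $\R^n$ only by terms that decay at infinity. The general AE elliptic theory then yields: for any weight $\delta$ that is not an indicial (exceptional) value of $\di_e L_e$, the map $\mathcal{L}_{\lambda,N}\colon W^{k,p}_\delta\to W^{k-2,p}_{\delta-2}$ is Fredholm of index zero, together with the a priori estimate $\|W\|_{W^{k,p}_\delta}\le c\bigl(\|\mathcal{L}_{\lambda,N}W\|_{W^{k-2,p}_{\delta-2}}+\|W\|_{W^{0,p}_\delta}\bigr)$. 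The indicial values of $\di_e L_e$ are integers lying outside $(2-n,0)$, so for every $\delta\in(2-n,0)$ the operator is Fredholm of index zero.

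It then remains to prove injectivity, since a Fredholm operator of index zero with trivial kernel is an isomorphism, whence uniqueness and (by the bounded inverse theorem, or by absorbing the lower-order term in the a priori estimate) the estimate $\|W\|_{W^{k,p}_\delta}\le c\|\omega\|_{W^{k-2,p}_{\delta-2}}$ with $c=c(\lambda,N)$. So suppose $W\in W^{k,p}_\delta$ satisfies $\mathcal{L}_{\lambda,N}W=0$. Formally integrating by parts gives $\int_{\Sigma^n}\tfrac{1}{2N}|L_\lambda W|_\lambda^2\,dV_\lambda=0$, hence $L_\lambda W\equiv 0$; a second, similar argument (equivalently: a conformal Killing field of $\lambda$ that decays at infinity must vanish) then forces $W\equiv 0$. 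The one delicate point is the vanishing of the boundary terms at infinity in these integrations by parts: the naive decay of $W\in W^{k,p}_\delta$ makes the spherical boundary integrand $O(\rho^{2\delta+n-2})$, which tends to $0$ only when $\delta<1-\tfrac{n}{2}$. For $\delta\in[1-\tfrac{n}{2},0)$ one must first upgrade the decay of the kernel element, using the asymptotic expansion supplied by the AE elliptic theory together with the fact that $(2-n,0)$ — indeed the interval from $2-n$ up to the first nonnegative indicial value — contains no indicial values of $\di_e L_e$, to conclude that any such $W$ in fact lies in $W^{k,p}_{\delta'}$ for some $\delta'\in(2-n,1-\tfrac{n}{2})$ (note $2-n<1-\tfrac{n}{2}$ precisely when $n\ge 3$, the case of interest), after which the integrations by parts are legitimate.

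I expect this decay-improvement step for kernel elements to be the main obstacle, since it is where the fine structure of the weighted spaces and the indicial roots of the vector Laplacian genuinely enter; the rest is a routine application of the elliptic package of \cite{CBC81, Bartnik86}. Once injectivity is established, surjectivity is automatic from index zero, the solution $W\in W^{k,p}_\delta$ is unique, and the claimed estimate follows. Finally, the full regularity $W\in W^{k,p}_\delta$ (rather than merely $W^{2,p}_\delta$) is obtained by elliptic regularity bootstrapping, using the Sobolev multiplication property (Lemma \ref{multiplication}) to handle the products of the variable $W^{k,p}_\rho$-coefficients with the derivatives of $W$.
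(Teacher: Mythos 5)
Your proposal is correct and follows essentially the same route as the paper's source for this lemma: the paper does not prove Lemma \ref{VectLap} itself but defers to Section IV of \cite{CBIY00}, where the argument is precisely the one you outline --- ellipticity of the conformal vector Laplacian, AE Fredholm theory giving index zero for the nonexceptional weights $\delta\in(2-n,0)$, injectivity via integration by parts after upgrading the decay of kernel elements, and then the isomorphism property with the a priori estimate. The one step you compress (that $L_\lambda W\equiv 0$ together with decay at infinity forces $W\equiv 0$) is likewise taken as known in the paper, so nothing essential is missing.
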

We note that if we choose $\tau \in W^{k-1,p}_{\delta-1}$, $\varphi \in W^{k,p}_{loc}\cap L^\infty$, and $j \in W^{k-2, p}_{\delta -2}$, and if we set $\omega=  \kappa_n  \varphi^{q_n } d\tau + j$ as in \eqref{origVect}, then the hypothesis of Lemma \ref{VectLap} is satisfied.

To work with PDEs of the Lichnerowicz form \eqref{origLich}, we wish to establish a sub and supersolution theorem for equations of that form. To do that, we first recall some results from \cite{Maxwell05b}:

\begin{lem} [$-\Delta +V$ as a Fredholm Operator]
\label{AELinearExistence}
Let  $(\Sigma^n; \lambda)$ be $W^{k,p}_\rho$-AE with $k\geq 2$, $k>n/p$ and $\rho<0$, let $\Delta$ be the corresponding (negative eigenvalue) Laplacian operator, and suppose that the function $V$ is contained in the space $W^{k-2,p}_{\rho-2}$. If $\delta\in (2-n,0)$, then the operator
$\mathcal{P}:= -\Delta +V: W^{k,p}_\delta \to W^{k-2,p}_{\delta-2}$ is Fredholm with index 0. Moreover, if $V\geq 0$
then $\mathcal{P}$ is an isomorphism, in which case the standard elliptic estimate holds in these spaces.
\end{lem}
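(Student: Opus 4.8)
The plan is to regard $\mathcal{P}=-\Delta+V$ as a lower-order perturbation of the bare Laplacian and then to exploit the sign of $V$.

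\emph{Step 1: Fredholmness and index.} I would start from the classical Fredholm theory for the Laplacian on AE manifolds (\cite{Bartnik86}; see also \cite{CBC81}): on a $W^{k,p}_\rho$-AE manifold, $-\Delta\colon W^{k,p}_\delta\to W^{k-2,p}_{\delta-2}$ is bounded, and for $\delta$ in the non-exceptional interval $(2-n,0)$ it is in fact an isomorphism (Bartnik proves this for $k=2$; the general case follows by interior elliptic regularity together with the structure of the weighted spaces). In particular $-\Delta$ is Fredholm of index $0$. To incorporate $V$, I would show that $M_V\colon u\mapsto Vu$ is a relatively compact perturbation of $-\Delta$. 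Lemma \ref{multiplication} already gives that $M_V$ is bounded $W^{k,p}_\delta\to W^{k-2,p}_{\delta-2}$, and in fact maps into the better-weighted space $W^{k-2,p}_{\delta-2+\rho+\epsilon}$ for small $\epsilon>0$ (legitimate since $\rho<0$). Combining this with the a priori estimate for $-\Delta$ and a cutoff at a coordinate sphere of large radius $R$ — the tail obeys $\|Vu\|_{W^{k-2,p}_{\delta-2}(\Sigma^n\setminus B_R)}\le CR^{\rho+\epsilon}\|u\|_{W^{k,p}_\delta}$, absorbable for $R$ large, while on $B_R$ the same norm is a lower-order quantity controlled on a fixed compact set — yields
\[
\|u\|_{W^{k,p}_\delta}\ \le\ C\bigl(\|\mathcal{P}u\|_{W^{k-2,p}_{\delta-2}}+\|u\|_{L^p(\mathcal{K})}\bigr),\qquad \mathcal{K}\subset\Sigma^n\ \text{fixed and compact}.
\]
With local Rellich compactness of $W^{k,p}_\delta\hookrightarrow L^p(\mathcal{K})$ this gives $\dim\ker\mathcal{P}<\infty$ and $\mathrm{range}(\mathcal{P})$ closed; and since the estimate holds uniformly along the homotopy $\mathcal{P}_t:=-\Delta+tV$, $t\in[0,1]$ (as $\|tV\|\le\|V\|$), the index is constant in $t$ and equals $\mathrm{ind}(-\Delta)=0$ at $t=0$. (Equivalently, if one establishes outright that $M_V$ is compact, then $\mathcal{P}=-\Delta\,(I+(-\Delta)^{-1}M_V)$ with $(-\Delta)^{-1}M_V$ compact and the Fredholm alternative applies.) Hence $\mathcal{P}$ is Fredholm of index $0$.

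\emph{Step 2: isomorphism when $V\ge 0$.} Since $\mathcal{P}$ has index $0$, it is an isomorphism as soon as it is injective. Let $u\in W^{k,p}_\delta$ solve $\mathcal{P}u=0$; as $k>n/p$ and $\delta<0$, $u$ is continuous with $u(x)\to 0$ on every end. For $\epsilon>0$ set $v_\epsilon:=u-\epsilon$, so that $-\Delta v_\epsilon=-Vu=-Vv_\epsilon-\epsilon V\le -Vv_\epsilon$. The set $\{v_\epsilon>0\}$ is bounded (since $v_\epsilon\to-\epsilon<0$ at infinity) and $v_\epsilon=0$ on its boundary; on it $-\Delta v_\epsilon\le 0$, so $v_\epsilon$ is subharmonic there and the weak maximum principle forces $v_\epsilon\le 0$ on it, hence $\{v_\epsilon>0\}=\emptyset$, i.e. $u\le\epsilon$ everywhere. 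Letting $\epsilon\downarrow 0$ gives $u\le 0$, and the same reasoning applied to $-u$ (using $V\ge 0$ again) gives $u\ge 0$; hence $u\equiv 0$. (Alternatively one may use the energy identity $\int_{\Sigma^n}(|\nabla u|^2+Vu^2)=0$, justified after first improving the falloff of $u$: since $-\Delta u=-Vu$ lies in a strictly better-weighted space, the Fredholm regularity theory lets one bootstrap $u$ into $W^{k,p}_{\delta'}$ for any $\delta'>2-n$, which kills the boundary term at infinity.) With injectivity in hand $\mathcal{P}$ is bijective, so by the open mapping theorem $\mathcal{P}^{-1}$ is bounded; its norm is precisely the constant in the asserted elliptic estimate $\|u\|_{W^{k,p}_\delta}\le C\|\mathcal{P}u\|_{W^{k-2,p}_{\delta-2}}$.

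\emph{Main obstacle.} The delicate part is the weighted-space bookkeeping in Step 1: verifying that multiplication by a merely \emph{decaying} potential $V\in W^{k-2,p}_{\rho-2}$ (no gain of differentiability) is a relatively compact perturbation of $-\Delta$ in the correct pair of spaces. Lemma \ref{multiplication} places $Vu$ in $W^{k-2,p}_{\delta-2+\rho+\epsilon}$, whose weight does beat $\delta-2$, but the compact embeddings of Lemma \ref{embeddings} require a strict gain of \emph{regularity}, not merely of weight; so one cannot simply declare $M_V$ compact and must instead route everything through the scale-broken estimate — localizing to a compact core where local Rellich applies, and separately estimating the tail so it can be absorbed — and then be careful that the constants stay uniform along the homotopy $\mathcal{P}_t$ so that the index is genuinely locally constant. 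By comparison, the Fredholm theory of $-\Delta$ itself, the maximum principle, and the passage from bijectivity to the elliptic estimate are all routine.
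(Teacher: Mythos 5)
The paper never proves this lemma: it is imported verbatim as Proposition 1 of \cite{Maxwell05b}, which in turn rests on the weighted-space Fredholm theory of \cite{Bartnik86} and \cite{CBC81}, so there is no in-paper argument to compare against. What you have written is a reconstruction of the standard proof behind that citation, and it is essentially correct: Step 1 (isomorphism of $-\Delta$ for non-exceptional weights $\delta\in(2-n,0)$, a scale-broken estimate for $-\Delta+tV$ uniform in $t\in[0,1]$, and local constancy of the index along the homotopy) is exactly the classical Bartnik/Cantor/McOwen route, and your ``main obstacle'' paragraph correctly diagnoses why one cannot simply combine Lemmas \ref{embeddings} and \ref{multiplication} to declare multiplication by $V$ compact. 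Two points deserve tightening. First, the interior piece of $\|Vu\|_{W^{k-2,p}_{\delta-2}}$ is not literally controlled by $\|u\|_{L^p(\mathcal{K})}$; Lemma \ref{multiplication} (applied with a slightly larger integrability exponent, which exists precisely because $k>n/p$) bounds it by a $W^{k-1,q}(B_{2R})$-norm of $u$, and one then needs the usual interpolation/contradiction step --- or just the remark that any estimate whose remainder is a compact seminorm of $u$ already yields finite-dimensional kernel and closed range --- to reach the stated scale-broken form. Second, in Step 2 the pointwise maximum principle on $\{v_\epsilon>0\}$ must be taken in the weak/distributional sense, since $u$ is only $W^{2,p}_{loc}$ with possibly $p<n$ and the metric has low regularity; this does go through (the distributional Laplacian coincides with the a.e.\ one and $u\in W^{1,2}_{loc}$), but it is cleaner to invoke the paper's own Lemma \ref{MaxPrinciple1}, whose integration-by-parts proof is independent of the present lemma, applied to both $u$ and $-u$. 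With those repairs the argument is complete; and, as you note parenthetically, once the tail/interior splitting is in place the same bookkeeping shows that $u\mapsto Vu$ is itself compact (norm limit of the truncated operators), so the index-zero claim can alternatively be closed by the Fredholm alternative rather than the homotopy.
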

This lemma corresponds to Proposition 1 in \cite{Maxwell05b}. Note that the hypothesis for this lemma places no restrictions on $\rho$ other than that $\rho<0$, which is needed  so that the metric decays to the Euclidean metric. The following is a slight strengthening of a maximum principle in \cite{Maxwell05b}.

\begin{lem}[A Maximum Principle for AE Manifolds]
\label{MaxPrinciple1}
Suppose that $(\Sigma^n; \lambda)$ and $V$ satisfy the hypothesis of  Lemma \ref{AELinearExistence}, and 
suppose that $V\geq 0$. If $u \in W^{2,p}_{loc}$, if for any choice of positive $\eta$ one has $u\geq -\eta$ outside of a sufficiently large ball, and if $u$ satisfies the differential inequality 
\begin{equation}
\label{diffineq}
-\alpha_n\Delta u + V u \geq 0,
\end{equation}
then $u\geq 0$.
\end{lem}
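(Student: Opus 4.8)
The plan is to trade the one-sided asymptotic hypothesis on $u$ for an honest Dirichlet-type inequality on a large, relatively compact domain, and then to invoke the classical weak maximum principle on that domain. Fix an arbitrary $\eta>0$ and set $u_\eta:=u+\eta$. Because $V\geq 0$, shifting by the positive constant $\eta$ preserves (indeed improves) the differential inequality: $-\alpha_n\Delta u_\eta+V u_\eta=(-\alpha_n\Delta u+V u)+\eta V\geq 0$ by \eqref{diffineq}, so $u_\eta$ is again a supersolution of $\mathcal{P}:=-\alpha_n\Delta_\lambda+V$ in the $W^{2,p}_{loc}$ sense. Applying the hypothesis with this particular $\eta$, there is a radius $R=R_\eta$, which we may take to be a regular value of the smooth exhaustion function $\rho$, such that $u\geq-\eta$, equivalently $u_\eta\geq 0$, on $\Sigma^n\setminus B_R$ where $B_R:=\{\rho<R\}$ is relatively compact with smooth boundary; in particular $u_\eta\geq 0$ on $\partial B_R$.

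Now I would apply the weak maximum principle for the uniformly elliptic operator $\mathcal{P}$ on the bounded domain $B_R$. On $\overline{B_R}$ the metric $\lambda$ is uniformly elliptic with continuous coefficients (since $k>n/p$), and the zeroth-order coefficient $V\in W^{k-2,p}_{\rho-2}\subset L^p_{loc}$ is nonnegative; hence the standard weak maximum principle for $W^{2,p}_{loc}$ supersolutions (cf.\ \cite{CBC81,Maxwell05b}) applies and yields $\inf_{B_R}u_\eta\geq\min\{0,\inf_{\partial B_R}u_\eta\}=0$. Combining this with $u_\eta\geq 0$ on $\Sigma^n\setminus B_R$ gives $u_\eta\geq 0$ on all of $\Sigma^n$, i.e.\ $u\geq-\eta$ everywhere. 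Since $\eta>0$ was arbitrary, we conclude $u\geq 0$, as claimed.

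The step I expect to demand the most care is purely bookkeeping rather than substantive: checking that the regularity available here ($u\in W^{2,p}_{loc}$ and $V\in W^{k-2,p}_{\rho-2}\subset L^p_{loc}$, with $\lambda$ continuous and uniformly elliptic on compacta) suffices to invoke the bounded-domain weak maximum principle exactly as stated --- if one is uneasy about low $p$, one can first upgrade $u$ to a continuous function by interior elliptic estimates --- and arranging the exhausting domains $B_R$ to have boundaries regular enough for that principle; the latter is handled either by choosing $R$ a regular value of $\rho$ (Sard) or, if one prefers to avoid boundary regularity altogether, by replacing the weak maximum principle with the strong maximum principle applied at an interior minimum of $u$ over a large compact set, together with connectedness of $\Sigma^n$. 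None of this is a genuine obstacle: the entire gain over the version in \cite{Maxwell05b} is the elementary observation that the $\eta$-shift converts the hypothesis $\liminf_{\infty}u\geq 0$ into a usable nonnegative boundary condition on each $B_R$ while leaving the supersolution property (thanks to $V\geq 0$) intact.
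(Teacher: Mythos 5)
Your strategy is sound and it does differ in packaging from the paper's: you localize to a large ball $B_R$ where the $\eta$-shift makes the boundary data nonnegative and then quote the bounded-domain weak maximum principle, whereas the paper works globally, setting $v=\min\{0,u+\epsilon\}$, observing that the decay hypothesis makes $v$ compactly supported, integrating \eqref{diffineq} against $v$ to get $\int \alpha_n|\nabla v|^2\le \int -Vuv\le 0$, and concluding $v\equiv 0$. Both proofs turn on the same two observations — $V\ge 0$ lets the shift/truncation be absorbed with the right sign, and the hypothesis ``$u\ge-\eta$ outside a large ball'' converts decay into usable nonnegativity far out — so the difference is one of where the elliptic input is placed (a quoted boundary-value maximum principle versus a direct energy identity).

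The one caveat is that the step you dismiss as bookkeeping is where the actual content sits, and as written it leans on tools that do not quite apply in this regularity regime. The hypotheses only give $u\in W^{2,p}_{loc}$ with $k>n/p$, so when $k=2$ one may have $p$ barely above $n/2$; the strong-solution (Aleksandrov--Bakelman--Pucci type) maximum principle wants $W^{2,n}_{loc}$, and $V$ is only in $W^{k-2,p}_{\rho-2}\subset L^p_{loc}$, possibly unbounded. Moreover the suggested repair — ``upgrade $u$ by interior elliptic estimates'' — is not available, because $u$ satisfies only a differential \emph{inequality}, not an equation, so there is nothing to bootstrap against. The correct patch is the weak formulation: since $p>n/2$ one has $u\in C^0$ and (by Sobolev embedding, using $p>2n/(n+2)$) $u\in H^1_{loc}$, the a.e.\ inequality \eqref{diffineq} implies the distributional one, and on $B_R$ one tests against $(u+\eta)^-\in H^1_0(B_R)$, which is legitimate because $u+\eta\ge 0$ on $\partial B_R$. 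Carrying this out reproduces exactly the paper's integration-by-parts computation, with the ball boundary playing the role of the compact support of $v$. So your argument is correct once you either cite a maximum principle stated for $H^1\cap C^0$ supersolutions with nonnegative $L^p_{loc}$ potential, or perform that truncation argument yourself; with the ABP-style citation left as is, the proof has a gap precisely at the low-$p$ end of the allowed hypotheses.
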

\begin{proof}
Let 
\begin{equation}
\label{defv}
v = (u + \epsilon)^- := \min\{0,u+\epsilon\}
\end{equation} 
for some $\epsilon>0$. It follows from this definition and from assumptions made above that $v$ is compactly supported, and that $v \leq0$. As well, it follows from Sobolev embedding (see Lemma \ref{embeddings}) that  $v\in W^{1,2}$. Using integration by parts (for $v$ with compact support), using the differential inequality \eqref{diffineq}, and using the fact that wherever $v$ is nonzero (and therefore negative) it must be the case that $u$ is negative (see \eqref{defv}), we have 
\begin{equation}
\int_{\Sigma^n} \alpha_n|\nabla v|^2 = \int_{\Sigma^n} -\alpha_n v \Delta u \leq \int_{\Sigma^n} -Vu v \leq  0.
\end{equation} 
This tells us that $v$ must be constant on $\Sigma^n$. Since we know that there are places on $\Sigma^n$ where $u\geq -\eta$ for any positive $\eta$, it follows from \eqref{defv} that $v$ must be identically zero. Hence $u\geq -\epsilon$ for any positive $\epsilon$. Letting $\epsilon \to 0$, we determine that $u \geq 0$.
\end{proof}

For use in proving the Curvature Criterion Theorem below, we note one further maximum principle, which is Lemma 4 from \cite{Maxwell05b}:

\begin{lem}[Another Maximum Principle for AE Geometries]
 \label{MaxwellMax}
  Suppose that  $(\Sigma^n; \lambda)$ is $W^{k,p}_\delta$-AE with $k\geq 2$, $k>n/p$, and $\delta<0$, and suppose that $V \in W^{k-2,p}_{\delta-2}$, and suppose that $u\in W^{k,p}_{loc}$ is nonnegative and satisfies $-\Delta u + Vu \geq 0$ on $\Sigma^n$. If $u(x) = 0$ at some point $x\in \Sigma^n$, then $u$ vanishes identically.
\end{lem}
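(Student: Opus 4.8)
The plan is to reduce this to the classical local strong maximum principle, the asymptotically Euclidean hypotheses serving only to supply the regularity and connectedness that make that reduction go through. Put $Z := \{x \in \Sigma^n : u(x) = 0\}$; by hypothesis $Z \neq \emptyset$, and since $\Sigma^n$ is connected it is enough to prove that $Z$ is both closed and open. Closedness is immediate: $u \in W^{k,p}_{loc}$ with $k > n/p$, so Lemma \ref{embeddings} gives $u \in C^0_{loc}(\Sigma^n)$ and $Z = u^{-1}(\{0\})$ is closed.

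For openness, fix $x_0 \in Z$ and a ball $B = B_r(x_0)$ with $\overline{B} \subset \Sigma^n$. Decompose $V = V^+ - V^-$ with $V^{\pm} \geq 0$. Because $u \geq 0$ everywhere, the hypothesis $-\Delta u + Vu \geq 0$ gives on $B$
\begin{equation}
-\Delta u + V^+ u = (-\Delta u + Vu) + V^- u \geq 0,
\end{equation}
so $u$ is a nonnegative supersolution on $B$ of $-\Delta + c$ with $c := V^+ \geq 0$. The point is that $c$ has enough integrability for the classical theory: $c \in W^{k-2,p}_{loc}$ (here $k - 2 \geq 0$), and by the standard Sobolev embedding $c \in L^s_{loc}$ for some $s > n/2$. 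Indeed, if $(k-2)p \geq n$ then $c \in L^s_{loc}$ for every $s < \infty$; and if $(k-2)p < n$ then $c \in L^s_{loc}$ with $s = \tfrac{np}{n-(k-2)p}$, and the inequality $s > \tfrac n2$ is equivalent to $kp > n$, which is exactly the hypothesis $k > n/p$. The leading coefficients (the components of $\lambda$) are H\"older continuous on $B$, again because $k > n/p$, so the classical theory applies on $B$.

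Now apply the weak Harnack inequality (equivalently, the strong minimum principle) for nonnegative supersolutions of $-\Delta u + cu \geq 0$ with $0 \leq c \in L^s_{loc}$, $s > n/2$: one gets $\|u\|_{L^{s_0}(B_{r/2})} \leq C \inf_{B_{r/2}} u$ for some $s_0 > 0$, and since $u(x_0) = 0$ with $x_0 \in B_{r/2}$ the right-hand side vanishes, forcing $u \equiv 0$ on $B_{r/2}$. Hence $Z$ is open, and connectedness of $\Sigma^n$ yields $Z = \Sigma^n$, i.e.\ $u \equiv 0$. I do not anticipate a real obstacle here: the only nonautomatic point is the integrability bookkeeping for $c = V^+$ carried out above (which is what makes the classical theorem applicable despite the mild regularity of $V$), and we note that no decay of $u$ at infinity is needed --- the asymptotically Euclidean structure enters only through the Sobolev embedding that makes $u$ continuous and through connectedness of $\Sigma^n$. (This is Lemma 4 of \cite{Maxwell05b}, so one may alternatively simply cite it.)
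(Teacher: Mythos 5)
Your proof is correct. Note that the paper itself gives no argument for this lemma: it is quoted verbatim as Lemma 4 of \cite{Maxwell05b}, so there is no internal proof to match, and your alternative of simply citing that reference is exactly what the authors do. Your self-contained argument is sound and is essentially the standard route: the splitting $V=V^+-V^-$ together with $u\geq 0$ correctly reduces to a nonnegative supersolution of $-\Delta+V^+$ with nonnegative potential, and your integrability bookkeeping is exactly right --- in the subcritical case $s=\tfrac{np}{n-(k-2)p}>\tfrac n2$ is equivalent to $kp>n$, which is the hypothesis $k>n/p$, so the weak Harnack inequality (e.g.\ Gilbarg--Trudinger Thm.\ 8.18) applies and forces $u\equiv 0$ on a half-ball about a zero, whence openness of the zero set and the conclusion by connectedness. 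Two small points worth making explicit if you write this up: (i) the differential inequality involves the metric Laplacian $\Delta_\lambda$, so you should pass to divergence form, $-\partial_i\bigl(\sqrt{\det\lambda}\,\lambda^{ij}\partial_j u\bigr)+\sqrt{\det\lambda}\,V^+u\geq 0$, and observe that since $k>n/p$ the coefficients $\lambda^{ij}$ are locally H\"older (in particular bounded measurable and uniformly elliptic on $\overline B$), and that the pointwise a.e.\ inequality for $u\in W^{k,p}_{loc}$ implies the weak (distributional) form needed for the Harnack estimate; (ii) connectedness of $\Sigma^n$ is indeed used and should be stated as a standing assumption (it is implicit throughout the paper and in \cite{Maxwell05b}); otherwise the conclusion only holds on the component containing the zero. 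Neither point is a gap in substance.
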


To state the sub and supersolution theorem in a form which can be applied to the Lichnerowicz equation, we consider the nonlinear PDE
\begin{equation}
\label{eq:Semilinear}
-\alpha_n\Delta \varphi = F(x,\varphi)
\end{equation}
for a function $F:\Sigma^n \times \R\to \R$ which takes the form 
$F(x,y) = \sum_{i=1}^j a_i(x) y^{b_i}$ for specified functions $a_i:\Sigma^n \to \R$
and for constants
$b_i$, where we use the convention that $y^{b_i} \equiv 1$ if $b_i =0$.
We assume here that $a_i(x) \in W^{k-2,p}_{\delta-2}$ for some $k>n/p$, $k\geq 2$ and $\delta\in (2-n,0)$. We note that, depending on the value(s) of $b_i$, the quantity
$y^{b_i}$ is smooth on $(0,\infty)$, $[0,\infty)$, or $(-\infty,\infty)$. 
We define the function
$F$ to be  ``Lichnerowicz-type" if it satisfies these properties, and we define the   largest interval for which
all the $y^{b_i}$ are smooth to be  $F$'s ``interval of regularity" $I$. 

Recalling that a pair of functions $\varphi_-$ and $\varphi_+$ are called sub and supersolutions of an equation of the form \eqref{eq:Semilinear} if $-\alpha_n\Delta \varphi_- \leq F(x,\varphi_-)$, if $-\alpha_n\Delta \varphi_+ \geq F(x,\varphi_+)$, and if $\varphi_-(x) \leq \varphi_+(x)$, we have the following existence theorem: 

\begin{thm}[Sub and Supersolution Theorem for AE Manifolds]
\label{SubSup}
Let $(\Sigma^n; \lambda)$ be $W^{k,p}_\rho$-AE with $k>n/p$, $k\geq 2$ and
$\rho<0$. Suppose that $F(x,y)$ is Lichnerowicz-type (as defined above) for some $\delta \in (2-n,0)$.
Suppose that there are sub and supersolutions $\varphi_\pm\in W^{2,p}_{loc} \cap L^\infty$  for which $\inf \varphi_- \in I$. Suppose there is a constant $\xi>0$ such that,
sufficiently far
out on each end, $\varphi_- \leq \xi \leq \varphi_+$.
Then Equation \eqref{eq:Semilinear} admits a solution $\varphi$ such that
$\varphi_- \leq \varphi \leq \varphi_+$ and such that $\varphi- \xi \in W^{k,p}_{\delta}$.
\end{thm}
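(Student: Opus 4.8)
The plan is to prove Theorem \ref{SubSup} by the standard monotone iteration (Perron-type) scheme, adapted to weighted Sobolev spaces on AE manifolds. First I would choose a constant $M>0$ large enough that the function $y\mapsto F(x,y)+My$ is nondecreasing in $y$ on the interval $[\inf\varphi_-,\sup\varphi_+]$ for a.e.\ $x$; this is possible because $F$ is a finite sum of power terms with coefficients in $W^{k-2,p}_{\delta-2}\subset L^\infty_{\delta-2}$, hence each $\partial_y(a_i(x)y^{b_i})$ is bounded on the relevant compact $y$-range (and decays at infinity), so $M$ can be taken constant. Since $\inf\varphi_-\in I$ and $\varphi_-\le\xi\le\varphi_+$, the whole range $[\inf\varphi_-,\sup\varphi_+]$ lies in the interval of regularity $I$, so all compositions below make sense.

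Next I would set up the iteration. Define $\varphi_0=\varphi_+$, and inductively let $\varphi_{k+1}\in\xi+W^{k,p}_{\delta}$ be the unique solution of the linear equation
\begin{equation}
-\alpha_n\Delta\varphi_{k+1}+M\varphi_{k+1}=F(x,\varphi_k)+M\varphi_k.
\end{equation}
Here is where I invoke Lemma \ref{AELinearExistence}: the operator $-\alpha_n\Delta+M$ (with $V=M\ge0$, constant, which is in $W^{k-2,p}_{\rho-2}$ after allowing for the fact that $M$ is a bounded function and the estimates there only need $V\in W^{k-2,p}_{\rho-2}$; if one worries about the constant $M$ not decaying, replace $M$ by $M$ times a cutoff that is $1$ on a large ball, or simply note that the lemma's proof goes through for $0\le V\in L^\infty\cap W^{k-2,p}_{loc}$ that is bounded below away from zero on a region — I would add a remark to this effect) is an isomorphism $W^{k,p}_\delta\to W^{k-2,p}_{\delta-2}$. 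To apply it I must check that the right-hand side $F(x,\varphi_k)+M\varphi_k$, after subtracting the value forced by the asymptotic condition, lies in $W^{k-2,p}_{\delta-2}$: since $\varphi_k-\xi\in W^{k,p}_\delta$ (inductively) and $W^{k,p}_\delta$ is an algebra adjoined with constants, $F(x,\varphi_k)=\sum a_i(x)\varphi_k^{b_i}$ has each term equal to $a_i(x)$ times a bounded function that tends to $\xi^{b_i}$ at infinity, and Lemma \ref{multiplication} gives membership in $W^{k-2,p}_{\delta-2+\epsilon}$; a minor technical point is absorbing the $\epsilon$, handled by choosing $\delta$ slightly generously or by the standard observation that the $a_i$ already sit in a slightly better-decaying space. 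Then one solves, writing $\varphi_{k+1}=\xi+(\text{something in }W^{k,p}_\delta)$.

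The core of the argument is the monotonicity and ordering: I claim $\varphi_-\le\cdots\le\varphi_{k+1}\le\varphi_k\le\cdots\le\varphi_+$. Each comparison is obtained by applying the maximum principle Lemma \ref{MaxPrinciple1} to the difference. For instance, $\varphi_+-\varphi_1$ satisfies $-\alpha_n\Delta(\varphi_+-\varphi_1)+M(\varphi_+-\varphi_1)\ge F(x,\varphi_+)+M\varphi_+-(F(x,\varphi_+)+M\varphi_+)=0$ using that $\varphi_+$ is a supersolution, and $\varphi_+-\varphi_1\to 0$ at infinity (both are $\xi$ plus weighted-decaying terms), so $\varphi_+-\varphi_1\ge0$; similarly $\varphi_1-\varphi_-\ge0$ using the subsolution inequality together with the monotonicity of $y\mapsto F+My$; and the inductive step $\varphi_k-\varphi_{k+1}\ge0$ uses $\varphi_{k-1}\ge\varphi_k$ plus monotonicity. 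This gives a pointwise monotone bounded sequence, so $\varphi:=\lim_k\varphi_k$ exists pointwise with $\varphi_-\le\varphi\le\varphi_+$. Finally I would upgrade convergence to $W^{k,p}_\delta$: the sequence $\varphi_k-\xi$ is bounded in $W^{k,p}_\delta$ by the elliptic estimate of Lemma \ref{AELinearExistence} applied uniformly (the right-hand sides are bounded in $W^{k-2,p}_{\delta-2}$ since the $\varphi_k$ are uniformly bounded in $L^\infty$ and the coefficients are fixed), so by the compact embedding Lemma \ref{embeddings}(1) a subsequence converges in a weaker weighted space, the pointwise limit identifies it as $\varphi-\xi$, and a bootstrap (feeding $\varphi$ back into the linear equation $-\alpha_n\Delta\varphi+M\varphi=F(x,\varphi)+M\varphi$, whose right side is now genuinely in $W^{k-2,p}_{\delta-2}$) gives $\varphi-\xi\in W^{k,p}_\delta$ and that $\varphi$ solves \eqref{eq:Semilinear}.

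The main obstacle I anticipate is not the iteration scheme itself — that is textbook — but the bookkeeping at infinity: ensuring at every stage that the correct constant $\xi^{b_i}$-type asymptotics are subtracted so that one genuinely stays in the affine space $\xi+W^{k,p}_\delta$, and that the maximum principle Lemma \ref{MaxPrinciple1} is legitimately applicable to each difference (i.e.\ that each difference is $\ge-\eta$ outside a large ball for every $\eta>0$, which follows from the decay but must be tracked). A secondary subtlety is the mild mismatch between the hypothesis $V\in W^{k-2,p}_{\rho-2}$ in Lemma \ref{AELinearExistence} and our use of a non-decaying constant shift $M$; I would dispatch this either by a cutoff as indicated above or by remarking that the Fredholm/isomorphism statement and elliptic estimate hold verbatim for any $0\le V\in L^\infty$ with $V\ge c>0$ on an open set, which is the only property used.
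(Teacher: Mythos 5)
Your overall scheme (monotone iteration, comparison via the maximum principle, uniform elliptic estimates, compact embedding, bootstrap) is the same as the paper's, but the specific device you use to linearize — a \emph{constant} shift $M$ — creates a genuine gap in the AE/weighted-space setting, and neither of your proposed patches repairs it. First, with $M>0$ constant the operator $-\alpha_n\Delta+M$ is simply not an instance of Lemma \ref{AELinearExistence}: that lemma requires $V\in W^{k-2,p}_{\rho-2}$, i.e.\ a potential decaying two orders faster than the function, and this decay is not a technicality — for non-decaying $V$ the operator does not even map $W^{k,p}_\delta$ into $W^{k-2,p}_{\delta-2}$ (the term $Mu$ only decays like $\rho^{\delta}$), so the asserted isomorphism between those spaces does not typecheck, and your remark that ``the only property used is $V\ge c>0$ on a region'' is false. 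Second, the cutoff patch ($M\chi$ with $\chi=1$ on a large ball, $0$ outside) restores the mapping property but destroys the monotonicity you need for the comparisons: outside $\mathrm{supp}\,\chi$ the inequality $-\alpha_n\Delta(\varphi_k-\varphi_{k+1})+M\chi(\varphi_k-\varphi_{k+1})\ge 0$ requires $y\mapsto F(x,y)$ itself to be nondecreasing there, i.e.\ $F_y\ge 0$ far out, which is not guaranteed (the $a_i$ decay but may be negative arbitrarily far out). Third, even granting the operator, your induction ``$\varphi_k-\xi\in W^{k,p}_\delta$'' fails at the base case: $\varphi_+$ (or $\varphi_-$) is only assumed to lie in $W^{2,p}_{loc}\cap L^\infty$ with $\varphi_-\le\xi\le\varphi_+$ far out, so $M(\varphi_+-\xi)$ is merely bounded, not in any $L^p_{\delta-2}$-type space, and the first linear solve cannot be performed in the claimed affine space $\xi+W^{k,p}_\delta$.

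All three problems have a single, standard cure, which is exactly what the paper does: replace the constant $M$ by a nonnegative \emph{decaying} function $s$ chosen so that $s(x)+\sup_{y\in\mathrm{Range}(\varphi_\pm)}F_y(x,y)\ge 0$. Such an $s$ exists because $F_y(x,y)=\sum_i b_i a_i(x)y^{b_i-1}$ with $a_i\in W^{k-2,p}_{\delta-2}$ and $y$ confined to a compact subinterval of $I$ (here the hypothesis $\inf\varphi_-\in I$ is used), so one may take $s$ comparable to $\sum_i|a_i|$. Then $V=s\ge 0$ is admissible in Lemma \ref{AELinearExistence}; the shift terms $s\,(\varphi_\pm-\xi)$ and $s\,v_i$ lie in the required weighted $L^p$ space because $s$ decays and the other factor is merely bounded (this fixes your base-case issue without needing any decay of $\varphi_\pm-\xi$); and the comparison inequalities hold globally, since $s+F_y\ge 0$ everywhere. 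With that substitution your iteration (whether run upward from $\varphi_-$ as in the paper or downward from $\varphi_+$ as you propose — the two are symmetric), the maximum-principle orderings, the uniform $W^{2,p}_\delta$ bound, the compact embedding into $C^{0,\alpha}_{\delta'}$, and the final bootstrap to $\varphi-\xi\in W^{k,p}_\delta$ go through as you describe. I would also lightly correct your justification for applying the maximum principle to $\varphi_+-\varphi_1$: $\varphi_+$ is not $\xi$ plus a decaying term, but Lemma \ref{MaxPrinciple1} only needs $\varphi_+-\varphi_1\ge-\eta$ outside a large ball for every $\eta>0$, which follows from $\varphi_+\ge\xi$ far out together with $v_1\to 0$.
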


\begin{proof}
We prove this lemma using the strategy established for the analogous result in  \cite{CBIY00}. That is, we construct a solution by induction, starting from $\varphi_-$. Let $s$ be a positive
function on $\Sigma^n$ such that $s \in L^p_{\delta}$ and
\begin{equation}\label{eq:kDef}
s(x) + \sup_{y \in \textrm{Range}(\varphi_\pm)} F_y(x,y) \geq 0.
\end{equation} Such a function $s$
exists as a consequence of our assumptions on $\varphi_\pm$ and on $F$. 

We now define $v_1\in W^{2,p}_{\delta}$ to be the unique solution to
\begin{equation}
\label{defv1}
-\alpha_n\Delta v_1 + sv_1 = F(x,\varphi_-) + s(\varphi_- -\xi)
\end{equation} and set $\varphi_1 := v_1 + \xi$. The solution $v_1$ exists as a consequence of Lemma
\ref{AELinearExistence}. 

Using Eqn.\ \eqref{defv1} satisfied by $v_1$ and the subsolution differential inequality satisfied by $\varphi_-$,  we calculate
\begin{equation}
-\alpha_n\Delta(\varphi_1- \varphi_-) +s(\varphi_1-\varphi_-) \geq 0.
\end{equation} 
It then follows from the maximum principle \ref{MaxPrinciple1} that $\varphi_1 \geq \varphi_-$.
Similarly, we calculate 
\begin{align}
-\alpha_n\Delta(\varphi_+ - \varphi_1) + s(\varphi_+-\varphi_1) &\geq F(x,\varphi_+) - F(x,\varphi_-) + s(\varphi_+-\varphi_-)\\
&= (\varphi_+-\varphi_-) \left(s + \int_0^1 F_y(x,\varphi_-+t(\varphi_+-\varphi_-)) dt\right)\\
&\geq 0,
\end{align} 
where the last line holds as a consequence of the  assumption \eqref{eq:kDef} which we have used in defining $s$. 
Again, it follows from  the maximum principle \ref{MaxPrinciple1} that $\varphi_1\leq \varphi_+$.

With $\varphi_1$ to initialize it, we now define the sequence  $\varphi_i = v_i + \xi$, with  $v_i \in W^{2,p}_\delta$ solving
\begin{equation}
-\alpha_n\Delta v_i + s v_i = F(x,\varphi_{i-1}) - s v_{i-1}.
\end{equation} 
Again using the maximum principle, we can show that $\varphi_i$ is an
increasing sequence; i.e.,
\begin{equation}
\varphi_- \leq \varphi_1 \leq \varphi_2 \leq \cdots \leq \varphi_{i-1} \leq \varphi_i \leq \cdots \leq \varphi_+.
\end{equation} Since the functions $\varphi_i$ constitute a bounded increasing sequence, the $\varphi_i$ converge
to some function $\varphi$, with $\varphi_- \leq \varphi \leq \varphi_+$. We claim that this limit function $\varphi$ is a solution of Eqn.\ \eqref{eq:Semilinear}.

To show this, we start by noting that  the elliptic estimates indicated by Lemma \ref{AELinearExistence} give us
\begin{equation}
\|v_{i+1}\|_{W^{2,p}_\delta} \leq C \|F(x,\varphi_i) - s v_i\|_{L^p_\delta}.
\end{equation} 
Combining our assumptions on $s$ and $F$ with the bounds noted above for $v_i$ and $\varphi_i$, we verify that 
the right hand side of this inequality is uniformly bounded.  It thus follows that $v_i$ is uniformly
bounded in $W^{2,p}_\delta$.

The compact embedding of $W^{2,p}_\delta$ into 
$C^{0,\alpha}_{\delta'}$ which holds for any $\delta'> \delta$ and for some $\alpha>0$
(see Lemma \ref{embeddings})
implies that $\varphi_i \to \varphi$ in $C^{0,\alpha}_{\delta'}$, and that 
$\varphi-\xi \in W^{2,p}_\delta$.
This convergence implies that $F(x,\varphi_{i-1}) - kv_{i-1}$ converges in $L^{p}_\delta$;
thus, since $-\alpha_n\Delta +s$ is an isomorphism, $\varphi_i$ must converge to $\varphi$ in 
$W^{2,p}_\delta$. Consequently we obtain $-\alpha_n\Delta \varphi = F(x,\varphi)$, as 
desired. Additional regularity can be achieved by the usual bootstrap arguments.
\end{proof}

Lemma \ref{SubSup} is very useful for proving that  Eqn.\ \eqref{origLich} admits solutions \emph{if} it is decoupled from Eqn.\ \eqref{origVect}. For maximal seed data sets (those with $\tau=0$), this is the case.\footnote{With $j$ nonzero, the system is coupled, but the coupling is sequential: One can solve  \eqref{origVect} for $W$ independent of $\varphi$, then substitute the solution $W$ into \eqref{origLich} and finally solve the resulting equation for $\varphi$.} For the fully coupled system \eqref{origLich}-\eqref{origVect}, Lemma \ref{SubSup} cannot be directly applied. However,  a modified version of it is very useful. We consider \emph{Lichnerowicz-coupled-type} systems of the form
\begin{align}
\label{Eqn1}
-\alpha_n\Delta \varphi = H(x,W,\varphi)\\
\label{Eqn2}
\di_{\lambda} L_{\lambda}W = G(x, \varphi)
\end{align}
where $H(x,W,\varphi)$ satisfies the properties stated above for Lichnerowicz-type functions $F(x, y)$, but with the coefficients $a_j(x)$ in the expansion $F(x,y) = \sum_{i=1}^j a_i(x) y^{b_i}$ allowed to depend on $W$ and its first derivatives, and where $G(x,\varphi)$ is a polynomial in $\varphi$, with $x$-dependent coefficients. We define a \emph{global subsolution and global supersolution} for the system \eqref{Eqn1}-\eqref{Eqn2} to be pair of functions $\Phi_-$ and $\Phi_+$ such that for all $\varphi$ satisfying\footnote{We also require that $ l \leq \Phi_-(x) \leq \Phi_+ (x) \leq m$ for  $[l,m]\subset I$, where $I$ is the interval of regularity for $F$.}
 $\Phi_-\leq \varphi \leq \Phi_+$, one has $-\alpha_n\Delta \Phi_+ \geq H(x,W, \Phi_+)$ and $-\alpha_n\Delta \Phi_- \leq H(x,W, \Phi_-)$  where $W$ satisfies $\di_{\lambda} L_{\lambda}W = G(x, \varphi)$. Relying on this definition and the Schauder Fixed Point Theorem, one can prove as in \cite{Maxwell09} a result  essentially the same as Lemma \ref{SubSup}, showing that if global sub and supersolutions exist, then the system \eqref{Eqn1}-\eqref{Eqn2} admits a solution. 
 
\begin{thm}[Global Sub and Supersolution Theorem]
\label{GlobalSubAndSup}
If such a global sub and supersolution pair exist, then the system \eqref{Eqn1}-\eqref{Eqn2} has a solution $(\varphi, W)$, with $\varphi-\xi$ and $W$ in $W^{k,p}_{\delta}$.
\end{thm}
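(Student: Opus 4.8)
The strategy is to convert the coupled problem into a fixed-point problem for the map that sends a ``trial'' conformal factor to the actual conformal factor obtained by first solving the vector equation and then the scalar equation, and then to apply the Schauder Fixed Point Theorem on a suitable order interval. First I would fix the compact embedding picture: let $\delta \in (2-n,0)$ and pick $\delta' > \delta$ (with $\delta'<0$) so that, by Lemma \ref{embeddings}, $W^{k,p}_{\delta'} \hookrightarrow C^{0,\alpha}_{\delta''}$ compactly for an intermediate weight. Define the closed, convex, bounded set
\[
\mathcal{A} := \{\, \varphi \in C^{0}_{\delta'} : \Phi_- \leq \varphi \leq \Phi_+ \text{ pointwise},\ \varphi - \xi \in C^{0}_{\delta'} \,\}
\]
inside an appropriate Banach space. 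For $\varphi \in \mathcal{A}$, Lemma \ref{VectLap} applies to the source $G(x,\varphi)$ (which is a polynomial in $\varphi$ with $W^{k-2,p}_{\delta-2}$ coefficients, hence lies in $W^{k-2,p}_{\delta-2}$ since $\Phi_\pm$ are bounded), producing a unique $W = W(\varphi) \in W^{k,p}_{\delta}$ with a bound depending only on $\|\varphi\|_\infty$ and hence uniform over $\mathcal{A}$. Substituting this $W(\varphi)$ into $H$ gives a Lichnerowicz-type function $\tilde F_\varphi(x,y) := H(x,W(\varphi),y)$ to which Theorem \ref{SubSup} applies: the hypotheses on $\Phi_\pm$ as global sub/supersolutions guarantee that $\Phi_-,\Phi_+$ are genuine sub/supersolutions of $-\alpha_n\Delta u = \tilde F_\varphi(x,u)$, so there is a solution $u$ with $\Phi_- \leq u \leq \Phi_+$ and $u-\xi \in W^{k,p}_\delta$. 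To make this single-valued I would, as in \cite{Maxwell09}, select the \emph{minimal} such solution (obtained by the monotone iteration in the proof of Theorem \ref{SubSup} started from $\Phi_-$); call it $\mathcal{T}(\varphi) := u$. This defines $\mathcal{T}: \mathcal{A} \to \mathcal{A}$.

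Next I would verify the three hypotheses of Schauder. Self-mapping is immediate from $\Phi_- \leq \mathcal{T}(\varphi) \leq \Phi_+$. For compactness and continuity, the key observation is the uniform elliptic estimate: $\|W(\varphi)\|_{W^{k,p}_\delta}$ is bounded uniformly on $\mathcal{A}$ by Lemma \ref{VectLap}, and then the iteration estimates from the proof of Theorem \ref{SubSup} (the bound $\|v_{i+1}\|_{W^{2,p}_\delta} \leq C\|\tilde F_\varphi(x,\varphi_i) - s v_i\|_{L^p_\delta}$, with $C$ and the right side uniform over $\mathcal{A}$ because all quantities are pinched between $\Phi_\pm$) show that $\mathcal{T}(\varphi) - \xi$ lies in a bounded subset of $W^{k,p}_\delta$ independent of $\varphi$. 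The compact embedding $W^{k,p}_\delta \hookrightarrow\hookrightarrow C^{0,\alpha}_{\delta'}$ then makes $\mathcal{T}(\mathcal{A})$ precompact in the topology of $\mathcal{A}$. For continuity of $\mathcal{T}$: if $\varphi_m \to \varphi$ in $C^0_{\delta'}$, then $G(x,\varphi_m) \to G(x,\varphi)$ in $W^{k-2,p}_{\delta-2}$, so by the estimate in Lemma \ref{VectLap} $W(\varphi_m) \to W(\varphi)$ in $W^{k,p}_\delta$ and (by Sobolev embedding and multiplication, Lemmas \ref{embeddings}--\ref{multiplication}) the coefficients of $\tilde F_{\varphi_m}$ converge to those of $\tilde F_\varphi$; combining this with the precompactness of $\{\mathcal{T}(\varphi_m)\}$ and the fact that any limit point is a solution with the right bounds, one identifies the limit as $\mathcal{T}(\varphi)$ (using minimality, or arguing along a subsequence that every limit point equals $\mathcal{T}(\varphi)$). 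Schauder then yields a fixed point $\varphi \in \mathcal{A}$, i.e.\ $\mathcal{T}(\varphi) = \varphi$, which together with $W = W(\varphi)$ solves \eqref{Eqn1}--\eqref{Eqn2}; the regularity $\varphi - \xi, W \in W^{k,p}_\delta$ comes from the two solvability lemmas, and higher regularity from bootstrapping.

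The main obstacle I anticipate is the continuity of the solution map $\mathcal{T}$, because the assignment $\varphi \mapsto u$ is defined through a (possibly non-unique) sub/supersolution construction rather than through an invertible operator. The clean way around this is exactly the minimal-solution selection together with a subsequence argument: take $\varphi_m \to \varphi$, extract via compactness a subsequence along which $\mathcal{T}(\varphi_m)$ converges in $C^0_{\delta'}$ (and in $W^{k,p}_\delta$ weakly, hence strongly in the relevant norms by the uniform estimates) to some $u_\infty$; pass to the limit in the equation to see $u_\infty$ solves $-\alpha_n\Delta u_\infty = H(x,W(\varphi),u_\infty)$ with $\Phi_-\le u_\infty \le \Phi_+$; then show $u_\infty$ is the minimal such solution (any other solution $u'$ would, by the monotone construction applied to $\varphi_m$ and a comparison/maximum-principle argument using Lemma \ref{MaxPrinciple1}, satisfy $\mathcal{T}(\varphi_m) \le u'$ in the limit, forcing $u_\infty \le u'$), so $u_\infty = \mathcal{T}(\varphi)$; since every subsequence has a further subsequence converging to $\mathcal{T}(\varphi)$, the whole sequence converges. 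A secondary technical point is ensuring the dominating function $s$ in Theorem \ref{SubSup} can be chosen uniformly over $\mathcal{A}$ — this is fine because the range of all admissible $\varphi$ is contained in the fixed compact interval $[\inf\Phi_-,\sup\Phi_+]$ and the $W$-dependent coefficients of $H$ are uniformly controlled by the uniform bound on $W(\varphi)$. With these two points handled, the argument is a routine Schauder application following \cite{Maxwell09}.
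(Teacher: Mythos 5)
Your overall strategy is the one the paper intends: the paper gives no detailed proof of Theorem \ref{GlobalSubAndSup}, deferring to the Schauder fixed-point argument of \cite{Maxwell09}, and your decomposition (solve \eqref{Eqn2} for $W(\varphi)$, solve the scalar equation between the barriers, run Schauder on the order interval, bootstrap regularity at the end) is exactly that argument. The self-mapping and compactness steps are fine (modulo minor bookkeeping: with $\varphi$ only in a $C^0$-type space, $G(x,\varphi)$ lands in $L^p_{\delta-2}$ rather than $W^{k-2,p}_{\delta-2}$, so the fixed point should be found at the $W^{2,p}_\delta$ level and regularity recovered afterwards, as you indicate).

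The genuine gap is the continuity of $\mathcal{T}$, precisely the point you flag. The minimal-solution selection does make $\mathcal{T}$ single-valued, but minimal solutions produced by monotone iteration from the fixed subsolution are in general only \emph{lower} semicontinuous in the coefficients: each iterate depends continuously on $W(\varphi)$ and increases to $\mathcal{T}(\varphi)$, which yields $\liminf_m \mathcal{T}(\varphi_m)\ge \mathcal{T}(\varphi)$ — the easy direction — while your argument needs the reverse inequality. The proposed comparison, that any solution $u'$ for the limit data satisfies $\mathcal{T}(\varphi_m)\le u'$ ``in the limit,'' requires deforming $u'$ into a supersolution for the perturbed coefficients $H(x,W(\varphi_m),\cdot)$, say $u'+\epsilon_m v$; the resulting differential inequality forces one to find $v>0$ with $(-\alpha_n\Delta - Q)v\gtrsim \rho^{\delta-2}$ where $Q$ dominates $(\partial_y H)^{+}$, which is in effect a monotonicity/positivity condition on the linearization that a general Lichnerowicz-type $H$ (allowing $\partial_y H>0$, hence possibly nonunique scalar solutions) need not satisfy; minimal solutions can indeed jump upward in the limit when a lower solution branch exists only at the limiting parameter. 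Two standard repairs: (i) in the application actually made in this paper the scalar equation is the Lichnerowicz equation with $f^2=\bigl|\sigma+\tfrac{1}{2N}L_{\lambda}W\bigr|^2\ge 0$, $r\ge 0$ and $\kappa_n\tau^2\ge 0$, so by Proposition \ref{uniqueness} the solve is unique and your subsequence argument closes with no minimality needed; or (ii) bypass the nonlinear solve by taking the Schauder map to be one step of the $s$-shifted linear iteration, $-\alpha_n\Delta v + s v = H(x,W(\varphi),\varphi)+s(\varphi-\xi)$, $\mathcal{T}(\varphi):=v+\xi$, which with $s+\partial_y H\ge 0$ over the barrier range still preserves $[\Phi_-,\Phi_+]$, is manifestly continuous and compact, and whose fixed points solve \eqref{Eqn1}--\eqref{Eqn2}. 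As written, the continuity step does not go through at the stated level of generality.
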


We recall (see the Introduction) that as a consequence of the conformal covariance of the conformal constraint equations (in CTS-H form), in verifying the existence of solutions to \eqref{origLich}-\eqref{origVect} for a given set of seed data, one may work instead with conformally-related seed data. As seen below, it often makes it easier to find global sub and super solutions if a strategic conformal transformation is implemented at the start of the analysis. 

We close this section with a discussion of the Yamabe classes for asymptotically Euclidean geometries. As with Riemannian geometries on closed manifolds, the Yamabe class for an AE geometry is determined by the sign of the Yamabe invariant. For a specified $W^{k,p}_{\delta}$ AE geometry $(\Sigma^n; \lambda, N)$ (with $k>\frac{n}{p}, k\ge 2$, and $\delta<0$), we define the Yamabe invariant $Y(\Sigma^n, \lambda)$ as follows:
\begin{equation}
\label{YamInv}
Y(\Sigma^n, \lambda):= \inf_{u\in C^\infty_c(\Sigma^n)} \frac{\int_{\Sigma^n} \alpha_n|\nabla u|^2 +R_\lambda u^2}{\|u\|^2_{L^q_n }}.    
\end{equation}
Here $R_\lambda$ denotes the scalar curvature of the metric $\lambda$, and $C^\infty_c(\Sigma^n)$ denotes the set of smooth functions on the AE manifold $\Sigma^n$ with compact support. 
One verifies that the Yamabe invariant $Y(\Sigma^n, \lambda)$ is invariant under conformal transformations of the metric, and one also readily verifies that the three Yamabe classes $\mathcal Y^+(\Sigma^n), \mathcal Y^0(\Sigma^n)$, and $\mathcal Y^-(\Sigma^n)$ partition the set of all AE geometries (and their conformal equivalence classes) on a given manifold $\Sigma^n$. 

For closed $\Sigma^n$, one has the familiar and intuitive---and very useful---result that a metric $\lambda$ can be conformally transformed to a metric $\hat \lambda$ with $R_{\hat \lambda}>0$ if and only if $\lambda \in \mathcal Y^+$; similarly one can conformally transform to $R_{\hat \lambda}=0$ iff $\lambda \in \mathcal Y^0$, and to 
$R_{\hat \lambda}<0$ iff $\lambda \in \mathcal Y^-$. As noted in the Introduction, for asymptotically Euclidean metrics, the results for conformal transformations to metrics with scalar curvature of a prescribed sign are significantly more complicated and much less intuitive. As proven in \cite{Maxwell05b} and in \cite{DM15}, one has the following for AE metrics:
\begin{lem}[AE Yamabe classes and their properties] \quad
\label{Yamabe}

\begin{itemize}
\item $\lambda \in \mathcal Y ^+$ iff it can be conformally deformed to a metric with scalar curvature $\mathcal R$ for every function $\mathcal{R} \le 0.$

\item $\lambda \in \mathcal Y ^0$ iff it can be conformally deformed to a metric with scalar curvature $\mathcal R$ for every  function $\mathcal{R} \le 0$  except $\mathcal{R} \equiv 0$.

\item $\lambda \in \mathcal Y ^-$ iff there exists some  $\mathcal{R} \le 0$, with $\mathcal{R} \not \equiv 0$, such that $\lambda$ cannot be conformally deformed to a metric with scalar curvature $\mathcal R$.

\item If the scalar curvature of a metric $\lambda$ (or of a metric conformally related to $\lambda$) is nonnegative, then $\lambda \in \mathcal Y^+$. 

\item An AE geometry $(\Sigma^n; \lambda)$ is contained in $\mathcal Y ^+$ or $\mathcal Y ^0$ or $\mathcal Y ^-$ iff $(\Sigma^n; \lambda)$ admits a conformal compactification to a geometry of the same Yamabe class. 
\end{itemize} 
\end{lem}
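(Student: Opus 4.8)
My plan is to treat this lemma as what it is---a compilation of the prescribed-scalar-curvature theory for AE metrics developed in \cite{Maxwell05b} and \cite{DM15}---and to organize the argument around a single reduction. By the conformal transformation law for scalar curvature,
\begin{equation}
\label{eq:YamTransf}
R_{\varphi^{q_n-2}\lambda}=\varphi^{1-q_n}\bigl(-\alpha_n\Delta_\lambda\varphi+R_\lambda\varphi\bigr),
\end{equation}
conformally deforming $\lambda$ within its AE conformal class to a metric of scalar curvature $\mathcal R$ is the same as finding a positive solution $\varphi$, with $\varphi-\xi\in W^{k,p}_\delta$ for some constant $\xi>0$, of
\begin{equation}
\label{eq:YamPDE}
-\alpha_n\Delta_\lambda\varphi+R_\lambda\varphi=\mathcal R\,\varphi^{q_n-1}.
\end{equation}
This is a Lichnerowicz-type equation with interval of regularity $(0,\infty)$, so I would attack it with the sub and supersolution theorem (Theorem \ref{SubSup}) once barriers are in hand. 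I would also record at the outset that the numerator of the Yamabe quotient \eqref{YamInv} transforms under $\lambda\mapsto\varphi^{q_n-2}\lambda$ precisely so as to cancel the change in $\|\cdot\|_{L^{q_n}}$; hence $Y(\Sigma^n,\lambda)$ is a conformal invariant and I may replace $\lambda$ by any convenient conformal representative at will.

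The fourth bullet I would dispatch first. On a $W^{k,p}_\delta$-AE manifold with $n\ge3$ one has a Sobolev inequality $\|u\|_{L^{q_n}}^2\le C_S\int_{\Sigma^n}|\nabla u|^2$ for all $u\in C_c^\infty(\Sigma^n)$ (from the Euclidean Sobolev inequality on the ends together with interior estimates; see \cite{Bartnik86,CBC81}). If $R_\lambda\ge0$---possibly after a conformal change, which by the remark above leaves $Y$ unchanged---the numerator of \eqref{YamInv} is bounded below by $\alpha_nC_S^{-1}\|u\|_{L^{q_n}}^2$, so $Y(\Sigma^n,\lambda)\ge\alpha_nC_S^{-1}>0$ and $\lambda\in\mathcal Y^+$. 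Together with the fact recalled in the Introduction that $\lambda$ admits a scalar-flat AE representative iff $\lambda\in\mathcal Y^+$ (from \cite{DM15}), this already handles the case $\mathcal R\equiv0$ in all three classification bullets.

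For $\mathcal R\le0$, $\mathcal R\not\equiv0$ (in the relevant weighted space) I would argue by Yamabe class. If $\lambda\in\mathcal Y^+$, I normalize to $R_\lambda=0$; let $W$ be the solution in $W^{k,p}_\delta$ of $-\alpha_n\Delta_\lambda W=-\mathcal R$, which exists and is unique by Lemma \ref{AELinearExistence} (with $V\equiv0$) and satisfies $W\ge0$ by the maximum principle (Lemma \ref{MaxPrinciple1}); then, for a sufficiently small constant $C>0$, set $\varphi_+\equiv C$ and $\varphi_-:=C-C^{q_n-1}W$. A direct computation using $R_\lambda=0$ and $\mathcal R\le0$ shows that $\varphi_+$ is a supersolution of \eqref{eq:YamPDE}, that $\varphi_-$ is a subsolution with $\inf\varphi_->0$ (here smallness of $C$, the exponent inequality $q_n>2$, and boundedness of $W$ are used), and that $\varphi_-\le C\le\varphi_+$ on the ends; Theorem \ref{SubSup} then yields $\varphi$ with $\varphi-C\in W^{k,p}_\delta$, and $\varphi^{q_n-2}\lambda$ has scalar curvature $\mathcal R$. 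The case $\lambda\in\mathcal Y^0$ is the step I expect to be the main obstacle: one must first produce a conformal representative with $R_\lambda\le0$ and then build barriers for \eqref{eq:YamPDE} despite the operator $-\alpha_n\Delta_\lambda+R_\lambda$ being borderline (neither coercive with respect to $\|\cdot\|_{L^{q_n}}$ nor admitting a bounded positive homogeneous solution tending to a positive constant---the latter would give a scalar-flat representative and put $\lambda$ in $\mathcal Y^+$). This is the technical core of \cite{DM15}, and I would invoke it rather than reprove it. The remaining implications are then formal: the converse of bullet 1 holds because achieving $\mathcal R\equiv0$ forces a scalar-flat representative, hence $\lambda\in\mathcal Y^+$; and bullet 3 follows from bullets 1--2 and the partition $\{\mathcal Y^+,\mathcal Y^0,\mathcal Y^-\}$, since a metric in $\mathcal Y^-$ can neither achieve $\mathcal R\equiv0$ (else it would lie in $\mathcal Y^+$) nor achieve every $\mathcal R\le0$ with $\mathcal R\not\equiv0$ (else it would lie in $\mathcal Y^0$), so it must fail to achieve some such $\mathcal R$.

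For the last bullet I would compactify by the Euclidean inversion $x\mapsto y=x/|x|^2$ on each end, adjoining one point per end; near each added point the flat metric reads $|y|^{-4}\sum dy_i^2=\omega^{q_n-2}\sum dy_i^2$ with $\omega=|y|^{n-2}\to0$, so $\bar\lambda:=\omega^{q_n-2}\lambda$ extends across the added points (with the regularity tracked in \cite{DM15}) to a metric on the closed manifold $\bar\Sigma^n$. Conformal invariance of $Y$ on $\Sigma^n$ gives $Y(\Sigma^n,\lambda)=Y(\Sigma^n,\bar\lambda|_{\Sigma^n})$; and since a point has zero $2$-capacity for $n\ge3$, functions vanishing near the added points are dense in $W^{1,2}(\bar\Sigma^n)$, so the variational problem over $C_c^\infty(\Sigma^n)$ and that over $W^{1,2}(\bar\Sigma^n)$ share the same infimum. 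Hence $Y(\Sigma^n,\lambda)=Y(\bar\Sigma^n,\bar\lambda)$, the AE Yamabe sign equals that of the compactification, and the latter is well defined because any two compactifications are conformally related. Apart from the $\mathcal Y^0$ existence argument, the only place I would anticipate needing real care is this compactification bookkeeping---checking that $\bar\lambda$ is regular enough for the curvature term in \eqref{YamInv} to be meaningful, and justifying the capacity/density step---which is precisely what \cite{DM15} supplies.
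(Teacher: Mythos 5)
The paper never proves Lemma \ref{Yamabe}: it is imported wholesale with the sentence ``As proven in \cite{Maxwell05b} and in \cite{DM15}, one has the following,'' so there is no internal argument to match yours against. Measured against that, the parts of your sketch that you actually carry out are sound and use only the paper's own toolkit: the fourth bullet via the AE Sobolev inequality and conformal invariance of $Y$; the $\mathcal Y^+$ case of the first bullet via the barriers $\varphi_+\equiv C$ and $\varphi_-=C-C^{q_n-1}W$ with $-\alpha_n\Delta_\lambda W=-\mathcal R$ (this checks out against Lemma \ref{AELinearExistence}, Lemma \ref{MaxPrinciple1} and Theorem \ref{SubSup}); and the inversion compactification is the right mechanism for the last bullet, with the regularity and capacity bookkeeping correctly deferred to \cite{DM15}. (Minor slip: under inversion the flat metric is $\omega^{-(q_n-2)}\sum dy_i^2$ with $\omega=|y|^{n-2}$, not $\omega^{q_n-2}\sum dy_i^2$; your conclusion that $\bar\lambda=\omega^{q_n-2}\lambda$ extends across the added points is nonetheless the correct one.)

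The one genuine gap is the claim that, once the $\mathcal Y^0$ existence statement is granted, ``the remaining implications are then formal.'' They are not. The forward direction of the third bullet (equivalently the converse of the second) is a \emph{nonexistence} assertion: a Yamabe-negative AE metric must fail to realize some $\mathcal R\le 0$ with $\mathcal R\not\equiv 0$. Nothing in the forward directions of bullets one and two, together with the partition into the three classes, rules out a $\mathcal Y^-$ metric realizing every such $\mathcal R$; your parenthetical ``else it would lie in $\mathcal Y^0$'' invokes precisely the converse of bullet two, which is what is being established, so the derivation is circular as written. The missing ingredient is the obstruction proved in \cite{DM15} and reflected in the third item of Lemma \ref{Yam&CurvCrit}: for $\lambda\in\mathcal Y^-$ a prescribed $\mathcal R\le 0$ is attainable only if its zero set is sufficiently small, so an $\mathcal R$ vanishing on a large set supplies the required unattainable curvature. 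Since your plan is to lean on \cite{DM15} in any case, the fix is simply to cite that nonexistence theorem explicitly rather than present the converse of bullet two and bullet three as formal consequences of the trichotomy.
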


\section{Curvature Criterion for Asymptotically Euclidean Solutions of the Lichnerowicz Equation}
\label{CurvCrit}

For asymptotically Euclidean initial data sets which have constant mean curvature (and consequently are maximal, with $trK = \tau = 0$), the system \eqref{origLich}-\eqref{origVect} decouples, and the conformal method admits a solution for a given set of seed data if and only if Eqn.\ \eqref{origLich} admits a solution. Presuming that the coefficient of the $\varphi^{q_n -1}$ term vanishes (which is true for maximal data with vanishing cosmological constant and with no scalar fields present), one readily verifies that \eqref{origLich} admits a solution if and only if the seed data metric $\lambda$ is in the positive Yamabe class.\footnote{Perhaps the easiest way to see this is to note from the constraint equation \eqref{HamConstr} that for maximal seed date the conformal method reduces to a prescribed scalar curvature problem, with that prescribed scalar curvature being nonnegative.} 

For non-CMC data, one must work with the coupled system. However, even in the non-CMC case it is still useful (as discussed below) to consider the solvability of \eqref{origLich} in its decoupled form (which is often labeled the \emph{Lichnerowicz equation}):
\begin{equation}
\label{LichneroEqn}
\alpha_n \Delta_{\lambda} \varphi = R_{\lambda} \varphi +\kappa_n \tau^2 \varphi^{q_n -1} -f^2 \varphi^{-q_n -1} - r \varphi ^ {-\frac{q_n }{2}}.
\end{equation}
Relying primarily on the sub and supersolution theorem, as stated above in Lemma \ref{SubSup}, we prove the following:
\begin{thm} [Curvature Criterion for AE Solutions to the Lichnerowicz Equation]
\label{LichIff}
Suppose that $(\Sigma^n; \lambda)$ is $W^{k,p}_\delta$-AE 
with $k> n/p$, $k\geq 2$ and $\delta<0$, and suppose that  $r\geq 0$, that $r, f^2$ and $\tau^2$ are in $W^{k-2,p}_{\delta-2}$. The Lichnerowicz equation \eqref{LichneroEqn} has a positive solution
$\varphi$ with $\varphi-1\in W^{k,p}_\delta$ if and only if there exists a conformal factor $\psi>0$ with
$\psi-1\in W^{k,p}_\delta$ such that $R_{\psi^{q_n -2}\lambda} = -\kappa_n  \tau^2$.

The same result holds if we replace the conditions and conclusions stated here for $\varphi-1$ and for $\psi-1$ by conditions and conclusions imposed on  $\varphi-c_1$ and on $\psi-c_2$ for any positive constants $c_1$ and $c_2$.
\end{thm}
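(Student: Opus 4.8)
The plan is to prove both implications with the same two‑step recipe: first perform a strategic conformal change of the background metric so that the equation at hand acquires a \emph{constant} sub‑ or supersolution for free, then manufacture the complementary solution by solving a single linear Poisson‑type equation and applying the AE maximum principles, and finally close with the Sub and Supersolution Theorem (Lemma \ref{SubSup}). I will use conformal covariance of the (CTS‑H) conformal constraint equations in the form: one may replace $\lambda$ by any conformally related metric $\zeta^{q_n-2}\lambda$ with $\zeta>0$, $\zeta-1\in W^{k,p}_\delta$, after which the Lichnerowicz equation \eqref{LichneroEqn} carries scalar curvature coefficient $R_{\zeta^{q_n-2}\lambda}$, the \emph{same} $\kappa_n\tau^2$-coefficient, and $f^2,r$ replaced by multiples of themselves by functions bounded above and below by positive constants (so nonnegativity and membership in $W^{k-2,p}_{\delta-2}$ are preserved, by Lemma \ref{multiplication}), the unknown changing by $\varphi\mapsto\zeta^{-1}\varphi$. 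I will treat only $c_1=c_2=1$; the stated generalization follows verbatim with $1$ replaced throughout by $c_1$ (resp.\ $c_2$) in the role of trivial sub/supersolution and reference constant $\xi$. I will also freely assume $\delta\in(2-n,0)$, the range in which Lemma \ref{AELinearExistence} supplies the linear theory.

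For the implication ($\Leftarrow$): given $\psi$ with $R_{\psi^{q_n-2}\lambda}=-\kappa_n\tau^2$, I would pass to $\tilde\lambda:=\psi^{q_n-2}\lambda$ (still $W^{k,p}_\delta$-AE), where \eqref{LichneroEqn} reads $\alpha_n\Delta_{\tilde\lambda}\tilde\varphi=-\kappa_n\tau^2\tilde\varphi+\kappa_n\tau^2\tilde\varphi^{q_n-1}-\tilde f^2\tilde\varphi^{-q_n-1}-\tilde r\tilde\varphi^{-q_n/2}$ with $\tilde f^2,\tilde r\ge0$ in $W^{k-2,p}_{\delta-2}$. The constant $\tilde\varphi_-\equiv1$ is a subsolution, since the two $\kappa_n\tau^2$-terms cancel and $\alpha_n\Delta_{\tilde\lambda}(1)=0\ge-\tilde f^2-\tilde r$. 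For a supersolution I would solve $-\alpha_n\Delta_{\tilde\lambda}w=\tilde f^2+\tilde r\,(\ge0)$ by Lemma \ref{AELinearExistence} ($V=0$), obtain $w\ge0$ from Lemma \ref{MaxPrinciple1} (using $w\to0$ at infinity), and check that $\tilde\varphi_+:=1+w$ is a supersolution from $(1+w)^{q_n-1}\ge1+w$ and $(1+w)^{-q_n-1},(1+w)^{-q_n/2}\le1$. Lemma \ref{SubSup} (with $\xi=1$; interval of regularity $(0,\infty)\ni1$) then yields $\tilde\varphi$ with $1\le\tilde\varphi\le1+w$ and $\tilde\varphi-1\in W^{k,p}_\delta$, and $\varphi:=\psi\tilde\varphi$ is the desired positive solution of \eqref{LichneroEqn}, with $\varphi-1=\psi(\tilde\varphi-1)+(\psi-1)\in W^{k,p}_\delta$ by Lemma \ref{multiplication}.

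For the implication ($\Rightarrow$): given a solution $\varphi$, I would set $\hat\lambda:=\varphi^{q_n-2}\lambda$ (again $W^{k,p}_\delta$-AE) and substitute \eqref{LichneroEqn} into the conformal change law for scalar curvature, obtaining $R_{\hat\lambda}=-\kappa_n\tau^2+g$ with $g:=f^2\varphi^{-2q_n}+r\varphi^{-\frac32 q_n+1}\ge0$ and $g\in W^{k-2,p}_{\delta-2}$. It then suffices to find $\theta>0$, $\theta-1\in W^{k,p}_\delta$, solving $\alpha_n\Delta_{\hat\lambda}\theta=(-\kappa_n\tau^2+g)\theta+\kappa_n\tau^2\theta^{q_n-1}$, for then $\psi:=\theta\varphi$ satisfies $R_{\psi^{q_n-2}\lambda}=-\kappa_n\tau^2$. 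Here $\theta\equiv1$ is a supersolution, since $\alpha_n\Delta_{\hat\lambda}(1)=0\le g$. For a subsolution I would solve the \emph{shifted} equation $(-\alpha_n\Delta_{\hat\lambda}+g)\,u=g$ (Lemma \ref{AELinearExistence}, as $g\ge0$): then $u\ge0$ by Lemma \ref{MaxPrinciple1}, and $u\le1$ by the same lemma applied to $1-u$ (which solves $(-\alpha_n\Delta_{\hat\lambda}+g)(1-u)=0$ and tends to $1$), after which Lemma \ref{MaxwellMax} upgrades this to $1-u>0$ everywhere. Since $\theta_-:=1-u$ satisfies $\alpha_n\Delta_{\hat\lambda}\theta_-=g\theta_-\ge(-\kappa_n\tau^2+g)\theta_-+\kappa_n\tau^2\theta_-^{q_n-1}$ (using $\theta_-^{q_n-1}\le\theta_-$ for $0<\theta_-\le1$), it is a subsolution, $\theta_-\le1=\theta_+$, and $\inf\theta_->0$ lies in the interval of regularity; Lemma \ref{SubSup} then produces $\theta$, and $\psi=\theta\varphi$ finishes the argument.

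The step I expect to be the crux is producing the complementary sub/supersolution with \emph{both} the correct differential inequality and the correct positivity and decay. In the ($\Rightarrow$) direction the naive choice $-\alpha_n\Delta u=g$ does make $1-u$ a subsolution, but it gives no bound $u\le1$, so $1-u$ need not be positive and cannot be fed to Lemma \ref{SubSup}; solving instead the shifted equation $(-\alpha_n\Delta_{\hat\lambda}+g)u=g$ repairs exactly this, because then $1-u$ is $(-\alpha_n\Delta_{\hat\lambda}+g)$‑harmonic and the two AE maximum principles pin it strictly between $0$ and $1$. The remaining work — checking that $\tilde\lambda$, $\hat\lambda$, and the transformed coefficients retain the AE regularity and decay hypotheses, and that the falloff exponent may be taken in $(2-n,0)$ — is routine bookkeeping with Lemma \ref{multiplication} and the conformal covariance statement.
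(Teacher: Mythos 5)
Your argument is correct, and the backward direction is essentially the paper's: conformally transform by the given $\psi$ so the scalar curvature becomes $-\kappa_n\tau^2$, take the constant $1$ as subsolution, and build a supersolution from one linear problem (the paper solves $-\alpha_n\Delta u+\kappa_n\tau^2 u=\hat f^2+\hat r$ and takes $\beta u$ with $\beta$ large; you solve $-\alpha_n\Delta w=\tilde f^2+\tilde r$ and take $1+w$ --- an immaterial variation). The forward direction, however, is a genuinely different construction. The paper stays on the original background $\lambda$, feeds the prescribed-curvature equation \eqref{eq:conformalFactor} to Lemma \ref{SubSup} with the Lichnerowicz solution $\varphi$ itself as supersolution and $\psi_-\equiv 0$ as subsolution, and only afterwards recovers strict positivity of the limit via Lemma \ref{MaxwellMax} and the asymptotic condition. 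You instead conformalize by $\varphi$, observe $R_{\varphi^{q_n-2}\lambda}=-\kappa_n\tau^2+g$ with $g\ge 0$, take the constant $1$ as supersolution, and manufacture a strictly positive subsolution $1-u$ by solving the shifted equation $(-\alpha_n\Delta+g)u=g$ and applying Lemmas \ref{MaxPrinciple1} and \ref{MaxwellMax}. What your route buys is that positivity (and the condition $\inf\theta_->0$ in the interval of regularity) is built into the sub/supersolution pair from the start, and the constant $\xi=1$ barrier at infinity is exact; what it costs is an extra conformal transformation and the attendant (routine) multiplication bookkeeping for $g$ and for reassembling $\psi=\theta\varphi$, which the paper's direct use of $\varphi$ as a supersolution avoids.

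One caution: the claim that the general-constant case follows ``verbatim with $1$ replaced by $c_1$, $c_2$'' is too quick. Your constant-barrier verifications use inequalities special to the value $1$ (e.g.\ $y^{q_n-1}\ge y$ for $y\ge 1$ for the supersolution, and $\theta_-\le 1$ so that $\kappa_n\tau^2(\theta_--\theta_-^{q_n-1})\ge 0$ for the subsolution); if the relevant asymptotic constant (for instance $c_2/c_1$ in your forward direction) differs from $1$, a literal substitution breaks these checks, and moreover constant rescalings of the conformal factor rescale the target curvature, so the asymptotic constant cannot be normalized away for free. The repair is minor --- rescale the barriers as in the paper's $\hat\beta\varphi$ supersolution argument (e.g.\ use the constant $\max(c,1)$ above and a small multiple $\epsilon(1-u)$ below, with $\xi=c$) --- but it does require a few lines rather than none.
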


The analogous  theorem for data on closed manifolds, which holds for metrics  the Yamabe negative 
class, was originally proven by Maxwell in \cite{Maxwell05}, using a very similar
proof. 

\begin{proof}
$(\Rightarrow)\,$ We presume that a solution $\varphi$ to the Lichnerowicz equation \eqref{LichneroEqn} exists, with regularity and asymptotics as stipulated in the hypothesis.  It follows from the formula for the transformation of the scalar curvature induced by a conformal transformation of the metric that to prove the first part of this theorem, it is sufficient that we show
that there exists a solution $\psi$ (with appropriate asymptotic and regularity properties) to
the equation
\begin{equation}
\label{eq:conformalFactor}
-\alpha_n\Delta _\lambda \psi + R_\lambda \psi + \kappa_n \tau^2 \psi^{q_n -1}= 0.
\end{equation}

To apply Lemma \ref{SubSup} to Eqn.\ \eqref{eq:conformalFactor}, we first note that it follows from the hypothesis of Theorem \ref{LichIff} that $R_{\lambda} \in W^{k-2,p}_{\delta-2}$; thus we verify that this hypothesis ensures that the coefficients of the terms in \eqref{eq:conformalFactor} satisfy  the regularity requirements for Lemma \ref{SubSup}. For the supersolution $\psi_+$ for \eqref{eq:conformalFactor}, we choose
$\varphi$, the solution of the Lichnerowicz equation, with its prescribed regularity. For the
subsolution, we take $\psi_- \equiv 0$, which is of course sufficiently regular. Noting that the powers of $\psi$ appearing in the function $F(x,\psi) = R_\lambda \psi + \kappa_n  \tau^2 \psi^{q_n -1}$ are all positive, we verify that indeed $\psi_-$ lies within the interval of regularity for $F$. It thus follows from Lemma \ref{SubSup} that \eqref{eq:conformalFactor} admits a smooth solution $\psi$ bounded between $\psi_-$ and $\psi_+$. One must still verify that this solution is bounded away from zero. This follows immediately from the maximum principle Lemma \ref{MaxwellMax}, along with the requirement that $\psi$ approach $1$ at infinity.

$(\Leftarrow)\,$ We suppose now that there is a conformal factor $\Psi$ with the stipulated regularity and asymptotic behavior for which $R_{\Psi^{q_n -2}\lambda} = -\kappa_n  \tau^2$. We need to show that there must exist a solution of the Lichnerowicz equation \eqref{LichneroEqn}, which for convenience we rewrite in the (nonlinear operator) form
\begin{equation}
\label{LichneroOp}
0=\mathcal{L} (\Phi) := -\alpha_n \Delta_{\lambda} \Phi + R_{\lambda} \Phi +\kappa_n \tau^2 \Phi^{q_n -1} -f^2 \Phi^{-q_n -1} - r \Phi ^ {-\frac{q_n }{2}}.
\end{equation}
The conformal covariance of the conformal constraint equations carries over to the Lichnerowicz equation, as long as we transform $f^2$ as $\hat f^2 = \Theta^{-2q_n} f^2$. Using the function $\Psi$ as our conformal factor, and denoting the conformally transformed quantities by hats, we see that it is sufficient to show that there exists a solution to 
\begin{equation}
\label{LichneroOp2}
0=\hat{\mathcal{L}} (\Phi) := -\alpha_n \Delta_{\hat\lambda} \Phi + \kappa_n \tau^2 (\Phi^{q_n -1} - \Phi)
 -\hat f^2 \Phi^{-q_n -1} - \hat r \Phi ^ {-\frac{q_n }{2}}.
\end{equation} 
We readily verify that if $\hat f^2$ and $\hat r$ both vanish everywhere, then $\Phi=1$ is a solution to this equation. Hence, we may presume that one or the other of these quantities is nonzero somewhere. We also readily verify that $\hat{\mathcal{L}}(1) \le 0$, so $\Phi_-=1$ is a subsolution. To obtain a supersolution, we note that it follows from 
Lemmas \ref{AELinearExistence}-\ref{MaxwellMax} that the linear equation 
\begin{equation}
\label{lineq}
-\alpha_n\Delta_{\hat \lambda} u + \kappa_n  \tau^2 u = \hat f^2 +\hat r
\end{equation} 
admits a solution $u-1\in W^{k,p}_\delta$, with $u\ge c>0$ for some constant $c$. Calculating $\hat{\mathcal{L}}(\beta u)$ for a positive constant $\beta$, we obtain
\begin{align*}
\hat{\mathcal{L}} (\beta u) = &-\alpha_n \Delta_{ \hat \lambda} (\beta u) + \kappa_n \tau^2 ([\beta u]^{q_n -1} - \beta u)-\hat f^2 [\beta u]^{-q_n -1} - \hat r [\beta u]^ {-\frac{q_n }{2}}\\
 =& \kappa_n  \tau^2[(\beta u)^{q_n -1}-2\beta u] + \hat f^2[\beta-(\beta u)^{-q_n -1}] + \hat r[\beta-(\beta u)^{-q_n /2}].
 \end{align*}
 Since, as noted above, $u$ is bounded away from zero, we see from this calculation that for sufficiently large $\beta$, $\Phi_+ = \beta u$ is a supersolution for \eqref{LichneroOp2}. It follows that the Lichnerowicz equation admits a solution with the stated regularity and asymptotic behavior. 
 
 To prove this result for alternate asymptotic limits for $\varphi$ and for $\psi$, we argue as follows. Starting with a solution $\varphi$ to the Lichnerowicz equation (i.e., going $\Rightarrow$) we note that for any choices of the asymptotic limits $c_1$ and $c_2$, there is a constant $\hat \beta \ge 1$, for which $\hat \psi_+ = \hat \beta \varphi$ is a supersolution for the conformal transformation equation \eqref{eq:conformalFactor} with the prescribed limits. Using $\hat \psi_- =0$ as above, we can apply  Lemma \ref{SubSup} and thereby obtain a solution $\psi $ of \eqref{eq:conformalFactor}. We argue as above using Lemma \ref{MaxwellMax} to show that this solution is positive everywhere. 
 
Starting instead with the presumed existence of the conformal transformation $\Phi$ with appropriate limits (i.e., going $\Leftarrow$), we again verify that with appropriate scalings of the subsolution and of the supersolution, we prove the existence of a solution of the Lichnerowicz equation with the desired asymptotic limits.   
 \end{proof}

Given a function $\tau^2\in L^p_{\delta-2}$, for which $W^{2,p}_{\delta}$-AE metrics does there exist a conformal transformation $\psi$ such that the scalar curvature of the transformed metric takes the value $-\kappa_n  \tau^2$? As noted in the first point of Lemma \ref{Yamabe}, if $\lambda \in \mathcal Y ^+$, then such a transformation always exists. While this condition is sufficient, it is not necessary. As shown in \cite{DM15}, one has the following:

\begin{lem} [Yamabe Classes and the Curvature Criterion]
\label{Yam&CurvCrit}
Suppose $\tau^2 \in L^p_{\delta-2}$. There exists a conformal transformation function $\psi>0$, $\psi-1 \in W^{2,p}_{\delta}$
such that $R_{\psi^{q_n -2} \lambda} = - \kappa_n  \tau^2$ if and only if one of the following
conditions holds:
\begin{itemize}
\item $\lambda \in \mathcal Y ^+$.

\item $\lambda \in \mathcal Y^0$ and $\tau^2 >0$ on a set of positive measure.

\item $\lambda \in \mathcal Y^-$ and $\tau^2 =0$ on a set of sufficiently small measure (in a sense described
in \cite{DM15}). 
\end{itemize} 
\end{lem}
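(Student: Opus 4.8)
We would prove the equivalence case by case according to the Yamabe class of $\lambda$, exploiting the fact (recorded after \eqref{YamInv}) that $\mathcal Y^+,\mathcal Y^0,\mathcal Y^-$ partition the AE geometries on $\Sigma^n$ and that class membership is a conformal invariant. By the conformal transformation law for the scalar curvature, finding $\psi>0$ with $\psi-1\in W^{2,p}_\delta$ and $R_{\psi^{q_n-2}\lambda}=-\kappa_n\tau^2$ is equivalent to finding a positive solution (with the same asymptotics) of
\begin{equation}
-\alpha_n\Delta_\lambda\psi + R_\lambda\psi = -\kappa_n\tau^2\,\psi^{q_n-1},
\end{equation}
that is, to a prescribed--nonpositive--scalar--curvature problem, since $-\kappa_n\tau^2\le 0$ everywhere. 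This is exactly the right-hand condition appearing in Theorem \ref{LichIff}, and in two of the three classes it is settled by the soft input of Lemma \ref{Yamabe}.

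First I would dispose of $\lambda\in\mathcal Y^+$ and $\lambda\in\mathcal Y^0$. If $\lambda\in\mathcal Y^+$, the first bullet of Lemma \ref{Yamabe} directly produces a conformal deformation realizing the prescribed scalar curvature $\mathcal R=-\kappa_n\tau^2\le 0$, with $\psi-1\in W^{2,p}_\delta$ following from elliptic regularity once $\tau^2\in L^p_{\delta-2}$; since membership in $\mathcal Y^+$ places no side condition on $\tau^2$, there is nothing further to check. If $\lambda\in\mathcal Y^0$, the second bullet of Lemma \ref{Yamabe} supplies the deformation exactly when $-\kappa_n\tau^2\not\equiv 0$, i.e.\ when $\tau^2>0$ on a set of positive measure; conversely, if $\lambda\in\mathcal Y^0$ but $\tau^2=0$ almost everywhere, a solution $\psi$ would conformally deform $\lambda$ to a scalar-flat AE metric, forcing $\lambda\in\mathcal Y^+$ by the fourth bullet of Lemma \ref{Yamabe} --- a contradiction. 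Thus for $\mathcal Y^+$ and $\mathcal Y^0$ the stated equivalence is an immediate consequence of Lemma \ref{Yamabe}.

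The substantive content, and the main obstacle, is the case $\lambda\in\mathcal Y^-$, where Lemma \ref{Yamabe} asserts only that \emph{some} nonpositive nontrivial $\mathcal R$ is unattainable and provides no threshold. For the ``if'' direction I would run the sub- and supersolution argument of Theorem \ref{SubSup} on the displayed equation: after conformally normalizing $\lambda$ to a convenient representative (permissible by conformal invariance of the Yamabe class, using \cite{DM15}), a supersolution is obtained as a large multiple of the solution of an auxiliary linear equation, just as in the proof of Theorem \ref{LichIff}, while a subsolution is taken to be a small positive function equal to a small constant where $\tau^2$ is bounded below and suitably interpolated across the set $\{\tau^2=0\}$; it is precisely the construction of a \emph{positive} subsolution that forces a smallness (capacity/measure) condition on $\{\tau^2=0\}$. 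For the ``only if'' direction one uses a variational obstruction built from a test function making the Yamabe quotient \eqref{YamInv} negative --- available because $\lambda\in\mathcal Y^-$ --- to show that if the zero set is too large no positive solution can exist. The sharp quantitative form of this dichotomy, namely the precise ``sufficiently small measure'' condition, is the asymptotically Euclidean result established in \cite{DM15}, and I would invoke it here; reproducing its proof is the genuinely delicate point, since it balances the negativity available from $R_\lambda$ against the degeneracy of $\tau^2$ and is not a formal consequence of the softer assertions collected in Lemma \ref{Yamabe}.
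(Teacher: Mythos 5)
Your proposal is correct and is essentially the paper's treatment: the paper offers no proof of Lemma \ref{Yam&CurvCrit} at all, presenting it as a result established in \cite{DM15} (``As shown in \cite{DM15}\ldots''), which is exactly where you ultimately place the substantive $\mathcal Y^-$ case with its ``sufficiently small'' zero-set threshold. Your additional soft reductions of the $\mathcal Y^+$ and $\mathcal Y^0$ cases to the bullets of Lemma \ref{Yamabe} (including the scalar-flat contradiction via the fourth bullet when $\tau^2=0$ a.e.) are sound and merely make explicit what the citation subsumes.
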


It follows immediately from this lemma that for any choice of AE seed data $(\Sigma^n; \lambda, N, \sigma, \tau, r, j)$, so long as $\lambda \in \mathcal Y ^+$, the Lichnerowicz equation \eqref{LichneroEqn} (with $f^2 = |\sigma+\frac{1}{2N} L_\lambda W|^2$) admits a solution. This result has long been known, and is directly relevant for the conformal method for maximal ($\tau\equiv0)$ seed data. This lemma also tells us that for any choice of AE seed data with $\lambda \in \mathcal Y ^0$ and with $\tau\not\equiv 0$, the Lichnerowicz equation admits a solution. This consequence, which is new, is not directly relevant to the conformal method, since AE seed data with $\tau\not\equiv 0$ must be non CMC, in which case the full coupled system \eqref{origLich}-\eqref{origVect} must be solved.

Although the solvability of the Lichnerowicz equation for a given set of seed data is not the full story, it \emph{does} serve as an admissibility (or necessary) condition for the conformal method to work for that set of data. This is because Theorem \ref{LichIff} provides conditions on the seed data for \eqref{LichneroEqn} to admit a solution which are independent of the function $f^2$. If these  conditions are not met, then regardless of $\sigma$ and regardless of $LW$, the system \eqref{origLich}-\eqref{origVect} cannot admit a solution. If these conditions are met, then there may or may not be a solution to the conformal constraint equations. We summarize this discussion by stating the following corollary:

\begin{cor} [Admissibility Condition for AE Seed Data Sets]
\label{Admiss}
Let $(\Sigma^n; \lambda, N, \sigma, \tau, r, j)$ be a set of asymptotically Euclidean seed data, with regularity conditions as stated in Theorem \ref{LichIff} (except with $f^2$ replaced by  $|\sigma|^2_\lambda$, and with the added conditions that $N -1 \in W^{k,p}_\delta$ and  $j \in W^{k-2, p}_{\delta-2}$ ). 
The existence of a (suitably regular) conformal transformation of the metric $\lambda$ to a metric $\psi^{q_n -2} \lambda$ with scalar curvature $R_{\psi^{q_n -2}\lambda} = -\kappa_n  \tau^2$ is a necessary condition for the conformal constraint equations \eqref{origLich}-\eqref{origVect} to possibly admit a solution. Correspondingly, if none of the conditions listed in Lemma \ref{Yam&CurvCrit} are satisfied by the seed data, then there is no solution to the conformal constraint equations.
\end{cor}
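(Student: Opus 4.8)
The plan is to deduce this corollary directly from the Curvature Criterion Theorem (Theorem \ref{LichIff}) together with Lemma \ref{Yam&CurvCrit}, using the observation that the hypothesized scalar-curvature condition is precisely the solvability criterion for the Lichnerowicz equation, and that solvability of the Lichnerowicz equation is \emph{implied} by solvability of the full system.

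First I would argue the implication ``solution of \eqref{origLich}-\eqref{origVect} exists $\Rightarrow$ Lichnerowicz equation \eqref{LichneroEqn} is solvable.'' Suppose $(\varphi, W)$ solves the conformal constraint equations for the given seed data $(\Sigma^n;\lambda,N,\sigma,\tau,r,j)$. Set $f^2 := \left|\sigma + \tfrac{1}{2N} L_\lambda W\right|_\lambda^2$; by Lemma \ref{VectLap} the vector field $W$ is in $W^{k,p}_\delta$ (taking $\omega = \kappa_n\varphi^{q_n}d\tau + j$), so $L_\lambda W$ has the regularity needed for $f^2$ to land in $W^{k-2,p}_{\delta-2}$, using Lemma \ref{multiplication} and the assumed regularity on $\sigma$. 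With this choice of $f^2$, equation \eqref{origLich} is \emph{exactly} the Lichnerowicz equation \eqref{LichneroEqn}, and the very same $\varphi$ (which has $\varphi - 1 \in W^{k,p}_\delta$ by hypothesis) solves it. Hence the Lichnerowicz equation \eqref{LichneroEqn} admits a positive solution with the stipulated asymptotics for this particular $f^2$.

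Next I would invoke Theorem \ref{LichIff}: since \eqref{LichneroEqn} has a positive solution $\varphi$ with $\varphi - 1 \in W^{k,p}_\delta$, there exists a conformal factor $\psi > 0$ with $\psi - 1 \in W^{k,p}_\delta$ such that $R_{\psi^{q_n-2}\lambda} = -\kappa_n\tau^2$. Note that Theorem \ref{LichIff} applies regardless of the specific $f^2$, so in particular it applies to the one constructed above. This establishes the claimed necessary condition: \emph{if} the conformal constraints are solvable, \emph{then} such a $\psi$ exists. Contrapositively, if no such $\psi$ exists, the conformal constraints have no solution. Finally, to get the last sentence of the corollary, I would combine this with Lemma \ref{Yam&CurvCrit}, which characterizes exactly when such a $\psi$ exists in terms of the Yamabe class of $\lambda$ and the measure-theoretic properties of the zero set of $\tau^2$: if none of the three listed conditions holds, then no admissible $\psi$ exists, and therefore \eqref{origLich}-\eqref{origVect} has no solution.

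The only mild technical point — and the closest thing to an obstacle — is checking that the regularity of $f^2 = |\sigma + \tfrac{1}{2N}L_\lambda W|_\lambda^2$ genuinely meets the hypotheses of Theorem \ref{LichIff} (namely $f^2 \in W^{k-2,p}_{\delta-2}$), which requires tracking the weights through the Sobolev multiplication lemma (Lemma \ref{multiplication}); the $\epsilon$-loss there is harmless since $\sigma, L_\lambda W \in W^{k-1,p}_{\delta-1}$ gives products in $W^{k-2,p}_{2\delta-2+\epsilon} \subset W^{k-2,p}_{\delta-2}$ for $\delta < 0$ and $\epsilon$ small, and $\tfrac{1}{2N}$ is a bounded positive factor with $N - 1 \in W^{k,p}_\delta$. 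Beyond that bookkeeping, the corollary is essentially a formal consequence of the two cited results, so the write-up should be short.
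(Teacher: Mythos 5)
Your proposal is correct and follows essentially the same route as the paper, which justifies the corollary in the discussion immediately preceding it: a solution $(\varphi,W)$ of \eqref{origLich}-\eqref{origVect} yields a solution of the Lichnerowicz equation \eqref{LichneroEqn} with $f^2=\left|\sigma+\tfrac{1}{2N}L_\lambda W\right|_\lambda^2$, and since the criterion in Theorem \ref{LichIff} is independent of $f^2$, the conformal transformation to scalar curvature $-\kappa_n\tau^2$ must exist, with the final sentence following from Lemma \ref{Yam&CurvCrit}. Your added bookkeeping on the regularity of $f^2$ via Lemmas \ref{VectLap} and \ref{multiplication} is a harmless (and welcome) elaboration of what the paper leaves implicit.
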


We use this admissibility condition in the next section to show that there are AE seed data sets $(\Sigma^n;\lambda_0, N_0, \sigma_0, \tau_0, r_0, j_0)$ for which the conformal constraint equations do not admit solutions, and proceed to study the behavior of solutions of Eqns.\ \eqref{origLich}-\eqref{origVect} for sequences of seed data sets which approach $(\Sigma^n; \lambda_0, N_0, \sigma_0, \tau_0, r_0, j_0)$. Before doing this, we state a uniqueness result for solutions of the Lichnerowicz equation with AE seed data.

\begin{prop}[Uniqueness of Solutions to the Lichnerowicz Equation]
\label{uniqueness}
Let $(\Sigma^n; \lambda)$ be $W^{k,p}_\delta$-AE  with $k> \frac{n}{p}, k\ge 2$ and $\delta <0$, and let $r\ge 0, f^2$ and $\tau^2$ be functions contained in $W^{k-2,p}_{\delta-2}$. If $\phi_1$ and $\phi_2$ are both (positive) solutions of the Lichnerowicz equation \eqref{LichneroEqn} and if the asymptotic limits of both $\phi_1$ and $\phi_2$ are the same, then $\phi_1=\phi_2$.
\end{prop}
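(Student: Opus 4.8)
The plan is to use a standard maximum-principle argument exploiting the monotonicity of the nonlinearity in the Lichnerowicz equation. Write the equation in the operator form $\mathcal{L}(\Phi) = -\alpha_n \Delta_\lambda \Phi + R_\lambda \Phi + \kappa_n \tau^2 \Phi^{q_n-1} - f^2 \Phi^{-q_n-1} - r\Phi^{-q_n/2} = 0$, and suppose $\phi_1$ and $\phi_2$ are two positive solutions with the same asymptotic limit, say both $\phi_i - c \in W^{k,p}_\delta$. First I would note that since $\phi_1, \phi_2$ are bounded above and below (positive, with the same positive limit $c$ at infinity, and continuous by Sobolev embedding), there are constants $0 < m \le M$ with $m \le \phi_i \le M$ on all of $\Sigma^n$. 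Set $w := \phi_1 - \phi_2$, which lies in $W^{k,p}_\delta$ and hence, for any $\eta > 0$, satisfies $|w| \le \eta$ outside a sufficiently large ball.

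Next I would subtract the two equations $\mathcal{L}(\phi_1) = 0 = \mathcal{L}(\phi_2)$. Each nonlinear term can be written, via the fundamental theorem of calculus, as $w$ times an integral of the relevant derivative along the segment from $\phi_2$ to $\phi_1$: for instance $\phi_1^{q_n-1} - \phi_2^{q_n-1} = w\int_0^1 (q_n-1)(\phi_2 + tw)^{q_n-2}\,dt$, and similarly for the negative-power terms. The key point is the sign: since $q_n - 1 > 1 > 0$, the $\kappa_n\tau^2$ term contributes a \emph{nonnegative} coefficient times $w$; since the exponents $-q_n-1$ and $-q_n/2$ are negative, the terms $-f^2\Phi^{-q_n-1}$ and $-r\Phi^{-q_n/2}$ are \emph{increasing} in $\Phi$, so after moving them to the left they too contribute nonnegative coefficients times $w$ (here one uses $f^2 \ge 0$ and $r \ge 0$). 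Collecting everything, $w$ satisfies a linear equation of the form $-\alpha_n \Delta_\lambda w + \tilde{V} w = 0$, where $\tilde{V} = R_\lambda + V_0$ with $V_0 \ge 0$ the sum of the three integral coefficients above. Because $\phi_1, \phi_2$ are pinched between $m$ and $M$, the integral coefficients are bounded functions supported where $\tau^2, f^2, r$ are; combining with the regularity hypotheses ($\tau^2, f^2, r \in W^{k-2,p}_{\delta-2}$) and Lemma~\ref{multiplication}, one checks $V_0 \in W^{k-2,p}_{\delta-2}$, so $\tilde V \in W^{k-2,p}_{\delta-2}$ as required by the maximum principle lemmas.

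The obstacle is that $\tilde V$ need not be nonnegative, since $R_\lambda$ has no sign — so I cannot directly invoke Lemma~\ref{MaxPrinciple1}. To get around this I would use the existence of a positive solution itself as a barrier: since $\phi_1 > 0$ solves $\mathcal{L}(\phi_1) = 0$, dividing the linearized equation through by $\phi_1$ (equivalently, performing the conformal change of variable $w = \phi_1 v$) converts $-\alpha_n\Delta_\lambda w + \tilde V w = 0$ into an equation for $v$ in which the zeroth-order coefficient \emph{is} nonnegative, because the $R_\lambda$ piece gets absorbed by the equation satisfied by $\phi_1$. Concretely, from $-\alpha_n\Delta_\lambda \phi_1 + (R_\lambda + A_1)\phi_1 = 0$ for a suitable nonnegative $A_1$ (the non-CMC and matter terms evaluated at $\phi_1$, again using the sign of the exponents), one derives that $v = w/\phi_1$ satisfies an equation $-\alpha_n \phi_1^2 \Delta_\lambda v - 2\alpha_n \phi_1 \nabla\phi_1 \cdot \nabla v + (\text{nonneg.})\,\phi_1 v = 0$, i.e.\ a second-order elliptic equation with no zeroth-order term of bad sign; alternatively, and more cleanly, I would apply the conformal covariance of the Lichnerowicz equation (noted in the proof of Theorem~\ref{LichIff}) to reduce to the case $R_\lambda = -\kappa_n\tau^2 \le 0$ via the conformal factor $\phi_1$, after which $\tilde V \ge 0$ automatically and Lemma~\ref{MaxPrinciple1} applies directly. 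In that gauge, applying the maximum principle to $w$ (which satisfies $-\alpha_n\Delta_\lambda w + \tilde V w \ge 0$ and $\ge -\eta$ outside a large ball for every $\eta > 0$) gives $w \ge 0$, hence $\phi_1 \ge \phi_2$; by symmetry $\phi_2 \ge \phi_1$, so $\phi_1 = \phi_2$. I would present the conformal-reduction version as the main argument since it requires the least bookkeeping, and remark that the decay hypothesis $\phi_i - c \in W^{k,p}_\delta$ is exactly what licenses the "$\ge -\eta$ outside a large ball" hypothesis of Lemma~\ref{MaxPrinciple1}.
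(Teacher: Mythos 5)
Your proposal follows essentially the same route as the paper: the paper also divides by one solution, working with $u=\phi_2/\phi_1$ and the conformally changed metric $\phi_1^{q_n-2}\lambda$ so that the $R_\lambda$ term is absorbed, derives $-\Delta_{\phi_1^{q_n-2}\lambda}(u-1)+\Xi\,(u-1)=0$ with $\Xi\ge0$, and then concludes from the injectivity of $-\Delta+\Xi$ (Lemma \ref{AELinearExistence}) rather than by running the maximum principle (Lemma \ref{MaxPrinciple1}) in both directions; either endgame works. Two intermediate assertions in your write-up are inaccurate, though, and should be repaired. First, the conformal factor $\phi_1$ does not produce $R_{\hat\lambda}=-\kappa_n\tau^2$: using \eqref{LichneroEqn} one finds $R_{\phi_1^{q_n-2}\lambda}=-\kappa_n\tau^2+f^2\phi_1^{-2q_n}+r\,\phi_1^{1-3q_n/2}$. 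The extra terms are harmless (they are nonnegative), but because of the residual $-\kappa_n\tau^2$ the nonnegativity of the zeroth-order coefficient is not ``automatic'': after subtracting the two transformed equations you must check that the difference-quotient coefficient coming from the $\kappa_n\tau^2\Phi^{q_n-1}$ term dominates it, i.e.\ that $(u^{q_n-1}-1)/(u-1)\ge 1$ for all $u>0$. This is elementary, but only because one of the two solutions has been normalized to the constant $1$; the analogous pointwise bound can fail when both functions lie below $1$, so this check belongs in the proof rather than being taken for granted. Second, in your fallback computation the function $A_1=\kappa_n\tau^2\phi_1^{q_n-2}-f^2\phi_1^{-q_n-2}-r\,\phi_1^{-q_n/2-1}$ is not nonnegative; what saves the substitution $w=\phi_1 v$ is not the sign of $A_1$ but the combined zeroth-order coefficient $\kappa_n\tau^2\bigl(C_1\phi_1-\phi_1^{q_n-1}\bigr)+f^2(\cdots)+r(\cdots)$, where $C_1$ is the difference quotient of $t\mapsto t^{q_n-1}$ between $\phi_2$ and $\phi_1$; its nonnegativity follows from the inequality $(a^{s}-b^{s})/(a-b)\ge a^{s-1}$ for $a,b>0$ and $s\ge1$. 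With these two points made precise, your argument is complete and coincides in substance with the paper's proof of Proposition \ref{uniqueness}.
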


\begin{proof}
The idea of the proof follows that given in Theorem 8.3 of \cite{CBIP06}:  Recalling the conformal transformation formula for scalar curvature, regardless of what the conformal factors $\phi_1$ and $\phi_2$ are, we have 
\begin{align}
\label{conftrans1}
\alpha_n \Delta_\lambda \phi_1 =& R_\lambda \phi_1 -R_{\phi_1^{q_n -2}\lambda} \phi_1^{q_n -1},\\
\label{conftransf2}
\alpha_n \Delta_\lambda \phi_2 =& R_\lambda \phi_2 -R_{\phi_2^{q_n -2}\lambda} \phi_2^{q_n -1},\\
\label{conftransf12}
\alpha_n \Delta_{(\phi_1)^{q_n -2}\lambda} \frac{\phi_2}{\phi_1} =& R_{\phi_1^{q_n -2}\lambda}  \frac{\phi_2}{\phi_1} -R_{\phi_2^{q_n -2}\lambda} (\frac{\phi_2}{\phi_1})^{q_n -1}.
\end{align}
Solving the first of these equations for $R_{\phi_1^{q_n -2}\lambda}$, and using the assumption that $\phi_1$ is a solution of the Lichnerowicz equation, we have 
\begin{align*}
R_{\phi_1^{q_n -2}\lambda} =& \Big(-\alpha_n \Delta_\lambda\phi_1 + R_\lambda \phi_1\Big)\phi_1^{1-q_n }\\
=& \Big( f^2 \phi_1^{-q_n -1} +r \phi_1^{\frac{q_n }{2}} -\kappa_n  \tau^2 \phi_1^{q_n -1}\Big) \phi_1^{1-q_n },
\end{align*}
along with an analogous equation for $R_{\phi_2^{q_n -2}\lambda}$. If we now substitute these formulas for 
$R_{\phi_1^{q_n -2}\lambda}$ and $R_{\phi_2^{q_n -2}\lambda}$ into Eqn. \eqref{conftransf12}, we obtain 
\begin{equation}
\label{phi1phi2}
-\Delta_{\phi_1^{q_n -2} \lambda} (u-1) + \Xi(\phi_1,\phi_2) (u-1) = 0,
\end{equation}
where $u:=\frac{\phi_2}{\phi_1}$, and where $\Xi(\phi_1, \phi_2)\in L^{p}_{\delta-2}$ is a positive expression involving the known functions $\phi_1$, $\phi_2$, $f^2$, $r$, and the metric.  Since $-\Delta_\lambda +\Xi$ is an isomorphism (see Lemma \ref{AELinearExistence}), and
thus injective, we have  that $u -1\equiv 0$, which implies that  $\phi_1 \equiv\phi_2$, so we have uniqueness. We note that the assumed asymptotic value for the solutions $\phi_1$ and $\phi_2$ do not affect the proof, so long as they are the same.
\end{proof}

\section{AE Seed Data for which the Conformal Method Admits No Solutions, and Blow Up Behavior for Nearby Data}
\label{NexBlow}

While the Admissibility Corollary \ref{Admiss} stated above does not stipulate for which AE seed data sets the conformal constraint equations admit a solution, it does stipulate for which such data sets these equations cannot be solved. Combining it with Lemma \ref{Yamabe}, we obtain the following:

\begin{cor} [Nonexistence for Maximal AE Seed Data with Yamabe Nonpositive Metric]
\label{NonCor}
Let $(\Sigma^n; \lambda, N, \sigma, \tau, r, j)$ be a set of asymptotically Euclidean seed data, with regularity conditions as stated in Theorem \ref{LichIff} and Corollary \ref{Admiss}. If the seed data is maximal (i.e., $\tau \equiv 0$) and if $\lambda \in \mathcal{Y}^0$ or $\lambda \in \mathcal{Y}^-$, then the conformal constraint equations \eqref{origLich}-\eqref{origVect} do not admit a solution. Seed data sets satisfying these conditions do exist. 
\end{cor}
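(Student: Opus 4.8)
The plan is to break Corollary \ref{NonCor} into two parts: the nonexistence statement, which follows almost immediately from the machinery already assembled, and the existence of seed data sets with the stated properties, which requires a genuine construction.

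For the nonexistence statement, I would argue as follows. Suppose $(\Sigma^n; \lambda, N, \sigma, \tau, r, j)$ is maximal, so $\tau \equiv 0$, and suppose $\lambda \in \mathcal{Y}^0$ or $\lambda \in \mathcal{Y}^-$. By Corollary \ref{Admiss}, a necessary condition for the conformal constraint equations to admit a solution is the existence of a suitably regular conformal factor $\psi > 0$ with $\psi - 1 \in W^{k,p}_\delta$ such that $R_{\psi^{q_n-2}\lambda} = -\kappa_n \tau^2 = 0$. In other words, $\lambda$ would have to be conformally deformable (within the AE class, with the prescribed falloff) to a scalar-flat metric. But Lemma \ref{Yamabe} tells us precisely that a metric in $\mathcal{Y}^0$ can be conformally deformed to a metric of scalar curvature $\mathcal R$ for every $\mathcal R \le 0$ \emph{except} $\mathcal R \equiv 0$, and a metric in $\mathcal{Y}^-$ fails to be deformable to some nonpositive $\mathcal R \not\equiv 0$ — and in neither case is deformation to $\mathcal R \equiv 0$ possible (indeed, the fourth bullet of Lemma \ref{Yamabe} says that scalar-flatness, or even nonnegative scalar curvature, forces $\lambda \in \mathcal{Y}^+$). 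Hence the necessary condition of Corollary \ref{Admiss} fails, and no solution can exist. Equivalently, one can just invoke Lemma \ref{Yam&CurvCrit}: with $\tau^2 \equiv 0$, none of its three alternatives (which require $\lambda \in \mathcal{Y}^+$, or $\mathcal{Y}^0$ with $\tau^2 > 0$ on a positive-measure set, or $\mathcal{Y}^-$ with $\tau^2 = 0$ on a small set) is met, so the curvature criterion has no solution, and then Corollary \ref{Admiss} finishes the argument.

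For the existence of such seed data, it suffices to exhibit, on some manifold $\Sigma^n$, an AE metric $\lambda$ in $\mathcal{Y}^0$ (and separately one in $\mathcal{Y}^-$) with the required regularity $\lambda - e \in W^{k,p}_\delta$, together with a lapse $N$ (take $N \equiv 1/2$, or any positive function with $N - 1 \in W^{k,p}_\delta$), a transverse-traceless $\sigma$ (take $\sigma \equiv 0$, which is trivially divergence- and trace-free), $\tau \equiv 0$, $r \equiv 0$, and $j \equiv 0$; all the algebraic side conditions ($r \ge 0$, $r^2 \ge |j|^2_\lambda$, $\sigma$ TT) hold trivially. So the entire task reduces to producing AE metrics in the nonpositive Yamabe classes. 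For $\mathcal{Y}^-$ one can take, for instance, a suitable AE metric on $\R^n$ whose scalar curvature is negative and sufficiently large in magnitude on a big region — concretely, one can start from a compact Yamabe-negative manifold, remove a point, and glue on an AE end (a conformal-compactification argument, using the last bullet of Lemma \ref{Yamabe}, which says $(\Sigma^n;\lambda) \in \mathcal{Y}^-$ iff it admits a conformal compactification to a closed manifold of the same Yamabe class). For $\mathcal{Y}^0$ one does the same starting from a Yamabe-null closed manifold. One should cite \cite{DM15} (and/or \cite{Maxwell05b}) for the existence and basic properties of AE metrics in each Yamabe class, since those references establish the nonemptiness of all three AE Yamabe classes on a given manifold; this makes the construction a short remark rather than a from-scratch argument.

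The main obstacle is not the nonexistence half — that is essentially a bookkeeping combination of Corollaries \ref{Admiss}, Lemma \ref{Yamabe}, and Lemma \ref{Yam&CurvCrit}. The real content is making the existence claim airtight: one must confirm that the AE Yamabe classes $\mathcal{Y}^0$ and $\mathcal{Y}^-$ are genuinely nonempty for the relevant regularity class $W^{k,p}_\delta$ with $k > n/p$, $k \ge 2$, $\delta < 0$, and that the conformal-compactification correspondence in Lemma \ref{Yamabe} can be run in the direction we need (building an AE metric in a prescribed nonpositive class from a closed-manifold model). I expect to handle this by appealing directly to the constructions in \cite{DM15}, noting that the scalar curvature can be arranged to lie in $W^{k-2,p}_{\delta-2}$ so that all hypotheses of Theorem \ref{LichIff} and Corollary \ref{Admiss} are met, and then simply observing that the trivial choices $\sigma = 0$, $r = 0$, $j = 0$, $\tau = 0$ complete a valid maximal AE seed data set of the desired type.
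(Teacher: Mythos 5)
Your proposal is correct and follows essentially the same route as the paper: nonexistence via Corollary \ref{Admiss} combined with the AE Yamabe classification of Lemma \ref{Yamabe} (a conformal deformation to zero scalar curvature forces $\lambda \in \mathcal{Y}^+$), and existence of such seed data via the conformal decompactification of closed Yamabe-nonpositive manifolds from \cite{DM15} together with trivial choices of the remaining seed fields. The only cosmetic difference is that you set $\sigma \equiv 0$ outright, whereas the paper remarks that the York decomposition supplies general transverse-traceless $\sigma$; both suffice for the bare existence claim.
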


\begin{proof} 
The Admissibility Corollary states that a solution to the conformal constraint equations can exist for a given set of seed data only if the metric can be conformally transformed to one with scalar curvature equal to $-\kappa_n  \tau^2$. For maximal seed data, this means that the metric must admit a conformal transformation to a metric with zero scalar curvature. Since Lemma \ref{Yamabe} says that only Yamabe positive metrics are conformally related to zero scalar curvature geometries, the result follows. 

To verify  that there are in fact seed data sets with Yamabe nonpositive metrics, we first note from \cite{DM15} (as is implied by the last point of Lemma \ref{Yamabe}), that for any closed geometry $(\Sigma^n; \lambda)$, there exists a conformal decompactification (i.e., a blow up at some point $p\in \Sigma^n$) which results in an AE geometry $(\tilde \Sigma^n, \tilde \lambda)$ whose Yamabe class is identical to that of $(\Sigma^n, \lambda)$. Since for ``most" closed manifolds all metrics are contained in $\mathcal{Y}^-$, it follows that one readily constructs Yamabe negative AE geometries. Since the map from general symmetric 2-tensors to those which are divergence-free and trace-tree can always be carried through on negative AE geometries (using the ``York decomposition"), it  follows that AE seed data sets (maximal or not)  with negative Yamabe metrics are readily obtained. AE seed data sets with Yamabe zero metrics are similarly readily obtained; we note the results of Friedrich \cite{Friedrich11} as a related approach to obtaining such sets.

\end{proof}

As an immediate consequence of Corollary \ref{NonCor},  one finds that there exist no AE initial data sets 
$(\Sigma^n; \gamma, K, \rho, J)$ which satisfy the Einstein constraint equations \eqref{HamConstr}-\eqref{MomConstr}, which are maximal, and which have either $\gamma \in \mathcal{Y}^0$ or $\gamma \in \mathcal{Y}^-$. As noted in \cite{DM15}, one readily sees that this result directly follows from the Einstein constraint equations \eqref{HamConstr}-\eqref{MomConstr}, along with the statement (see Lemma \ref{Yamabe}) that if an AE metric has nonnegative scalar curvature, then it must be Yamabe positive. 

Corollary \ref{NonCor} tells us that there are many sets of AE seed data for which the conformal constraint equations admit no solutions. Labeling one such set as $\hat {\mathcal{S}}:=(\Sigma^n, \hat \lambda, \hat N, \hat \sigma,\hat \tau, \hat r, \hat j)$,  we may consider a sequence 
$\mathcal{S_\ell}:= (\Sigma^n, \lambda_\ell, N_\ell, \sigma_\ell,  \tau_\ell, r_\ell, j_\ell)$ 
of seed data sets such that for each element of the sequence $\mathcal{S_\ell}$ there \emph{is} a solution $(\phi_\ell, W_\ell)$, and such that the sequence $\mathcal{S_\ell}$ converges to 
$\hat {\mathcal{S}}$. We may then ask what the behavior of the sequence of solutions  $(\phi_\ell, W_\ell)$ might be. We first prove a result which shows that the solution  sequence cannot be bounded:

\begin{thm}[Unboundedness]  
\label{Unbdd}
Suppose that  $\hat {\mathcal{S}}:=(\Sigma^n, \hat \lambda, \hat N, \hat \sigma,\hat \tau, \hat r, \hat j)$ is a set of seed data satisfying the regularity conditions of Corollary \ref{Admiss}, and suppose that for this data, the conformal constraint equations admit no solution. Suppose that $\mathcal{S_\ell}:= (\Sigma^n, \lambda_\ell, N_\ell, \sigma_\ell,  \tau_\ell, r_\ell, j_\ell)$ is a sequence of seed data sets such that each element $\mathcal{S_\ell}$ of the sequence satisfies the regularity conditions of Corollary \ref{Admiss} and for each element the conformal constraint equations admits a solution $(\phi_\ell, W_\ell)$, and suppose as well that $\mathcal{S_\ell}$ converges in $W^{1,p}_{\delta-1}$ to $\hat {\mathcal{S}}$. There do not exist constants $a$ and $b$ such that 
\begin{equation} 
\label{lims}
0<a\leq \phi_\ell \leq b
\end{equation} 
for all $\ell$.
\end{thm}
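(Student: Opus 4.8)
The plan is to argue by contradiction: assume the bounds \eqref{lims} hold uniformly in $\ell$, and then extract a solution of the conformal constraint equations for the limiting seed data $\hat{\mathcal{S}}$, contradicting the hypothesis that none exists. The strategy is essentially a compactness-plus-elliptic-regularity argument, using the convergence $\mathcal{S}_\ell \to \hat{\mathcal{S}}$ in $W^{1,p}_{\delta-1}$ together with the uniform pointwise bounds on $\phi_\ell$ to get uniform higher-order bounds on both $\phi_\ell$ and $W_\ell$, and then passing to a limit.

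First I would handle the vector equation. Since $\tau_\ell \to \hat\tau$ and $j_\ell \to \hat j$ in the appropriate weighted spaces, and since $\phi_\ell$ is uniformly bounded in $L^\infty$ (and, once we bootstrap, in better spaces), the source $\omega_\ell := \kappa_n \varphi_\ell^{q_n} d\tau_\ell + j_\ell$ is uniformly bounded in $W^{k-2,p}_{\delta-2}$; by Lemma \ref{VectLap} the solutions $W_\ell$ are then uniformly bounded in $W^{k,p}_\delta$. Next, feeding this back into the Lichnerowicz-type equation \eqref{origLich}: the coefficient $|\sigma_\ell + \frac{1}{2N_\ell}L_{\lambda_\ell}W_\ell|^2_{\lambda_\ell}$ is uniformly bounded in $W^{k-2,p}_{\delta-2}$, and with $a \le \phi_\ell \le b$ the negative powers $\phi_\ell^{-q_n-1}$ and $\phi_\ell^{-q_n/2}$ are themselves uniformly bounded between positive constants, so the full right-hand side of \eqref{origLich}, viewed as $R_{\lambda_\ell}\phi_\ell + (\text{lower order})$, lies in a bounded subset of $W^{k-2,p}_{\delta-2}$. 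Writing the equation in the form $(-\alpha_n\Delta_{\lambda_\ell} + V_\ell)\phi_\ell = (\text{bounded source})$ for a suitable nonnegative $V_\ell$ (e.g. absorbing the $\kappa_n\tau_\ell^2\phi_\ell^{q_n-2}$ term, or simply treating $\phi_\ell - 1$ and using the elliptic estimate of Lemma \ref{AELinearExistence}), the uniform elliptic estimate gives $\|\phi_\ell - 1\|_{W^{k,p}_\delta} \le C$ uniformly.

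Then I would pass to the limit: by the compact embeddings of Lemma \ref{embeddings}, a subsequence of $(\phi_\ell, W_\ell)$ converges, say in $C^{0,\alpha}_{\delta'} \times C^{1,\alpha}_{\delta'}$ (and weakly in $W^{k,p}_\delta$), to a limit $(\hat\phi, \hat W)$ with $\hat\phi - 1, \hat W \in W^{k,p}_\delta$. The pointwise bound $a \le \phi_\ell \le b$ survives in the limit, so $\hat\phi \ge a > 0$ stays positive and bounded — this is exactly where the contradiction hypothesis \eqref{lims} is used, since without a positive lower bound the negative-power terms could blow up and the limit need not solve the equation. Since all the nonlinear terms are continuous in $(\varphi, W)$ on the region $\{a \le \varphi \le b\}$ and the seed data converge, one checks that $(\hat\phi, \hat W)$ satisfies \eqref{origLich}-\eqref{origVect} with seed data $\hat{\mathcal{S}}$; a bootstrap recovers the full regularity. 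This contradicts the assumed nonexistence of solutions for $\hat{\mathcal{S}}$, completing the proof.

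The main obstacle I anticipate is organizing the bootstrap cleanly so that the negative powers of $\phi_\ell$ are genuinely controlled: this is where the two-sided bound \eqref{lims} is indispensable and must be invoked carefully, since $\varphi \mapsto \varphi^{-q_n-1}$ is not even continuous up to $\varphi = 0$. A secondary technical point is verifying that the convergence $\mathcal{S}_\ell \to \hat{\mathcal{S}}$, stated only in $W^{1,p}_{\delta-1}$, is enough to pass the coefficients to the limit in the relevant norms — one needs the products appearing in \eqref{origLich}-\eqref{origVect} to converge, which follows from the Sobolev multiplication Lemma \ref{multiplication} together with the uniform higher-regularity bounds just obtained for $\phi_\ell$ and $W_\ell$ (so that the burden of low regularity falls only on the seed-data factors, where $W^{1,p}_{\delta-1}$ convergence suffices after possibly passing to a further subsequence).
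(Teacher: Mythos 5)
Your proposal is correct and takes essentially the same route as the paper: argue by contradiction, use the two-sided bound to get a uniform $L^p_{\delta-2}$ bound on the source of \eqref{origVect} and hence (Lemma \ref{VectLap}) a uniform $W^{2,p}_{\delta}$ bound on $W_\ell$, feed this back into \eqref{origLich} to bound $\phi_\ell$ uniformly in $W^{2,p}_{\delta}$, extract a convergent subsequence by compact embedding, and bootstrap to show the limit solves the system for $\hat{\mathcal{S}}$, contradicting nonexistence. The only caution is that the hypothesized $W^{1,p}_{\delta-1}$ convergence of the seed data yields \emph{uniform} bounds only at the $W^{2,p}_{\delta}$ level for $(\phi_\ell,W_\ell)$ rather than in $W^{k,p}_{\delta}$ for $k>2$, but that is all the compactness argument requires and is all the paper itself uses.
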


\begin{proof}
Setting up a proof by contradiction, we presume that such constants $a$ and $b$ do exist. It follows that for the sequence of seed data $\mathcal{S_\ell}$, the right hand side of \eqref{origVect} is uniformly bounded, and contained in $L^p_{\delta-2}$. Thence, since  a $W^{2,p}_\delta$-AE manifold does not admit any conformal Killing fields, we determine from Lemma \ref{VectLap} that the vector fields $W_\ell$ solving \eqref{origVect} are uniformly bounded and contained in $W^{2,p}_\delta.$ 

We now focus on the Lichnerowicz equation \eqref{origLich}, which for convenience we write in the form 
$\alpha_n \Delta_\ell \phi_\ell = F_\ell(W_\ell, \phi_\ell).$ Combining the presumed bounds \eqref{lims} on $\phi_\ell$ with the uniform bounds on $W_\ell$ obtained above, along with the hypothesized regularity of the seed data $\mathcal{S_\ell}$, we see that $ F_\ell(W_\ell, \phi_\ell)$ is uniformly bounded in $L^p_{\delta-2}.$
Consequently, the solutions $\phi_\ell$ of the Lichnerowicz equation are uniformly bounded in $W^{2,p}_\delta.$ Since the embedding of $W^{2,p}_\delta$ in $L^\infty$ is compact, the sequence $\phi_\ell$ must contain a subsequence $\tilde \phi_m$ which converges (in $L^\infty$) to some positive function $\phi_\infty$. By a similar argument, the corresponding subsequence $\tilde W_m$ converges to a vector field $W_\infty$ in $W^{2,p}_\delta$. 

We now apply standard elliptic bootstrap techniques to the converging sequence $(\tilde W_m, \tilde \phi_m) \rightarrow (W_\infty, \phi_\infty)$ to argue that $(W_\infty, \phi_\infty)$ must be a solution to the conformal constraint equations for the limiting seed data set $\hat {\mathcal {S}}$. However, by assumption Eqns.\ \eqref{origLich}-\eqref{origVect} do not admit a solution for the seed data $\hat {\mathcal {S}}$. We thus obtain the contradiction which shows that in fact the sequence $(W_\ell, \phi_\ell)$ cannot be bounded away from both zero and infinity.

\end{proof}

This result does not tell us whether, in general, the sequence $\phi_m$ blows up or goes to zero. To obtain results which distinguish these possibilities, we make further assumptions.  Unlike Theorem  \ref{Unbdd}, these further results (below) are somewhat restrictive regarding both the nature of the seed data sets $\hat {\mathcal {S}}$ for which (by assumption)  no solutions to the conformal constraint equations exist, and the nature of the sequences of seed data sets $\mathcal{S_\ell}$ which converge to $\hat {\mathcal {S}}$. These results  hold for the seed data sets of Corollary \ref{NonCor}, as well as for a wider class; however, it is not clear whether they hold for every possible  choice of $\hat {\mathcal {S}}$ and of $\mathcal{S_\ell}$. We hope to determine that sometime in the future. 

To go beyond Theorem \ref{Unbdd}, the following monotonicity lemma is very useful. We note that some  of the restrictions on the choice of $\hat {\mathcal {S}}$ and of $\mathcal{S_\ell}$ originate here, in the hypothesis of this lemma. We note in particular that we need 
higher regularity (expressed here, for convenience, using H\"{o}lder norms) of the seed data in order to get the pointwise bounds we require.

\begin{lem} [Monotonicity]
\label{Monotonicity}
Suppose that $(\Sigma^n; \lambda_\ell)$ is a sequence of $C^{2,\alpha}_\delta$-AE geometries which converge (in $C^{2,\alpha}_\delta$) to $(\Sigma^n; \lambda_\infty)$, and that $N_\ell-1\in C^{2,\alpha}_\delta$ similarly converges. Suppose that $\tau_\ell^2$ is a sequence of $C^{0,\alpha}_{\delta-2}$ functions which converge (in this space) to $\tau_\infty^2$, suppose that $f_\ell^2$ and $r_\ell$ are sequences of $C^{0,\alpha}_{\delta-2}$ functions, and finally suppose that for each index $\ell$, the Lichnerowicz equation \eqref{LichneroEqn} corresponding to the data $(\Sigma^n, \lambda_\ell, N_\ell, f_\ell, \tau_\ell, r_\ell)$ admits a solution $\phi_\ell$. Let the function $\tau_0$ be defined as
\begin{equation}
\label{tau0}
\tau_0^2:=C \rho^{\delta-2},
\end{equation}
where $C$ is a positive constant sufficiently large so that $\tau_0^2 \geq \tau_\ell^2$ and $\kappa_n \tau_0^2 \geq -R_{\lambda_\ell}$ for all $\ell$. If we label as $\psi_\infty$  the  conformal factor for which $R_{\psi_\infty^{q_n-2} \lambda_\infty} = -\kappa_n  \tau_0^2$, then for any $\epsilon>0$ one has $\phi_\ell > \psi_\infty - \epsilon$ for sufficiently large $\ell$.
\end{lem}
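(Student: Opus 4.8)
The plan is to use the conformal covariance of the Lichnerowicz equation to convert the assertion into a one-sided pointwise estimate read off at an interior minimum. First I would construct $\psi_\infty$: since $\tau_0^2=C\rho^{\delta-2}$ is strictly positive everywhere, its zero set is empty, so Lemma~\ref{Yam&CurvCrit} produces a conformal factor $\psi_\infty>0$ with $\psi_\infty-1\in C^{2,\alpha}_\delta$ and $R_{\psi_\infty^{q_n-2}\lambda_\infty}=-\kappa_n\tau_0^2$, regardless of the Yamabe class of $\lambda_\infty$; in particular $\psi_\infty$ is bounded above by some $M>0$ and below away from $0$. Next I would conformally transform the $\ell$-th Lichnerowicz equation by $\psi_\infty$ (exactly as in the proof of Theorem~\ref{LichIff}): putting $\hat\lambda_\ell:=\psi_\infty^{q_n-2}\lambda_\ell$ and $\hat\phi_\ell:=\psi_\infty^{-1}\phi_\ell$, the function $\hat\phi_\ell$ solves the Lichnerowicz equation with metric $\hat\lambda_\ell$ and rescaled coefficients $\hat f_\ell^2,\hat r_\ell\ge0$ (and unchanged $\tau_\ell$). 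By the conformal transformation law for scalar curvature together with the defining equation of $\psi_\infty$,
\[
  R_{\hat\lambda_\ell}=-\kappa_n\tau_0^2+\tilde\eta_\ell,\qquad
  \tilde\eta_\ell:=\psi_\infty^{1-q_n}\!\left(-\alpha_n(\Delta_{\lambda_\ell}-\Delta_{\lambda_\infty})\psi_\infty+(R_{\lambda_\ell}-R_{\lambda_\infty})\psi_\infty\right),
\]
and since $\lambda_\ell\to\lambda_\infty$ in $C^{2,\alpha}_\delta$ while $\psi_\infty$ is a fixed function bounded above and below, one checks that $\tilde\eta_\ell\to0$ in $C^{0,\alpha}_{\delta-2}$; equivalently there are constants $\epsilon_\ell\to0$ with $|\tilde\eta_\ell|\le\epsilon_\ell\rho^{\delta-2}$ on all of $\Sigma^n$.

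With this setup, $\hat\phi_\ell$ satisfies
\[
  -\alpha_n\Delta_{\hat\lambda_\ell}\hat\phi_\ell
  =\kappa_n\tau_0^2\,\hat\phi_\ell-\tilde\eta_\ell\,\hat\phi_\ell-\kappa_n\tau_\ell^2\,\hat\phi_\ell^{q_n-1}
  +\hat f_\ell^2\,\hat\phi_\ell^{-q_n-1}+\hat r_\ell\,\hat\phi_\ell^{-q_n/2}.
\]
Because $\phi_\ell$, hence $\hat\phi_\ell$, tends to $1$ at infinity, either $\inf\hat\phi_\ell\ge1$ (and then the claim below is immediate) or $\inf\hat\phi_\ell<1$ is attained at an interior point $x_0$, where $\Delta_{\hat\lambda_\ell}\hat\phi_\ell(x_0)\ge0$. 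Evaluating the equation at $x_0$, discarding the nonnegative terms $\hat f_\ell^2\hat\phi_\ell^{-q_n-1}+\hat r_\ell\hat\phi_\ell^{-q_n/2}$, and using $\tau_\ell^2\le\tau_0^2$ and $\hat\phi_\ell^{q_n-1}>0$ gives $\kappa_n\tau_0^2\hat\phi_\ell\le\tilde\eta_\ell\hat\phi_\ell+\kappa_n\tau_0^2\hat\phi_\ell^{q_n-1}$ at $x_0$; dividing by $\hat\phi_\ell(x_0)>0$ and inserting $\tau_0^2(x_0)=C\rho(x_0)^{\delta-2}$ together with $|\tilde\eta_\ell(x_0)|\le\epsilon_\ell\rho(x_0)^{\delta-2}$ yields
\[
  \kappa_n C\big(1-\hat\phi_\ell(x_0)^{q_n-2}\big)\le\epsilon_\ell .
\]
Hence $\inf\hat\phi_\ell=\hat\phi_\ell(x_0)\ge(1-\epsilon_\ell/\kappa_nC)^{1/(q_n-2)}=:c_\ell$ for $\ell$ large, with $c_\ell\to1$. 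Undoing the conformal change, $\phi_\ell=\psi_\infty\hat\phi_\ell\ge c_\ell\psi_\infty\ge\psi_\infty-(1-c_\ell)M$; given $\epsilon>0$, choosing $\ell$ large enough that $(1-c_\ell)M<\epsilon$ gives $\phi_\ell>\psi_\infty-\epsilon$.

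The step I expect to be the main obstacle is the decay bookkeeping for $\tilde\eta_\ell$: one must check that it vanishes precisely at the rate $\rho^{\delta-2}$ with coefficient tending to $0$, so that it is dominated, with room to spare, by $\kappa_n\tau_0^2=\kappa_nC\rho^{\delta-2}$. This is exactly why the hypothesis is stated with $C^{2,\alpha}_\delta$/$C^{0,\alpha}_{\delta-2}$ convergence (so that $\Delta_{\lambda_\ell}\psi_\infty$ and $R_{\lambda_\ell}$ converge pointwise with the correct weights) and why $\tau_0^2$ is taken to be the explicit weight $C\rho^{\delta-2}$ dominating both $\tau_\ell^2$ and $-R_{\lambda_\ell}$. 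The remaining ingredients — existence and $C^{2,\alpha}$ regularity of $\psi_\infty$, attainment of the interior minimum from $\hat\phi_\ell\to1$, and the elementary sign manipulations at $x_0$ — are routine.
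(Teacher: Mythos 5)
Your proof is correct, but it takes a genuinely different route from the paper's. The paper constructs, for \emph{each} $\ell$, the conformal factor $\psi_\ell$ with $R_{\psi_\ell^{q_n-2}\lambda_\ell}=-\kappa_n\tau_0^2$, proves the exact comparison $\phi_\ell\ge\psi_\ell$ by a strong-maximum-principle argument applied to $\phi_\ell/\psi_\ell$, and then spends most of its effort showing $\psi_\ell\to\psi_\infty$: the hypothesis $\kappa_n\tau_0^2\ge-R_{\lambda_\ell}$ is used there to get $\psi_\ell\le1$, followed by elliptic bootstrap, compact embedding, subsequence extraction, and identification of the limit via the uniqueness result, Proposition \ref{uniqueness}. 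You instead fix the single factor $\psi_\infty$, apply conformal covariance with respect to each $\lambda_\ell$, and absorb the mismatch into the error term $\tilde\eta_\ell$, which the $C^{2,\alpha}_\delta$ convergence of $\lambda_\ell$ (together with $\psi_\infty-1\in C^{2,\alpha}_\delta$, obtained by Schauder bootstrap from the $W^{2,p}_\delta$ regularity that Lemma \ref{Yam&CurvCrit} provides, and the two-sided bounds on $\psi_\infty$) drives to zero at the weighted rate $\rho^{\delta-2}$; the quantitative evaluation at an interior minimum of $\hat\phi_\ell=\phi_\ell/\psi_\infty$ then gives $\inf\hat\phi_\ell\ge c_\ell\to1$, which is exactly the stated $\epsilon$-conclusion. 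Your route is shorter and bypasses the convergence-of-$\psi_\ell$ machinery and the uniqueness proposition entirely; indeed it never uses $\kappa_n\tau_0^2\ge-R_{\lambda_\ell}$, nor the convergence of $\tau_\ell^2$ or $N_\ell$, only $\tau_\ell^2\le\tau_0^2$ and the metric convergence. What the paper's approach buys is the stronger intermediate inequality $\phi_\ell\ge\psi_\ell$ for every $\ell$, which is reused later (in the proof of Theorem \ref{blowup2}); your argument yields only the limiting comparison with $\psi_\infty$, which suffices for the lemma as stated. In a full writeup you should make explicit that the interior minimum is attained because $\hat\phi_\ell$ is continuous and tends to $1$ on each end, that evaluating $\Delta_{\hat\lambda_\ell}\hat\phi_\ell(x_0)\ge0$ classically uses the local $C^2$ regularity of $\phi_\ell$ (available by bootstrap from the H\"older hypotheses, and implicitly used by the paper's minimum-point argument as well), and that discarding the $\hat r_\ell$ term uses $r_\ell\ge0$, an assumption both proofs rely on even though the lemma's statement does not repeat it.
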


We note that there is no assumption in this lemma regarding the Yamabe class of the metrics $\lambda_\ell$ and $\lambda_\infty$; nor is there any assumption that the sequences $f_\ell$ or $r_\ell$ converge. We also note that the existence of $C$ follows from the function space conditions placed on $\lambda_\ell$ and $\tau_\ell$, and the existence of $\psi_\infty$ follows from the presumed form \eqref{tau0} of $\tau_0$, and from the properties of the AE Yamabe classes, as described in Lemma \ref{Yam&CurvCrit}.

\begin{proof} 
The hypothesis of this lemma presumes that for each set of the data $(\Sigma^n, \lambda_\ell, N_\ell, f_\ell, \tau_\ell, r_\ell)$, a positive solution $\phi_\ell$ of the corresponding Lichnerowicz equation exists. The form  \eqref{tau0} of $\tau_0$ together with  Lemma \ref{Yam&CurvCrit} imply 
 that for each value of $\ell$ there exists a conformal function $\psi_\ell$ for which $R_{\psi_\ell^{q_n -2} \lambda_\ell} = -\kappa_n  \tau_0^2$. Noting that the functions $\psi_\ell$ and $\phi_\ell$ are all expected to approach one (or some other constant) asymptotically, we seek to show here that $\phi_\ell \geq\psi_\ell$ for all $\ell$. 

It follows from the definitions of $\phi_\ell$ and $\psi_\ell$ and from the conformal covariance of the Lichnerowicz equation under the conformal transformation
\begin{equation}
(\lambda, f, \tau, r) \rightarrow (\theta^{q_n -2} \lambda, \theta^{-q_n } f, \theta^{-\frac32 q_n +1} r)
\end{equation} 
that if we set $\tilde \varphi_\ell:= \frac{\phi_\ell}{\psi_\ell}$, then $\tilde \varphi_\ell$ satisfies
\begin{equation}
\label{varphi}
\alpha_n\Delta_{\psi_\ell^{q_n -2} \lambda_\ell} \tilde\varphi_\ell = -\kappa_n  \tau_0^2 \tilde \varphi_\ell + \kappa_n  \tau_\ell^2 \tilde \varphi_\ell^{q_n -1} -f^2 \tilde \varphi_\ell^{-q_n -1} -r \tilde \varphi_\ell ^{-\frac{q_n }{2}}.
\end{equation}

To obtain a contradiction with our contention that $\phi_\ell \geq \psi_\ell$ for all $\ell$, we suppose now that $\phi_\ell < \psi_\ell$ somewhere in $\Sigma^n$ for some $\ell$. This implies that $\tilde \varphi_\ell<1$ somewhere. Since $\tilde \varphi_\ell$ approaches $1$  asymptotically, this function must have a global minimum at some point $p\in \Sigma^n$; hence (since $\tilde \varphi_\ell$ is, by construction, continuous) there exists a small ball $B(p) \subset \Sigma^n$ containing $p$ on which $\tilde \varphi_\ell<1$. It immediately follows that $-\kappa_n  \tau_0^2 \tilde \varphi_\ell + \kappa_n  \tau_\ell^2 \tilde \varphi_\ell^{q_n -1}<0$ on $B(p)$. Combining this with \eqref{varphi}, we find that  $\Delta_{\psi_\ell^{q_n -2} \lambda_\ell} \tilde\varphi_\ell \leq 0$ on $B(p)$. Since a minimum is achieved in the interior of $B(p)$, it follows from the maximum principle that $\tilde \varphi_\ell$ is constant (and negative) on $B(p)$. It follows now from standard arguments that the constancy of $\tilde \varphi_\ell$ on $B(p)$ extends to all of $\Sigma^n$. Noting the asymptotic behavior of $\tilde \varphi_\ell$, we obtain a contradiction; consequently, $\phi_\ell \geq \psi_\ell$ for all $\ell$, everywhere on $\Sigma^n$.

Having established this inequality, to complete the proof of this lemma, it is sufficient to show that $\psi_\ell \rightarrow \psi_\infty$ in $C^{2,\alpha}_\delta$. As a step towards verifying  this limit, we first show that for all $\ell$, $\psi_\ell \leq1$  everywhere. To verify this, suppose that $\psi_\ell>1$ somewhere. It follows from the asymptotic behavior of $\psi_\ell$ that there is a point $q$ at which $\psi_\ell$ achieves a maximum. The regularity of $\psi_\ell$ together with the conformal transformation equation for scalar curvature and the definition of $\psi_\ell$ now imply that  $\psi_\ell^{q_n -2} \leq \dfrac{-R_{\lambda_\ell}}{\kappa_n \tau_0^2}$ at $q$. This violates the hypothesized inequality relating $R_{\lambda_\ell}$ and $\tau_0$; we therefore conclude that $\psi_\ell \leq1$.

We now use this boundedness of $\psi_\ell$ to argue the convergence of this sequence. Recall that, by definition, the functions $\psi_\ell$ satisfy
\begin{equation}
\label{psieq}
 \alpha_n\Delta_{\lambda_\ell} \psi_\ell  = R_{\lambda_\ell} \psi_\ell +\kappa_n  \tau_0^2 \psi_\ell ^{q_n  -1}. 
 \end{equation}
The presumed regularity of the sequence of metrics $\lambda_\ell$ and of the function $\tau_0$, together with the bounds on $\psi_\ell$, allow us to use \eqref{psieq} to bootstrap the regularity of $\psi_\ell$ so that $\psi_\ell-1 \in C^{2,\alpha}_\delta$. We may then use compact embedding to show that there exists a subsequence $\hat \psi_m$ such that $\hat \psi_m-1$ converges in $C^{1,\alpha}_{\delta'}$ for some $\delta'>\delta$ to a function $\hat \psi_\infty-1$. 

A priori, we do not know that $\hat \psi_\infty$ is the conformal factor for which $R_{\hat \psi_\infty^{q_n -2} \lambda_\infty}= -\kappa_n  \tau_0^2$. To argue that it is, we add identical terms to both sides of \eqref{psieq}, rearrange terms, and obtain
\begin{multline}
\label{hatpsi}
 (-\alpha_n \Delta_{\lambda_\infty}+R_{\lambda_\infty})\hat \psi_m \\ =
 [(-\alpha_n\Delta_{\lambda_\infty}+R_{\lambda_\infty})-(-\alpha_n\Delta_{\lambda_m}+R_{\lambda_m})]\hat \psi_m -\kappa_n \tau_0^2 \hat \psi_m^{q_n -1}.
 \end{multline}
Since the sequence of metrics $\lambda_m$ converges in $C^{2,\alpha}_\delta$, the sequence of operators 
$-\alpha_n\Delta_{\lambda_m}+R_{\lambda_m}$ does as well. Combining this with the convergence of $\hat \psi_m$, we see that the sequence of terms  $[(-\alpha_n\Delta_{\lambda_\infty}+R_{\lambda_\infty})-(-\alpha_n\Delta_{\lambda_m}+R_{\lambda_m})]\hat \psi_m$ in \eqref{hatpsi} converges to zero. The convergence of  $\hat \psi_m$ guarantees that the remaining term on the right hand side of \eqref{hatpsi} converges; it then follows from the uniqueness Lemma \ref{uniqueness} that $\hat \psi_\infty$ is indeed the conformal factor for which $R_{\hat \psi_\infty \lambda_\infty}= -\kappa_n  \tau_0^2$, thus completing the proof of this lemma. 
\end{proof} 

We now combine Lemma \ref{Monotonicity} with Theorem \ref{Unbdd} to obtain a blow-up result for sequences of solutions of the conformal constraint equations. 

\begin{thm} [A Blow-up Result]
\label{blowup}
Suppose that $(\Sigma^n; \lambda_\ell, N_\ell, \sigma_\ell, \tau_\ell, r_\ell, j_\ell)$ is a sequence of $C^{2,\alpha}_\delta$ asymptotically Euclidean seed data with $\tau_\ell \in C^{1,\alpha}_{\delta-1}$, and with $\sigma_\ell, r_\ell, j_\ell \in C^{0,\alpha}_{\delta-1}$, for $\delta \in (2-n, 0)$. Suppose that the conformal constraint equations admit a solution $(\varphi_\ell,  W_\ell)$ for each $\ell$, and finally suppose that the sequence $(\Sigma^n; \lambda_\ell, N_\ell, \sigma_\ell, \tau_\ell, r_\ell, j_\ell)$ converges uniformly (in the indicated spaces) to a set of asymptotically Euclidean seed data $(\Sigma^n; \hat \lambda, \hat N, \hat  \sigma, \hat \tau, \hat r, \hat j)$ for which the conformal constraint equations admit no solution. Then $\sup \varphi_\ell \rightarrow \infty$.
\end{thm}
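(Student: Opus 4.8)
The plan is to combine two facts already established: the Monotonicity Lemma \ref{Monotonicity}, which produces a uniform positive lower bound for the $\varphi_\ell$ and so prevents them from collapsing to zero, and the Unboundedness Theorem \ref{Unbdd}, which forbids a two-sided bound. Together these force $\sup\varphi_\ell\to\infty$. To set up Lemma \ref{Monotonicity}, put $\tau_0^2:=C\rho^{\delta-2}$, choosing $C>0$ large enough that $\tau_0^2\ge\tau_\ell^2$ and $\kappa_n\tau_0^2\ge-R_{\lambda_\ell}$ for every $\ell$; such a $C$ exists because the assumed $C^{1,\alpha}_{\delta-1}$ convergence of $\tau_\ell$ and $C^{2,\alpha}_\delta$ convergence of $\lambda_\ell$ make $\{\tau_\ell^2\}$ and $\{R_{\lambda_\ell}\}$ uniformly bounded in $C^{0,\alpha}_{\delta-2}$. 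Since $\tau_0^2>0$ everywhere, Lemma \ref{Yam\&CurvCrit} supplies---whatever the Yamabe class of $\hat\lambda$---a positive conformal factor $\psi_\infty$, with $\psi_\infty-1$ in the appropriate weighted space and $R_{\psi_\infty^{q_n-2}\hat\lambda}=-\kappa_n\tau_0^2$. Being continuous, positive, and asymptotic to $1$, $\psi_\infty$ has $m_0:=\inf_{\Sigma^n}\psi_\infty>0$.

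Next I would verify that Lemma \ref{Monotonicity} applies to the sequence of Lichnerowicz equations solved by the $\varphi_\ell$. For each $\ell$ the solution $\varphi_\ell$ of \eqref{origLich} solves \eqref{LichneroEqn} with $f_\ell^2=\big|\sigma_\ell+\tfrac1{2N_\ell}L_{\lambda_\ell}W_\ell\big|_{\lambda_\ell}^2$, and by Lemma \ref{VectLap} together with standard elliptic bootstrapping $W_\ell$ is regular enough that $f_\ell^2$ and $r_\ell$ lie in $C^{0,\alpha}_{\delta-2}$; crucially, Lemma \ref{Monotonicity} asks only for membership of $f_\ell$ and $r_\ell$ in this space \emph{for each $\ell$ separately}---not for any uniform control---so the lack of a uniform bound on $W_\ell$ at this stage does no harm. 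With $\lambda_\infty=\hat\lambda$, $N_\infty=\hat N$, $\tau_\infty=\hat\tau$, all hypotheses of Lemma \ref{Monotonicity} are met, so, taking the fixed choice $\epsilon=m_0/2$, there is an $\ell_0$ with $\varphi_\ell>\psi_\infty-m_0/2\ge m_0/2=:a>0$ for all $\ell\ge\ell_0$. This is the uniform positive lower bound.

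Finally I would argue by contradiction using Theorem \ref{Unbdd}. If $\sup_{\Sigma^n}\varphi_\ell$ did not tend to $\infty$, there would be a constant $b$ and a subsequence $(\ell_m)$ with $\sup_{\Sigma^n}\varphi_{\ell_m}\le b$ for all $m$. Discarding the finitely many $m$ with $\ell_m<\ell_0$ and invoking the previous paragraph gives $0<a\le\varphi_{\ell_m}\le b$ for all remaining $m$. But the tail of $(\mathcal{S}_{\ell_m})$ still satisfies the regularity hypotheses of Corollary \ref{Admiss}, still admits the solutions $(\varphi_{\ell_m},W_{\ell_m})$, and still converges to $\hat{\mathcal{S}}$, which admits no solution---and the assumed $C^{2,\alpha}_\delta$-type convergence of the seed data is, through the embeddings of Lemma \ref{embeddings}, at least as strong as the $W^{1,p}_{\delta-1}$ convergence required by Theorem \ref{Unbdd}. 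Thus Theorem \ref{Unbdd}, applied to this subsequence, is contradicted, and we conclude $\sup\varphi_\ell\to\infty$.

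The one point genuinely needing care is the \emph{uniformity} of the lower bound: Theorem \ref{Unbdd} on its own still permits $\inf\varphi_\ell\to0$, and it is exactly the Monotonicity Lemma---once one fixes $\epsilon$ and notes $\inf\psi_\infty>0$---that rules this out. A second, more routine, point to check is that the Hölder-space convergence of the seed data (with at worst a harmless loss of weight) feeds correctly into the elliptic bootstrap inside Theorem \ref{Unbdd}. Everything else is a mechanical assembly of results already established.
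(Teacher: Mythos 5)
Your proposal is correct and follows essentially the same route as the paper: invoke the Monotonicity Lemma (with $\tau_0^2=C\rho^{\delta-2}$ and $f_\ell^2=|\sigma_\ell+\tfrac{1}{2N_\ell}L_{\lambda_\ell}W_\ell|^2$) to bound $\varphi_\ell$ away from zero, then use Theorem \ref{Unbdd} to rule out a two-sided bound and conclude blow-up. Your subsequence formulation of the final step is, if anything, a slightly more careful way of upgrading ``not bounded above'' to $\sup\varphi_\ell\to\infty$ than the paper's one-line conclusion.
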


\begin{proof}
Since we know that each of the functions $\tau_\ell$ asymptotically approaches zero, and since we know that the sequence of functions $\tau_\ell$ has a bounded limit, it follows that we can choose a sufficiently large positive constant $C$ so that $\tau_\ell \leq C\rho^{\delta -1}$ for a uniform radial function $\rho$. If we now set $f_\ell^2 = |\sigma_\ell + \frac{1}{2N} L_{\lambda_\ell}W|^2_{\lambda_\ell}$, then the hypothesis of Lemma \ref{Monotonicity} is satisfied. It thus follows that $\phi_\ell$ is bounded away from zero. We readily verify that the hypothesis of Theorem \ref{Unbdd} is also satisfied. Consequently we know that $\varphi_\ell$ cannot be bounded away from both $0$ and $\infty$. Since we \emph{do} have $\phi_\ell$ bounded away from $0$, it follows that this sequence blows up.
\end{proof}

If we further tighten the conditions on the sequence of seed data sets in certain ways
then we can obtain further information on the blow up of the solutions. We present here one version of such a result; it is likely that this result could be generalized.

\begin{thm} [Another Blow-up Result]
\label{blowup2}
Suppose that the sequence of seed data sets $(\Sigma^n; \lambda_\ell, N_\ell, \sigma_\ell, \tau_\ell, r_\ell, j_\ell)$ satisfies all the conditions of Theorem \ref{blowup}, along with the additional conditions that $\tau_\ell \geq 0$, that $\tau_\ell \geq \tau_{\ell +1}$, that the limit mean curvature function $\hat \tau=0$, and that the limit metric $\hat \lambda$ is either in the Yamabe negative or the Yamabe zero class. Let $(\varphi_\ell, W_\ell)$ denote the corresponding solutions of the conformal constraint equations. For any choice of $p>n$, one (or both) of the following is true:
\begin{itemize}
\item  $\|\tau_\ell^2 \phi_\ell^{q_n -1}\|_{L^p_{\eta-2}}$ is unbounded, for  all $\eta \in (2-n,0)$.

\item  $\|R_{\lambda_\ell}\phi_\ell\|_{L^p_{\xi}}$ is unbounded, for all $\xi \in \R$.
\end{itemize}
\end{thm}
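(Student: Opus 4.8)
The plan is to argue by contradiction: suppose that both norms are bounded, i.e.\ that there exist $\eta \in (2-n,0)$ with $\|\tau_\ell^2 \varphi_\ell^{q_n-1}\|_{L^p_{\eta-2}}$ bounded and $\xi \in \R$ with $\|R_{\lambda_\ell}\varphi_\ell\|_{L^p_\xi}$ bounded. The idea is to combine this with the Monotonicity Lemma \ref{Monotonicity} (which already gives a uniform positive lower bound $\varphi_\ell \geq \psi_\infty - \epsilon$) and with the Lichnerowicz equation \eqref{origLich} to extract a uniform $W^{2,p}$-bound on $\varphi_\ell$ near infinity, contradicting the blow-up conclusion of Theorem \ref{blowup}. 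Concretely: first I would use Lemma \ref{VectLap}, together with the convergence of the seed data and the additional hypotheses $\tau_\ell \geq \tau_{\ell+1} \geq 0$ with $\hat\tau = 0$, to control $W_\ell$. The monotonicity $\tau_\ell \geq 0$ decreasing with $\hat\tau = 0$ means $d\tau_\ell$ need not be small, but the right-hand side $\kappa_n \varphi_\ell^{q_n} d\tau_\ell + j_\ell$ of \eqref{origVect} enters only through $f_\ell^2 = |\sigma_\ell + \tfrac{1}{2N_\ell} L_{\lambda_\ell} W_\ell|^2$, and I would show this term is controlled in $L^p_{\eta-2}$-type norms using the boundedness of $\tau_\ell^2\varphi_\ell^{q_n-1}$ (note $\varphi_\ell^{q_n}d\tau_\ell$ is comparable, after integration by parts against the weight, to derivatives of $\tau_\ell \varphi_\ell^{q_n-1}$-like quantities) — this is where the precise exponent bookkeeping with $q_n = \tfrac{2n}{n-2}$ matters.

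Next I would rewrite the Lichnerowicz equation \eqref{origLich} as a linear equation for $\varphi_\ell$ of the form $(-\alpha_n\Delta_{\lambda_\ell} + V_\ell)\varphi_\ell = g_\ell$, by moving the $R_{\lambda_\ell}\varphi_\ell$ and $\kappa_n\tau_\ell^2\varphi_\ell^{q_n-1}$ terms (and the negative-power terms, which are bounded above since $\varphi_\ell$ is bounded below) to the right-hand side. Under the assumption that both bulleted norms are bounded, every term on the right-hand side is uniformly bounded in an appropriate $L^p_{\sigma}$-space; one then invokes the Fredholm/isomorphism property from Lemma \ref{AELinearExistence} (with $V = 0$, or after absorbing a nonnegative piece) and the associated elliptic estimate to conclude $\varphi_\ell - c$ is uniformly bounded in $W^{2,p}_\delta$ for a suitable weight. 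By the compact embedding $W^{2,p}_\delta \hookrightarrow L^\infty$ (Lemma \ref{embeddings}), $\sup\varphi_\ell$ is then uniformly bounded — but Theorem \ref{blowup} (whose hypotheses are satisfied here, since $\hat\tau=0$ and $\hat\lambda$ is Yamabe nonpositive force nonexistence for $\hat{\mathcal S}$ by Corollary \ref{NonCor}) gives $\sup\varphi_\ell \to \infty$. This contradiction establishes the theorem.

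The main obstacle I anticipate is making the weighted-space bookkeeping consistent across the two a priori different weights $\eta-2$ and $\xi$ appearing in the two bullets: the hypothesis only gives boundedness for \emph{some} $\eta$ and \emph{some} $\xi$, and negating the disjunction gives one of each, so I must check that the resulting right-hand side of the linearized Lichnerowicz equation lands in a single weighted space to which Lemma \ref{AELinearExistence} applies — this may require using the Sobolev multiplication Lemma \ref{multiplication} and the freedom in $\delta$, and perhaps shrinking weights slightly, to reconcile the exponents. A secondary subtlety is handling the $L_{\lambda_\ell}W_\ell$ contribution to $f_\ell^2$ without a uniform bound on $d\tau_\ell$; here I would lean on the decreasing-in-$\ell$ hypothesis and $\hat\tau = 0$ to argue that $\|\tau_\ell\|$ in the relevant norm is itself bounded (being dominated by $\|\tau_1\|$), so that the only possible source of unboundedness in $\varphi_\ell^{q_n}d\tau_\ell$ is the growth of $\varphi_\ell$ — which is exactly captured by the first bullet. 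Once these weight-matching issues are resolved, the remaining steps are routine elliptic bootstrapping of the type already used in the proofs of Theorems \ref{Unbdd} and \ref{blowup}.
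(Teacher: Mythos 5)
Your strategy is genuinely different from the paper's, and as written it has a gap that I do not think can be closed within your framework. The crux is the term $\left|\sigma_\ell + \frac{1}{2N_\ell}L_{\lambda_\ell}W_\ell\right|^2\varphi_\ell^{-q_n-1}$: it enters the Lichnerowicz equation with the sign that pushes $\varphi_\ell$ up, so you must bound it before any elliptic estimate can bound $\sup\varphi_\ell$. But by Lemma \ref{VectLap}, bounding $L_{\lambda_\ell}W_\ell$ requires a bound on $\|\varphi_\ell^{q_n}\,d\tau_\ell\|_{L^p_{\delta-2}}$, and the negated conclusion only controls $\tau_\ell^2\varphi_\ell^{q_n-1}$. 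These are not comparable: in a region where $\tau_\ell$ is small but $d\tau_\ell$ is of order one and $\varphi_\ell$ is large, $\varphi_\ell^{q_n}d\tau_\ell$ is huge while $\tau_\ell^2\varphi_\ell^{q_n-1}$ stays small; your integration-by-parts heuristic produces terms containing $d\varphi_\ell$, which nothing controls, and the monotone decay of $\tau_\ell$ bounds $\tau_\ell$ itself, not $\varphi_\ell^{q_n}d\tau_\ell$. A second problem is the weights: negating the disjunction yields boundedness of $\|R_{\lambda_\ell}\varphi_\ell\|_{L^p_{\xi}}$ only for \emph{some} $\xi$, possibly large and positive, which is far too weak to place the right-hand side of your linearized equation in $L^p_{\delta-2}$ as Lemma \ref{AELinearExistence} requires; Sobolev multiplication cannot manufacture decay the hypothesis does not provide.

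The paper avoids both difficulties by never estimating $\varphi_\ell$ directly. Since the conformal constraints are solvable for each $\ell$, Theorem \ref{LichIff} supplies conformal factors $\psi_\ell$ with $R_{\psi_\ell^{q_n-2}\lambda_\ell} = -\kappa_n\tau_\ell^2$, i.e.\ solutions of $\alpha_n\Delta_{\lambda_\ell}\psi_\ell = R_{\lambda_\ell}\psi_\ell + \kappa_n\tau_\ell^2\psi_\ell^{q_n-1}$, an equation containing neither $f^2$, $r$, nor $W_\ell$; the two quantities in the theorem are precisely the two terms on its right-hand side. If both were bounded (with $\psi_\ell$ in place of $\varphi_\ell$), the elliptic estimate bounds $\psi_\ell-1$ in $W^{2,p}_\eta$; compactness plus $\tau_\ell^2\psi_\ell^{q_n-1}\to 0$ (this is where $\tau_\ell\geq\tau_{\ell+1}\geq 0$ with $\hat\tau=0$ is used) produce a limit $\psi_\infty$ with $R_{\psi_\infty^{q_n-2}\hat\lambda}=0$, impossible for Yamabe-nonpositive $\hat\lambda$ by Lemma \ref{Yamabe}. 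A comparison argument as in Lemma \ref{Monotonicity}, again using the monotonicity of $\tau_\ell$, then gives $\varphi_\ell\geq\psi_\ell$, which transfers the unboundedness to the $\varphi_\ell$ quantities. Note that the paper's contradiction is with the Yamabe class of $\hat\lambda$, not with Theorem \ref{blowup}; in your argument the Yamabe-nonpositivity hypothesis plays no essential role beyond re-deriving nonexistence, which is a sign that the route through $\sup\varphi_\ell\to\infty$ is not capturing what the theorem is actually asserting.
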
 

\begin{proof}
Since we know by hypothesis that the conformal constraint equations  admit a solution for each of the sequence of seed data sets, it follows from Theorem \ref{LichIff} that there exists a sequence of conformal transformation mappings $\psi_\ell$ which map the scalar curvature to $-\kappa_n \tau_\ell^2$; they satisfy
\begin{equation}
\label{psieeq}
 \alpha_n\Delta_{\lambda_\ell} \psi_\ell  = R_{\lambda_\ell} \psi_\ell +\kappa_n  \tau_\ell^2 \psi_\ell ^{q_n  -1}. 
 \end{equation}
 It follows from the standard elliptic estimates (from \cite{CBC81}) together with the regularity presumed for the metrics $\lambda_\ell$  that the solutions $\psi_\ell$ to \eqref{psieeq} satisfy the estimate
 \begin{equation}
 \label{ineq}
 \|\psi_\ell-1\|_{W^{2,p}_\eta} \leq C\|\tau_\ell^2 \psi_\ell^{q_n -1}\|_{L^{p}_{\eta-2}}
                + C \|R_{\lambda_\ell}\psi_\ell\|_{L^p_{\xi}}
 \end{equation}
 for some constant $C$ and for any choices of $\eta \in (2-n,0)$ and of $\xi \in \R$. We note that the presumed regularity and convergence of the sequence of metrics $\lambda_\ell$ allows us to choose a single constant $C$, independent of $\ell$. 
 
We now argue that the right hand side of the inequality \eqref{ineq} is unbounded. We presume that this is not the case, and we seek a contradiction.  Since this presumption implies the uniform boundedness of $\psi_\ell-1$ in $W^{2,p}_\eta$, compact embedding implies the existence of a $C^{1,\alpha}_\eta$ converging subsequence $\psi_m$, with limit $\psi_\infty$. The boundedness of $\psi_\ell$, together with our hypothesis regarding the sequence $\tau_\ell$, also implies that $\tau^2_\ell \psi^{q_n -1}_\ell \rightarrow 0$. We may then use bootstrapping arguments to show that the limit function $\psi_\infty$ satisfies  $\alpha_n\Delta_{\lambda_\infty} \tilde \psi_\infty  = R_{\lambda_\infty} \tilde \psi_\infty$. However, since $\lambda_\infty$ is Yamabe nonpositive, there cannot be a solution to this equation. We thus obtain a contradiction, and consequently determine that either $\|\tau_\ell^2 \psi_\ell^{q_n -1}\|_{L^{p}_{\eta-2}}$ or $\|R_{\lambda_\ell} \psi_\ell\|_{L^p_{\xi}}$ is unbounded. 

To argue that either $\|\tau_\ell^2 \varphi_\ell^{q_n -1}\|_{L^{p}_{\eta-2}}$ or $\|R_{\lambda_\ell}\varphi_\ell\|_{L^p_{\xi}}$ (or both) is unbounded, we use the monotonicity of the $\tau_\ell$ sequence, together with arguments  similar to those used in proving Lemma \ref{Monotonicity}, to show that $\varphi_{\ell} \geq \psi_\ell$ for all $\ell$. The result follows.

 \end{proof} 
 
 \section{Existence Result for AE Seed Data with $\tau$ Admitting Zeroes}
 \label{zeros}

As noted above, for the class of asymptotically Euclidean seed data sets with $\tau =0$, the criterion for the existence of solutions to the conformal constraint equations is simple: Solutions exist if and only if  the metric is Yamabe positive. For AE seed data with nonconstant $\tau$, the few existence results known \cite{CBIY00, DIMM14} all involve positive Yamabe metrics as well, and also require that $\tau$ have no zeroes.\footnote{The conditions for asymptotically Euclidean seed data require $\tau$ to approach zero asymptotically. For the results cited here, $\tau$ approaches  zero but never crosses zero.} 

Here we present an existence theorem for solutions of the conformal constraint equations for seed data sets which include metrics that need not be Yamabe positive, and for choices of nonconstant $\tau$ which may admit zeroes. 
As with many existence theorems for these equations, the key to the proof is showing that there exist global sub and supersolutions (see Section \ref{AE}) for the system \eqref{origLich}-\eqref{origVect}, and the key to finding these involves balancing the positive and negative terms which appear in the Lichnerowicz equation \eqref{origLich}. The reason most results to date require $\tau$ to be nonzero is because the only terms appearing on the right hand side of Eqn.\ \eqref{origLich} which may be positive are the $\tau$ term and the $R_\lambda$ term; hence, setting $\tau^2 >0$ can balance the negative contributions from the $|\sigma + \frac{1}{2N}L_\lambda W|$ term and the $r$ term. We get around this requirement here by using curvature deformation results based on those appearing in \cite{DM15}. We state the deformation result we need in Lemma \ref{Deform} after stating a definition which is needed for this lemma; we then present the existence theorem below. 

\begin{defn}[Yamabe Invariant of a Subset]
\label{defn}
Let $(\Sigma^n; \lambda)$ be a Riemannian manifold, let $S$ be a measurable subset of $\Sigma^n$, and let $\mathcal{F}_S$ be the set of real valued functions (not identically zero, of sufficient regularity) on $\Sigma^n$ which vanish on the complementary set $\Sigma^n\setminus S$. The Yamabe invariant for $S$ is given by 
\begin{equation}
Y(S\subset \Sigma^n):= \inf_{u\in \mathcal{F}_S} \frac{\int_{\Sigma^n} \alpha_n|\nabla u|^2 +R_\lambda u^2}{\|u\|^2_{L^q_n }}.
\end{equation}
The set $S$ is labeled Yamabe positive, Yamabe negative, or Yamabe zero according to the sign of $Y(S\subset \Sigma^n)$.
\end{defn}

\begin{lem} [Curvature Deformation Lemma]
\label{Deform}
Let $(\Sigma^n; \lambda)$ be $W^{2,p}_\delta$-AE with $p>n$, with $\delta \in (2-n,0)$, and with radial function $\rho$, and let $S$ be a closed subset of $\Sigma^n$ which is Yamabe positive in the sense of Definition \ref{defn}. There exists a conformal factor $\Psi$, with $\Psi -1 \in W^{2,p}_\delta$, such that $R_{\Psi^{q_n -2} \lambda} \geq \epsilon \rho ^{\delta-2}$ on $S$ for some $\epsilon >0$, and such that $R_{\Psi^{q_n -2} \lambda} \geq -\zeta$ everywhere on $\Sigma^n$ for some constant $\zeta>0$.

Further,  if $S'$ is a (closed) subset of $S$, then there exists a conformal factor $\Psi'$ for the set $S'$, satisfying the corresponding conditions (as above) relative to $S'$, and also satisfying the inequalities $c \leq \Psi' \leq \Psi$, $\epsilon'\geq \epsilon$ and $\zeta'\leq \zeta$ for some positive constant $c$ depending only on the metric.
\end{lem}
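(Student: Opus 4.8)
The plan is to reduce the existence of $\Psi$ to the solvability of a prescribed-scalar-curvature (equivalently, Lichnerowicz-type) problem adapted to the subset $S$, and then to extract from that solution the two desired inequalities. First I would recall that the Yamabe positivity of $S$ in the sense of Definition \ref{defn} is exactly the condition under which the curvature deformation machinery of \cite{DM15} produces an AE conformal transformation $\lambda \to \Psi^{q_n-2}\lambda$ whose scalar curvature is positive on $S$ and bounded below globally. Concretely, one picks a smooth nonnegative cutoff $\chi$ supported in a neighborhood of $S$ with $\chi \equiv 1$ on $S$, sets the target curvature to be $\mathcal{R} := \epsilon_0\, \chi\, \rho^{\delta-2}$ for a small $\epsilon_0>0$ to be chosen, and seeks $\Psi>0$, $\Psi-1\in W^{2,p}_\delta$, solving
\begin{equation}
\label{deformeq}
-\alpha_n \Delta_\lambda \Psi + R_\lambda \Psi = \mathcal{R}\, \Psi^{q_n-1}.
\end{equation}
Since $\mathcal{R} \geq 0$ and (as shown in \cite{DM15}) $Y(S\subset\Sigma^n)>0$ is precisely the obstruction-free condition for this problem, one obtains a solution, e.g.\ by the sub/supersolution method of Lemma \ref{SubSup} (using $\Psi_-$ a small positive constant and $\Psi_+$ a solution of the linear equation $-\alpha_n\Delta_\lambda u + R_\lambda u = $ (large positive source), bounded above) combined with the linear Fredholm theory of Lemma \ref{AELinearExistence}; the variational argument in \cite{DM15} is the clean way to see the borderline positivity is what is needed. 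By the conformal transformation law for scalar curvature, $\Psi$ solving \eqref{deformeq} gives $R_{\Psi^{q_n-2}\lambda} = \mathcal{R}$, which is $\geq \epsilon_0\rho^{\delta-2}$ on $S$; setting $\epsilon := \epsilon_0$ gives the first inequality.

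The global lower bound $R_{\Psi^{q_n-2}\lambda} \geq -\zeta$ is immediate once $\mathcal{R}\geq 0$ everywhere, since then $R_{\Psi^{q_n-2}\lambda} = \mathcal{R} \geq 0 \geq -\zeta$ for any $\zeta>0$; but I would phrase it with a genuine $\zeta>0$ to leave room for the monotone-family statement in the second paragraph of the lemma, and to allow the slight flexibility of letting $\mathcal{R}$ dip slightly negative off $S$ if that is convenient for matching asymptotics (it need not). For the refinement with $S' \subset S$: given $\Psi$ already constructed for $S$, I would repeat the construction with cutoff $\chi'$ supported in a neighborhood of $S'$ inside the region where $\chi\equiv 1$, and target $\mathcal{R}' := \epsilon_0'\,\chi'\,\rho^{\delta-2}$ with $\epsilon_0' \geq \epsilon_0$. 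To get $\Psi' \leq \Psi$, observe that $\Psi$ itself is a supersolution for the $S'$-problem: one checks $-\alpha_n\Delta_\lambda \Psi + R_\lambda \Psi - \mathcal{R}'\Psi^{q_n-1} = (\mathcal{R}-\mathcal{R}')\Psi^{q_n-1}$, and by arranging $\chi'\leq \chi$ and choosing the supports appropriately one can ensure $\mathcal{R}\geq \mathcal{R}'$ wherever $\mathcal{R}'\neq 0$ (on the complement $\mathcal{R}'=0$ the expression is $\mathcal{R}\Psi^{q_n-1}\geq 0$ automatically), so $\Psi$ is a supersolution. Taking a small positive constant $c$ with $c\leq \Psi$ (which exists by the maximum principle Lemma \ref{MaxwellMax} applied to $\Psi$, together with $\Psi\to 1$) as the subsolution, Lemma \ref{SubSup} then yields $\Psi'$ with $c\leq \Psi'\leq \Psi$. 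The inequalities $\epsilon'\geq\epsilon$ and $\zeta'\leq\zeta$ are then built in by the choices $\epsilon_0'\geq\epsilon_0$ and by keeping $\mathcal{R}'\geq 0$ (so $\zeta'$ can be taken equal to or smaller than $\zeta$); one only has to double-check that raising $\epsilon_0'$ is compatible with solvability, which it is, since the obstruction depends only on $Y(S'\subset\Sigma^n)>0$ and not on the size of the source — larger sources are handled by scaling the supersolution.

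The main obstacle I anticipate is the verification that the global supersolution for \eqref{deformeq} (and its $S'$-analogue) exists with the correct $W^{2,p}_\delta$ asymptotics while simultaneously dominating the subsolution — i.e.\ the interplay between the nonlinear power $\Psi^{q_n-1}$ with a non-sign-definite $R_\lambda$ on the left and the need for the solution to limit to $1$ at infinity. This is exactly the place where the hypothesis $Y(S\subset\Sigma^n)>0$ enters essentially: it is what guarantees that the relevant linear operator (or its variational functional restricted to functions concentrated near $S$) has enough positivity to trap a supersolution above a positive subsolution. I would handle this by citing the construction in \cite{DM15} for the first-layer deformation $\Psi$, and then doing the monotone iteration by hand for $\Psi'$ using $\Psi$ as the ready-made supersolution as above; the $\rho^{\delta-2}$ weight in the target curvature is chosen precisely so that $\mathcal{R},\mathcal{R}' \in W^{0,p}_{\delta-2}$ and the elliptic estimates of Lemma \ref{AELinearExistence} close, giving $\Psi-1,\Psi'-1 \in W^{2,p}_\delta$. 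A minor secondary point is ensuring the cutoffs $\chi,\chi'$ can be taken with $\chi' \leq \chi$ and nested supports given that $S'$ is closed inside $S$; this is routine since both sets are closed and one can interpose open neighborhoods.
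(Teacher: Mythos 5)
There is a genuine gap at the very first step, and it is fatal to the approach as written. You propose to solve the prescribed scalar curvature equation $-\alpha_n\Delta_\lambda\Psi + R_\lambda\Psi = \mathcal{R}\,\Psi^{q_n-1}$ with $\mathcal{R} = \epsilon_0\,\chi\,\rho^{\delta-2} \geq 0$, i.e.\ to conformally deform $\lambda$ to a metric with \emph{nonnegative} scalar curvature (positive on $S$). By the fourth bullet of Lemma \ref{Yamabe}, any metric conformally related to one with nonnegative scalar curvature is Yamabe positive; so your equation has no solution unless $\lambda \in \mathcal{Y}^+$. But the hypothesis of Lemma \ref{Deform} only asks that the closed \emph{subset} $S$ be Yamabe positive in the sense of Definition \ref{defn}, while the ambient metric may be Yamabe null or negative --- indeed this is precisely the situation the lemma is built for, since it feeds into Theorem \ref{Exist}, whose point is to allow seed metrics of nonpositive Yamabe class. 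Relatedly, the criterion from \cite{DM15} that you invoke ("$Y(S\subset\Sigma^n)>0$ is the obstruction-free condition for this problem") is the solvability condition for prescribing \emph{nonpositive} curvatures whose zero set is $S$, not for prescribing a nonnegative source; prescribing positive scalar curvature is a critical-exponent problem with real obstructions, and your anticipated sub/supersolution scheme would also fail there (a large constant is not a supersolution against $\mathcal{R}\,\Psi^{q_n-1}$ with $\mathcal{R}>0$).

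The paper's construction avoids prescribing any positive curvature. It sets $\mathcal{D}(p) = \tfrac{2}{\pi}\arctan(\mathrm{Distance}_\lambda(p,S))$, so $\mathcal{D}$ vanishes exactly on $S$ and $\mathcal{D}\leq 1$, and uses the \cite{DM15} result to find $\Theta$ with $R_{\Theta^{q_n-2}\lambda} = -\kappa_n\mathcal{D}^2\rho^{\delta-2} \leq 0$ (solvable because the zero set $S$ is Yamabe positive). It then applies the Curvature Criterion Theorem \ref{LichIff} with $f^2 = \rho^{\delta-2}$, $r=0$, obtaining a solution $\Psi$ of the Lichnerowicz equation, whence
\begin{equation*}
R_{\Psi^{q_n-2}\lambda} \;=\; -\kappa_n\,\mathcal{D}^2\rho^{\delta-2} \;+\; \rho^{\delta-2}\,\Psi^{-2q_n},
\end{equation*}
which is $\geq \epsilon\rho^{\delta-2}$ on $S$ (where $\mathcal{D}=0$, using the upper bound on $\Psi$) and $\geq -\zeta$ globally (since $\mathcal{D}\leq 1$, $\rho\geq 1$). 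So the positivity on $S$ comes as a by-product of the $f^2\Psi^{-q_n-1}$ term, not from a positively prescribed curvature. Your second-paragraph strategy --- showing the first-layer factor is a supersolution for the nested problem and combining with a scaled subsolution and uniqueness to get $c\leq\Psi'\leq\Psi$, then reading off $\epsilon'\geq\epsilon$ and $\zeta'\leq\zeta$ --- is in the same spirit as the paper's comparison argument (which uses $\mathcal{D}'^2\geq\mathcal{D}^2$ and Proposition \ref{uniqueness}), but it cannot be salvaged until the first-layer construction is replaced by one that does not require $\lambda\in\mathcal{Y}^+$.
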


\begin{proof}
The proof of this lemma depends to a large extent on results proven in \cite{DM15}. We  define a function $\mathcal D:\Sigma^n \rightarrow \R$ via 
\[\mathcal D(p):= \frac{2}{\pi}\arctan{ (\mathop{\mathrm{Distance}}_\lambda (p, S))}\leq 1,\] where the upper bound indicates a choice of branch.
It follows immediately from this definition that  $\{p\in \Sigma^n | \mathcal D(p)=0\} = S$. It then follows from the prescribed scalar curvature result Theorem 4.1 in \cite{DM15} that there exists a conformal transformation function $\Theta$ such that $R_{\Theta^{q_n -2} \lambda} = -\kappa_n \mathcal D^2\rho^{\delta-2}$.  In turn, we may now apply the Curvature Criterion Theorem 
\ref{LichIff} with $f^2 =\rho^{\delta-2}$ and $r=0$ and thereby verify that there exists a solution $\Psi$ to the Lichnerowicz equation
\begin{equation}
\label{Psieq}
0=\mathcal L(\Psi) := -\alpha_n \Delta_\lambda  \Psi + R_\lambda \Psi + \kappa_n  \mathcal D^2 \rho^{\delta-2} \Psi^{q_n -1} -\rho^{\delta-2} \Psi^{-q_n -1}.
\end{equation}
 
We claim that this function $\Psi$ satisfies the criteria stated in this Lemma. To verify this, we note that it follows from Eqn. \eqref{Psieq} that $R_{\Psi^{q_n -2} \lambda}= -\kappa_n  \mathcal D^2 \rho^{\delta -2} +\rho^{\delta-2} \Psi^{-2q_n }.$ The regularity and boundedness properties built into the definition of $\mathcal D$ show that $R_{\Psi^{q_n -2} \lambda}$ is bounded from below everywhere on $\Sigma^n$. The fact that $\mathcal D$ vanishes on $S$, together with the regularity and boundedness of $\Psi$ (a solution of \eqref{Psieq}) on the closed set $S$, show that there exists some $\epsilon>0$ such that $R_{\Psi^{q_n -2} \lambda} \geq \epsilon \rho ^{\delta-2}$ on $S$.

To prove the second statement, regarding the subset $S'$, we first define $\mathcal D'(p):= \frac{2}{\pi}\arctan{ (\mathop{\mathrm{Distance}}_\lambda (p, S'))}$,  and we see immediately that $\mathcal D'^2 \geq \mathcal D^2$. Hence, constructing first $\Theta'$ and then $\Psi'$ analogously to $\Theta$ and $\Psi$, we determine that $\Psi$ satisfies the supersolution inequality for $\Psi'$. Indeed, constructing the Lichnerowicz operator $\mathcal L'$ which corresponds to $S'$ and $\mathcal D'$ (and for which we have $\mathcal L'(\Psi')=0$), we calculate (using \eqref{Psieq}, and using the positivity of $\Psi$) 
\begin{equation}
\mathcal L'(\Psi) = \kappa_n \rho^{\delta-2}\Psi^{q_n-1} (\mathcal D'^2 - \mathcal D^2) \geq 0.
\end{equation}

This does not (directly) guarantee that $\Psi' \leq \Psi$. However, since we readily verify that for any positive value of $t\leq1$,  the quantity $t\Psi'$ satisfies the subsolution inequality for $\Psi'$, and since the boundedness of $\Psi$ and $\Psi'$ guarantee that there exist some positive $t_0$ such that $t_0 \Psi'\leq \Psi$, we see that indeed $t_0 \Psi'$ and $\Psi$ form a sub and supersolution pair for $\Psi'$. It then follows from Lemma \ref{SubSup} that $\Psi' \leq \Psi$. We note that in completing this argument, we use the Lichnerowicz solution uniqueness result  Proposition \ref{uniqueness}. 

To verify the positive lower bound for $\Psi'$ (also part of the second statement), we may use a variant of the argument implemented to prove Lemma \ref{Monotonicity}, since, again, $1 \geq \mathcal D'^2 \geq \mathcal D^2$. To verify the inequality for $\epsilon'$, we recall that $R_{\Psi'^{q_n -2} \lambda}= -\kappa_n  \mathcal D'^2 \rho^{\delta -2} +\rho^{\delta-2} \Psi'^{-2q_n }$, and we apply the bounds on $\Psi'$. For the inequality for $\zeta'$, we instead rely on the condition that $\mathcal D'^2 \leq 1$ and that $\rho\geq 1$.
\end{proof}

Our main result in this section is the following theorem. This is a near-CMC result, using an integral inequality on the derivative of $\tau$ as in, for example, \cite{HNT09}. 

\begin{thm} [Existence Theorem]
\label{Exist}
Suppose $(\Sigma^n; \lambda, N, \sigma, \tau, r, j)$ is a set of asymptotically Euclidean seed data which satisfies the regularity and admissibility conditions as stated in Corollary \ref{Admiss} and also satisfies $p>n$. Suppose in addition that there exists a positive constant $\alpha$ such that $S_\alpha:=\{p\in \Sigma^n | \kappa_n \tau^2(p	) \leq \alpha\}\subseteq S_0$ for some Yamabe positive set $S_0$. Then, there exists $M=M(\lambda,N, S_0)$ such that if $\alpha - M\|d\tau\|_{L^p_{\delta-2}}^2\geq 0$, and if $\sigma$, $r$ and $j$ are small enough (relative to $\lambda$, $N$, $S_0$, $\alpha$ and $\|d\tau\|_{L^p_{\delta-2}}^{-1}$) then there exist solutions to the conformal constraint equations of appropriate regularity.
\end{thm}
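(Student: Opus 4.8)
The plan is to construct global sub- and supersolutions for the coupled system \eqref{origLich}--\eqref{origVect} and then invoke the Global Sub and Supersolution Theorem \ref{GlobalSubAndSup}. By the conformal covariance of the CTS-H conformal constraint equations, I would first apply the Curvature Deformation Lemma \ref{Deform} to the Yamabe positive set $S_0$: this produces a conformal factor $\Psi$ with $\Psi-1\in W^{2,p}_\delta$ such that, after replacing $\lambda$ by $\Psi^{q_n-2}\lambda$ (and transforming $N,\sigma,r,j$ accordingly), the new scalar curvature satisfies $R_\lambda \geq \epsilon \rho^{\delta-2}$ on $S_0$ and $R_\lambda \geq -\zeta$ everywhere. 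From now on I work with this conformally transformed seed data, so that on the region where $\tau^2$ is small (which lies inside $S_0$) the $R_\lambda$ term is \emph{positive} and can play the balancing role that $\tau^2$ usually plays; on the complement of $S_0$ we have $\kappa_n\tau^2 > \alpha$ and the $\tau$ term does the balancing, while $R_\lambda$ is at worst bounded below by $-\zeta$.

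The constant supersolution $\Phi_+ \equiv \beta$ for large $\beta$ is the easy half: plugging $\varphi=\beta$ into the right-hand side of \eqref{origLich}, the dominant positive term for large $\beta$ is $\kappa_n\tau^2\beta^{q_n-1}$ where $\tau^2>0$ and $R_\lambda\beta$ where $R_\lambda>0$; I would need the smallness of $\sigma, r, j$ (and hence of $f^2=|\sigma+\tfrac{1}{2N}L_\lambda W|^2$, using Lemma \ref{VectLap} to bound $L_\lambda W$ in terms of $\kappa_n\varphi^{q_n}d\tau+j$ and hence in terms of $\|d\tau\|_{L^p_{\delta-2}}$ and $\beta$) to ensure that on the bad region $S_0\setminus(\Sigma\setminus S_\alpha)$ — i.e. where $\tau^2$ is small but $R_\lambda>0$ — the positive curvature term dominates the negative $f^2$ and $r$ terms. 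The delicate point is that $f^2$ itself depends on $W$, which depends on $\varphi$ through \eqref{origVect}; this is why one must verify the global (rather than ordinary) supersolution inequality, controlling $\|L_\lambda W\|$ uniformly for all $\varphi\in[\Phi_-,\Phi_+]$ via the near-CMC smallness hypothesis $\alpha - M\|d\tau\|_{L^p_{\delta-2}}^2 \geq 0$.

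For the subsolution I would take $\Phi_- = c\, u$ where $u$ solves a linear problem of the type in Lemma \ref{AELinearExistence} — e.g. $-\alpha_n\Delta_\lambda u + \kappa_n\tau^2 u = \mu(\rho^{\delta-2})$ with $u-1\in W^{k,p}_\delta$ and $u\geq c_0>0$ — and $c>0$ small. Plugging $\Phi_-$ into \eqref{origLich}, the negative powers $\Phi_-^{-q_n-1}$ and $\Phi_-^{-q_n/2}$ blow up as $c\to0$, so with $c$ small enough the positive terms $-\alpha_n\Delta\Phi_- + R_\lambda\Phi_- + \kappa_n\tau^2\Phi_-^{q_n-1} \leq (f^2+ \text{err})\Phi_-^{-q_n-1} + r\Phi_-^{-q_n/2}$, using $R_\lambda\geq -\zeta$ globally to absorb the (bounded) negative curvature contribution into the large negative powers. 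One must also record $\Phi_-\leq\Phi_+$ and that $[\inf\Phi_-,\sup\Phi_+]$ lies in the interval of regularity $(0,\infty)$, and check the asymptotic matching $\Phi_-\leq\xi\leq\Phi_+$ far out on each end (here $\xi$ is a common constant, consistent with $\Phi_-=cu\to c$, $\Phi_+=\beta$, so take $\xi$ between them).

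The main obstacle is the circular dependence of $f^2$ on $W$ on $\varphi$: one cannot bound $f^2$ independently of the solution, so the supersolution inequality must be verified in the \emph{global} sense of Theorem \ref{GlobalSubAndSup}, which forces the near-CMC condition. Concretely, for every $\varphi$ with $\Phi_-\leq\varphi\leq\Phi_+$ one solves \eqref{origVect} to get $W=W(\varphi)$, estimates $\|L_\lambda W\|_{C^{0,\alpha}_{\delta-1}}\lesssim \|\kappa_n\varphi^{q_n}d\tau+j\|_{C^{0,\alpha}_{\delta-2}} \lesssim \beta^{q_n}\|d\tau\| + \|j\|$ via Lemma \ref{VectLap}, and then requires $|\sigma+\tfrac{1}{2N}L_\lambda W|^2 \leq$ (something beaten by $\alpha\,\Phi_+^{q_n-1}$ and $\epsilon\rho^{\delta-2}\Phi_+$) pointwise. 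Tracking how the admissible size of $\sigma,r,j$ must shrink as $\beta$ (equivalently $\|d\tau\|^{-1}$) varies, and pinning down $M=M(\lambda,N,S_0)$ from the Deformation Lemma constants $\epsilon,\zeta$ together with the vector-Laplacian constant $c$, is the technical heart of the argument; once the global sub/supersolution pair is in place, Theorem \ref{GlobalSubAndSup} delivers a solution $(\varphi,W)$ with $\varphi-\xi, W\in W^{k,p}_\delta$, and a final bootstrap gives the stated regularity.
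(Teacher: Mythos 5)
Your overall architecture is the paper's: conformally transform by the factor $\Psi$ from Lemma \ref{Deform} so that $R_{\hat\lambda}\geq\epsilon\rho^{\delta-2}$ on the set containing the zeros of $\tau$, take a \emph{constant} global supersolution whose validity on $S_\alpha$ comes from the curvature term beating the $d\tau$-contribution of $\tfrac{1}{2\hat N}L_{\hat\lambda}W$ (bounded via Lemma \ref{VectLap}) and on $S_\alpha^c$ from $\kappa_n\tau^2>\alpha$, with the near-CMC condition and smallness of $\sigma,r,j$ supplying $M$; then invoke Theorem \ref{GlobalSubAndSup}. One imprecision on that side: the constant supersolution cannot simply be taken ``large.'' Since $W$ is re-solved for every $\varphi\leq\beta$, the term $f^2\varphi^{-q_n-1}$ contributes, at $\varphi=\beta$, roughly $\|d\tau\|^2_{L^p_{\delta-2}}\beta^{q_n-1}\rho^{2\delta-2}$, which grows in $\beta$ at the same rate as $\kappa_n\tau^2\beta^{q_n-1}$ and faster than $R_{\hat\lambda}\beta$; the paper instead \emph{tunes} the constant, $\eta^{2-q_n}=2c\|d\tau\|_{L^p_{\delta-2}}/\epsilon$, and this tuning is exactly where $M=M(\lambda,N,S_0)$ and the dependence of the smallness of $\sigma,r,j$ on $\|d\tau\|^{-1}$ come from. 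This is repairable and in the spirit of what you sketch.

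The genuine gap is the subsolution. You take $\Phi_-=c\,u$ with $-\alpha_n\Delta u+\kappa_n\tau^2u=\mu\rho^{\delta-2}$ and argue that the negative powers $\Phi_-^{-q_n-1},\Phi_-^{-q_n/2}$ blow up as $c\to0$; but those powers carry the coefficients $f^2$ and $\hat r$, which under the hypotheses are \emph{small} and may vanish (e.g.\ $\sigma\equiv0$, $r\equiv0$; nothing prevents $\hat\sigma+\tfrac{1}{2\hat N}L_{\hat\lambda}W$ from vanishing at points, and a global subsolution must hold for every $W=W(\varphi)$). At a zero of $\tau$ where $f^2$ and $\hat r$ are negligible---precisely the configuration this theorem is designed to handle, and where after the deformation $R_{\hat\lambda}\geq\epsilon\rho^{\delta-2}>0$---your candidate gives $-\alpha_n\Delta_{\hat\lambda}(cu)+R_{\hat\lambda}\,cu+\kappa_n\tau^2(cu)^{q_n-1}-f^2(cu)^{-q_n-1}-\hat r\,(cu)^{-q_n/2}=c\mu\rho^{\delta-2}+R_{\hat\lambda}\,cu>0$ for every $c>0$, so the subsolution inequality fails and shrinking $c$ only rescales it. This is exactly where the admissibility hypothesis, which your proposal never uses, is essential: the paper takes $\Phi_-=\xi\,\psi/\Psi$, where $\psi$ is the conformal factor with $R_{\psi^{q_n-2}\lambda}=-\kappa_n\tau^2$ guaranteed by admissibility. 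Since $-\alpha_n\Delta_{\hat\lambda}(\psi/\Psi)+R_{\hat\lambda}(\psi/\Psi)+\kappa_n\tau^2(\psi/\Psi)^{q_n-1}=0$ identically, scaling by $\xi\in(0,1]$ makes the $\tau^2$ term help (because $\xi^{q_n-1}\leq\xi$) while the $-f^2$ and $-\hat r$ terms only help, so the inequality holds pointwise for all admissible $W$ with no positivity of $f^2$ or $r$ required. Replacing your linear-equation subsolution by this one, and tuning the constant supersolution as above, recovers the paper's argument.
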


The existence of such an $\alpha$ is equivalent to the condition that the zero set of the function $\tau$ is sufficiently small. We include the set $S_0$ in the statement of this theorem  to emphasize the fact that  that the dependence of the constant $M$ on $\tau$ is very weak. In particular, $M$ depends only on a (Yamabe positive) bounding set $S_0$  containing the set $S_\alpha$. For example,  in considering a family of mean curvature functions $\tau_\ell$, as long as the corresponding sets $S_{\alpha_\ell}$ are nested, $M$ can be chosen uniformly. We use this fact in showing that there are seed data sets which satisfy the  hypothesis of this theorem. 

\begin{proof} 
As hypothesized, there exists a positive constant $\alpha$ such that the set $S_\alpha$ is Yamabe positive. It then follows from Lemma \ref{Deform} that we may choose a function 
$\Psi$ with $\Psi -1 \in W^{2,p}_\delta$ such that the scalar curvature $R_{\Psi^{q_n -2} \lambda} \geq \epsilon \rho ^{\delta-2}$ on $S_\alpha$, for some $\epsilon>0$.  We note that $R_{\Psi^{q_n -2} \lambda} \in L^p_{\delta -2}$ and that Lemma \ref{Deform} proves that the lower bound $-\zeta$ for $R_{\Psi^{q_n -2} \lambda}$, the value $\epsilon$, as well as the upper and lower bounds on $\Psi$, depend only on $\lambda$, $N$, and $S_0$ (or, on $S_\alpha$ if we take $S_0 = S_\alpha$). In particular, we use the conformal covariance of the CTS-H method (as explained in the introduction) and work with conformally transformed quantities, denoted by hats. We note that as a consequence of the upper and lower bounds on $\Psi$, 
bounds on hatted quantities are easily converted to bounds on the original seed data.

It follows from Theorem \ref{GlobalSubAndSup} that to prove that the conformal constraint equations admit a solution, it is sufficient to find a global sub and supersolution pair. We claim first that if $|\hat \sigma|$ and $\hat r$ (and consequently $\hat j$) are sufficiently small, then there exists a constant global supersolution $\eta$. To show this, we substitute $\eta$ into the inequality (involving the terms in \eqref{origLich}) which must be satisfied if this is the case. Doing a bit of rearranging, we see that $\eta$ is a global supersolution so long as the inequality 
\begin{equation} 
\label{superineq}
R_{\hat \lambda} \eta^{2-q_n} + \kappa_n  \tau^2- \left| \hat \sigma +\frac{1}{2\hat N}L_{\hat \lambda} W\right|^2 \eta^{-2q_n} - \hat r \eta^{\frac{2-3 q_n }{2}} \geq 0
\end{equation}
holds. We now work with the term involving $L_{\hat \lambda} W$, seeking to bound it from below for all allowable values of $W$. The standard quadratic inequality gives us $-| \hat \sigma +\frac{1}{2\hat N} L_{\hat \lambda} W|^2 \geq -2 |\hat \sigma|^2 -\frac{1}{\hat N} |L_{\hat \lambda} W|^2.$ Elliptic estimates based on Eqn.\ \eqref{origVect}, with $\varphi \leq \eta$ (the purported global supersolution) give us 
\begin{equation}
\|L_{\hat\lambda} W\|_{C^0_{\delta-1}} \leq c \|W\|_{W^{2,p}_{\delta}} \leq c\|\varphi ^{q_n} d\tau +\hat j\|_{L^p_{\delta-2}} \leq c \eta^{q_n} \|d\tau\|_{L^p_{\delta-2}} + c\|\hat j\|_{L^p_{\delta-2}},
\end{equation}
where $c$ is a constant that depends on the metric $\lambda$ and the lapse function $N$ only.
Then, since (following from the definition of the weighted norms) we have the pointwise estimate 
$|L_{\hat \lambda} W| \leq \| L_{\hat\lambda} W\|_{C^0_{\delta -1} }\rho ^{\delta -1}$, the needed inequality takes the form
\begin{equation}\label{Need}
R_{\hat\lambda} \eta^{2-{q_n}} + \kappa_n  \tau^2 -  (2|\hat\sigma|^2+c\|\hat j\|_{L^p_{\delta-2}})  \eta^{-2 {q_n}} - \hat r \eta^{\frac{2-3 {q_n}}{2}} \\
-c \|d\tau\|_{L^p_{\delta-2}} \rho^{2 \delta -2} \geq 0.
\end{equation}

We verify the inequality \eqref{Need} separately in the region $S_\alpha$, and in its complement. In $S_\alpha$, we have the scalar curvature bound $R_{\hat\lambda} \geq \epsilon \rho ^{\delta-2}$, for some fixed value of $\epsilon$. Hence, in $S_\alpha$, we may use the scalar curvature term to dominate the negative terms in \eqref{Need}. Specifically, if we choose $\eta$ so that 
\begin{equation}
\label{eta}
\eta^{2-{q_n}} = 2 c \|d\tau\|_{L^p_{\delta-2}} /\epsilon,
\end{equation}
then we verify that half of the scalar curvature term dominates the $d\tau$ term:
\begin{equation}
\label{halfS}
\frac{1}{2}R_{\hat \lambda} \eta^{2-{q_n}} - c \|d\tau\|_{L^p_{\delta-2}} \rho^{2 \delta -2} \geq \frac{1}{2} \epsilon \rho ^{\delta-2} \eta^{2-{q_n}} - c \|d\tau\|_{L^p_{\delta-2}} \rho^{2 \delta -2} \geq 0.
\end{equation}
We note that here, the choice of the radial function so that $\rho \geq 1$ is crucial; as well, we recall that $\delta$ is negative, by assumption. 

To take care of the rest of the negative terms in \eqref{Need} (still working on $S_\alpha$), we impose smallness conditions on $|\hat\sigma|$, on $|\hat j|$ and on $\hat r$. Specifically, with $\eta$ now fixed, we require $|\hat \sigma|$, $|\hat j|$ and $\hat r$ to be small enough so that 
\begin{equation}
\label{otherhalfS}
\frac{1}{2} \epsilon \rho ^{\delta-2} \eta ^{2-{q_n}}- (2|\hat \sigma|^2+c\|\hat j\|_{L^p_{\delta-2}}) \eta^{-2 {q_n}} - \hat r \eta^{\frac{2-3 {q_n}}{2}} \geq 0.
\end{equation}
Clearly these restrictions on the choice of the seed data can always be made.

We now determine which conditions on the seed data must be imposed in order to verify  inequality \eqref{Need} in the region $S^c_\alpha$ which is the complement of $S_\alpha$. To carry out this determination, we note that the following estimates hold within $S^c_\alpha$: i) the definition of $S^c_\alpha$ implies that $\kappa_n \tau^2>\alpha$; ii) Lemma \ref{Deform} guarantees that $R_{\hat \lambda} $ is bounded below by some (generally negative) constant, which we label $-\zeta$; iii) since, by definition, $\rho\geq 1$, the radial quantity $\rho^{2 \delta -2}\leq 1$. Combining these estimates with the specification  \eqref{eta} for the constant $\eta$, we can express the needed inequality \eqref{Need} (for the region $S^c_\alpha$) in the form
\begin{equation} 
\label{ScNeed}
-\zeta (2 c \|d\tau\|_{L^p_{\delta-2}} /\epsilon) +\alpha - (2|\hat\sigma|^2+c\|\hat j\|_{L^p_{\delta-2}})  \eta^{-2 {q_n}} - \hat r \eta^{\frac{2-3 {q_n}}{2}} -c \|d\tau\|_{L^p_{\delta-2}} \geq 0, 
\end{equation}
which can be rearranged into
\begin{equation}
\label{ScNeed}
\alpha - (2|\hat \sigma|^2+c\|\hat j\|_{L^p_{\delta-2}}) \eta^{-2 {q_n}} - \hat r \eta^{\frac{2-3 {q_n}}{2}} -\hat c \|d\tau\|_{L^p_{\delta-2}} \geq 0,
\end{equation} 
where the constant $\hat c := c( \frac {2 \zeta}{\epsilon} +1)$ depends only on the metric, $N$ and $S_0$. Splitting this inequality into a pair, we see that for a specified AE geometry $(\Sigma^n; \lambda)$  and a specified choice of $\alpha$ (recall that the constant $\alpha$ must be chosen so that $S_\alpha$ is Yamabe positive) it is sufficient to choose $\tau$ so that 
\begin{equation}
\label{M}
\|d\tau\|_{L^p_{\delta-2}} \leq \frac{\alpha}{2 \hat c },
\end{equation} 
and then choose $\hat \sigma$, $\hat r$ and $\hat j$ so that 
\begin{equation}
\label{small} 
(2|\hat\sigma|^2+c\|\hat j\|_{L^p_{\delta-2}}) \eta^{-2 {q_n}} + \hat r \eta^{\frac{2-3 {q_n}}{2}} \leq \frac{\alpha}{2}.
\end{equation}
We note that Eqn.\ \eqref{M} determines the constant  $M$  which appears in the hypothesis of this theorem.

We now have conditions on the seed data which guarantee that $\eta$ serves as a global supersolution for the system \eqref{origLich}-\eqref{origVect}. For a global subsolution we choose $\xi \psi/\Psi $ where $\psi$ is the conformal factor for which $R_{\psi^{{q_n} -2} \lambda} = - \kappa_n  \tau^2$ (the existence of such a function $\psi$ is guaranteed by the hypothesis that the seed data be admissible), $\Psi$ is as before, and where $\xi$ is a constant between zero and one which is chosen to ensure the sub/supersolution inequality $\xi \psi/\Psi \leq \eta$. (The division by $\Psi$ is to account for the conformal transformation that we already used.) One readily verifies that the appropriate differential inequality is satisfied so that indeed $\xi \psi/\Psi$ is a global subsolution.  We have thus determined that for any AE seed data satisfying the conditions \eqref{M}, \eqref{otherhalfS} and \eqref{small}, the conformal constraint equations admit a solution. 

\end{proof}

We wish to emphasize that there do exist seed data sets satisfying the hypothesis of Theorem \ref{Exist}.  Indeed, one may construct such data as follows: Choosing any asymptotically Euclidean geometry $(\Sigma^n, \lambda, N)$ of sufficient regularity, one considers smooth functions $\tau$ which are unity inside $B_{\rho_0}$, vanish outside $B_{2\rho_0}$, and have derivatives as small as consistently possible in the annulus $B_{2\rho_0} \setminus B_{\rho_0}$. One readily checks (see \cite{DM15}) that for large enough $\rho_0$, Lemma \ref{Deform} holds on $S:= \Sigma^n \setminus B_{\rho_0}$. One also readily checks that $\|d\tau\|_{L^p_{\delta-2}}$ can be made arbitrarily small by choosing large enough $\rho_0$ (as long as $\delta>-1$). Since the zero sets of functions $\tau$ constructed in this way are strictly decreasing as $\rho_0$ increases, it follows from the estimates stated in the
second part of Lemma \ref{Deform} that the constant $M$ is uniformly bounded, and so the condition \eqref{M} is satisfied for sufficiently large $\rho_0$. Conditions \eqref{otherhalfS} and \eqref{small} are met by choosing small $|\sigma|$, $r$ and $|j|$ directly.

It should be mentioned that there is no evidence of uniqueness for this result, except that it is near-CMC in some sense. The sub and supersolution theorem only show existence, and never uniqueness. Indeed, we expect that for some seed data, there are multiple solutions to the conformal constraint equations, as observed, for instance, in \cite{Nguyen15}. 

\section{Acknowledgments}
This work was partially supported by the National Science Foundation through grant DMS-63431. We thank the Mathematical Sciences Research Institute and the Department of Physics at the University of Maryland, where some of this work was carried out.

\bibliographystyle{alpha}
\bibliography{bibliography}
\end{document}